\newcommand{\ignore}[1]{}
\theoremstyle{plain}
\newtheorem{theorem}{Theorem}[section]
\newtheorem{lemma}[theorem]{Lemma}
\theoremstyle{definition}
\newtheorem{definition}[theorem]{Definition}
\theoremstyle{remark}
\newtheorem{remark}[theorem]{Remark}
\newtheorem{claim}[theorem]{Claim}
\newtheorem{problem}[theorem]{Problem}
\newenvironment{proofof}[1]{\begin{proof}[{\textit{Proof of #1}}]}{\end{proof}}
\newcommand{\edith}[1]{{\color{purple} \small Edith says: #1}}
\def\E{\textsf{E}}
\def\Var{\textsf{Var}}
\def\Pr{\textsf{Pr}}
\def\Exp{\textsf{Exp}}
\def\Geom{\textsf{Geom}}
\def\Bern{\textsf{Bern}}
\def\Binom{\textsf{Binom}}
\title{Unmasking Vulnerabilities: \\ Cardinality Sketches under Adaptive Inputs}
\author[1]{Sara Ahmadian}
\author[1,2]{Edith Cohen}
\affil[1]{Google Research, United States}
\affil[2]{Department of Computer Science, Tel Aviv University, Israel}
\affil[ ]{\texttt{sahmadian@google.com}}
\affil[ ]{\texttt{edith@cohenwang.com}}
\date{}
\begin{document}
\maketitle
\ignore{
 \twocolumn[
\icmltitle{Unmasking Vulnerabilities: Cardinality Sketches under Adaptive Inputs}



\icmlsetsymbol{equal}{*}

\begin{icmlauthorlist}
\icmlauthor{Sara Ahmadian}{equal,google}
\icmlauthor{Edith Cohen}{equal,google,tau}
\end{icmlauthorlist}

\icmlaffiliation{tau}{Department of Computer Science, Tel Aviv University, Israel}
\icmlaffiliation{google}{Google Research, United States}

\icmlcorrespondingauthor{Sara Ahmadian}{sahmadian@google.com}
\icmlcorrespondingauthor{Edith Cohen}{edith@cohenwang.com}

\icmlkeywords{Machine Learning, ICML}

\vskip 0.3in
]
}







%

\begin{abstract}
 Cardinality sketches are popular data structures that enhance the efficiency of working with large data sets. The sketches  are randomized representations of sets that are only of logarithmic size but can support set merges and approximate cardinality (i.e., distinct count) queries. When queries are not adaptive, that is, they do not depend on preceding query responses, the design provides strong guarantees of correctly answering a number of queries exponential in the sketch size $k$. 
 In this work, we investigate the performance of cardinality sketches in adaptive settings and unveil inherent vulnerabilities. We design an attack against the ``standard'' estimators that constructs an adversarial input by post-processing responses to a set of simple non-adaptive queries of size linear in the sketch size $k$. Empirically,  our attack used only $4k$ queries with the widely used HyperLogLog (HLL++)~\citep{hyperloglog:2007,hyperloglogpractice:EDBT2013} sketch. The simple attack technique suggests it can be effective with post-processed natural workloads. Finally and importantly, we demonstrate that the vulnerability is inherent as \emph{any} estimator applied to known sketch structures can be attacked using a number of queries that is quadratic in $k$, matching a generic upper bound.
\end{abstract}
\section{Introduction}

Composable sketches for cardinality estimation are data structures that are commonly used in practice~\citep{datasketches,bigquerydocs} and had been studied extensively~\citep{FlajoletMartin85,flajolet2007hyperloglog,ECohen6f,BJKST:random02,KNW:PODS2010,NelsonNW13,ECohenADS:TKDE2015,PettieW:STOC2021}. The sketch of a set is a compact representation that supports merge (set union) operations, adding elements, and retrieval of approximate cardinality. 
The sketch size is only logarithmic or double logarithmic in the cardinality of queries, which allows for a significant efficiency boost over linear size data structures such as Bloom filters.

Formally, for a universe $\mathcal{U}$ of keys, and randomness $\rho$, 
a sketch map $U\mapsto S_\rho(U)$ is a mapping from sets of keys 
$U\in 2^\mathcal{U}$ to their sketch $S_\rho(U)$. Sketch maps are designed so that 
for each $U$ we can recover with high probability (over $\rho$) an estimate of $|U|$ by applying an estimator $\mathcal{M}$ to $S_\rho(U)$. A common guarantee 
is a bound on the Normalized Root Mean Squared Error (NRMSE) so that for accuracy parameter $\epsilon$, 
\begin{equation} \label{nrmse:eq}
\forall U,\ \E_\rho\left[\left( \frac{\mathcal{M}(S_\rho(U)) - |U|}{|U|}\right)^2 \right] \leq \epsilon^2\ .
\end{equation}
The maps $S_\rho$ are designed to be \emph{composable}:  For a set $U$ and key $u$, the sketch $S_\rho(U\cup\{u\})$ can be computed from $S_\rho(U)$ and $u$. For two sets $U$, $V$, the sketch $S_\rho(U\cup V)$ can be computed from their respective sketches $S_\rho(U)$, $S_\rho(V)$. Composability is a crucial property that makes the sketch representations useful for streaming, distributed, and parallel applications. 
Importantly, the use of the same internal randomness $\rho$ across all queries is necessary for composability and therefore in typical use cases  it is fixed across a system.

The basic technique in the design of cardinality (distinct count) sketches is to randomly prioritize the keys in the universe $\mathcal{U}$ through the use of random hash functions (specified by $\rho$). The sketch of a set $U$ keeps the lowest priorities of keys that are in the set $U$.  This provides information on the cardinality $|U|$, since a larger cardinality corresponds to the presence of lower priorities keys in $U$.
This technique was introduced by~\citet{FlajoletMartin85} for counting distinct elements in streaming and as composable sketches of sets by~\citet{ECohen6f}. The core idea of sampling keys based on a random order emerged in reservoir sampling \citep{Knuth2}, and in weighted sampling ~\citep{Rosen1997a}. Cardinality sketches are also Locality Sensitive Hashing (LSH) maps~\citep{IndykMotwani:stoc98} with respect to set differences.

The specific designs of cardinality sketches vary and include \emph{MinHash sketches} (randomly map keys to priorities) or \emph{domain sampling} (randomly map keys to sampling rates).
With these methods, 
the sketch size dependence on the maximum query size $|U|\leq n$ is $\log n$ or $\log\log n$. The sketch size (number of registers) needed for the NRMSE guarantee of Equation~\eqref{nrmse:eq} is $k=O(\epsilon^{-2})$. 
The sketch size needed for the following $(\epsilon,\delta)$ guarantee (confidence $1-\delta$ of relative error of $\epsilon$)  
\begin{equation} \label{accconf:eq}
\forall U,\ \Pr_\rho\left[\left| \frac{\mathcal{M}(S_\rho(U)) - |U|}{|U|}\right| > \epsilon \right] \leq \delta\ 
\end{equation}
is $k= O(\epsilon^{-2}\log(1/\delta))$.  
This guarantee means that for any $U$, almost any sampled $\rho$ works well.

We model the use of the sketching map $S_\rho$ as an interaction between a \emph{source}, that issues queries $U_i \subset \mathcal{U}$, and a \emph{query response (QR)} algorithm, that receives the sketch 
$S_\rho(U_i)$ (but not $U_i$) applies an \emph{estimator} $\mathcal{M}$ and returns the estimate
$\mathcal{M}(S_\rho(U_i))$
on the cardinality $|U_i|$.

When randomized data structures or algorithms are invoked interactively, it is important to make a distinction between non-adaptive queries, that do not depend on $\rho$, and adaptive queries. 
In non-adaptive settings we can treat queries as fixed in advance. In this case, we can apply a union bound with the guarantee of Equation~\ref{accconf:eq} and obtain that the probability that the responses for all $r$ queries are within a relative error of  $\epsilon$, is at least $1-r\delta$. Therefore, the sketch size needed to provide an $(\epsilon,\delta)$-guarantee on this $\ell_\infty$ error is $k=O(\epsilon^{-2}\log(r/\delta)$. In particular, the query response algorithm can be correct on a number of non-adaptive queries that is exponential in the sketch size until a set is encountered for which the estimate is off. 

Many settings, however, such as control loops, optimization processes, or malicious behavior, give rise to adaptive inputs. This can happen inadvertently when a platform such as~\citet{datasketches} or SQL~\citep{bigquerydocs} is used. 
In such cases, information on the randomness $\rho$ may leak from query responses, and the union bound argument does not hold.
An important question that arises is thus to understand the actual vulnerability of our specific algorithms in such settings.  Are they practically robust? How efficiently can they be attacked?  What can be the consequences of finding such an adversarial input?

Randomized data structures designed for non-adaptive queries can be applied in generic ways with adaptive queries.
However, the guarantees provided by the resulting (robust) algorithms tend to be significantly weaker than those of their non-robust counterparts. 
The straightforward approach is to maintain multiple copies of the sketch maps (with independent randomness) and discard a copy after it is used once to respond to a query. This results in a linear dependence of the number of queries $r$ in the size of the data structure. \citet{HassidimKMMS20} proposed the robustness wrapper method that allows for $r^2$ adaptive queries using $\tilde{O}(r)$ sketch maps. The method uses differential privacy to protect the randomness and the analysis uses generalization~\citep{DworkFHPRR:STOC2015,BassilyNSSSU:sicomp2021} and advanced composition~\citep{DMNS06}.  The quadratic relation is known to be tight in the worst-case for adaptive statistical queries with an attack~\citep{HardtUllman:FOCS2014,SteinkeUllman:COLT2015} designed using Fingerprinting Codes~\citep{BonehShaw_fingerprinting:1998}. But these attacks do not preclude a tailored design with a better utility guarantee for cardinality sketches and also do not apply with ``natural'' workloads.

\subsection*{Contributions and Overview}

We consider the known sublinear composable sketch structures for cardinality estimation, which we review in Section~\ref{prelim:sec}. Our primary contribution is designing attacks that construct a set $U$ that is \emph{adversarial} for the randomness $\rho$. We make this precise in the sequel, but for now, an adversarial set $U$ results in cardinality estimates that are off.

\begin{itemize}
\item
We consider query response algorithms that use the ``standard'' cardinality estimators. These estimators optimally use the information in the sketch and report a value that is a function of a sufficient statistic of the cardinality. 
In Section~\ref{standardattack:sec} we present an attack on these estimators. The product of the attack is an adversarial set, one for which the sketch $S_\rho(U)$ is grossly out of distribution. The attack uses linearly many queries $O(k)$ in the sketch size and importantly, issues all queries in a single batch. The only adaptive component is the post processing of the query responses. This single-batch 
attack suggests that it is possible to construct an adversarial input by simply observing and post-processing a normal and non-adaptive workload of the system. The linear size of the attack matches the straightforward upper bound of using disjoint components of the sketch for different queries.

\item We conduct an empirical evaluation of our proposed attack on the HyperLogLog (HLL) sketch~\citep{DurandF:ESA03,flajolet2007hyperloglog} with the HLL++ estimator~\citep{hyperloglogpractice:EDBT2013}. This is the most widely utilized sketch for cardinality estimation in practice.  The results reported in Section~\ref{experiments:sec} show that even with a single-batch attack using $4k$ queries, we can consistently construct adversarial inputs on which the estimator  substantially overestimates or underestimates the cardinality by 40\%.

\item
In Section~\ref{setupattack:sec} and Section~\ref{singlebatch:sec}, we present an attack that broadly applies against \emph{any} correct query response algorithm. By that, we establish inherent vulnerability of the sketch structures themselves.
Our attack uses $\tilde{O}(k^2)$ \emph{adaptive} queries. We show that multiple batches are necessary against strategic query response algorithms.  
This quadratic attack size matches the generic quadratic upper bound construction of~\citet{HassidimKMMS20}. 
The product of our attack is a small \emph{mask} set $M$ that
can poison larger sets $U$ in the sense that 
$S(M\cup U)\approx S(M)$, making any estimator ineffective.
The attack applies even when the query response is for the  more specialized \emph{soft threshold} problem: Determine if the cardinality is below or above a range of the form $[A,2A]$. Moreover, it applies even when the response is tailored to the attack algorithm and its internal state including the distribution  from which the query sets are selected at each step. Note that this strengthening of the query response and simplification of the task only makes the query response algorithm harder to attack. 
\end{itemize}

Our attacks have the following structure: We fix a ground set $N$ of keys and issue queries that are subsets $U_i\subset N$. We maintain scores to keys in $N$ that are adjusted for the keys in $U_i$ based on the response. The design has the property that scores are correlated with the priorities of keys and the score is higher when the cardinality is underestimated. The  adversarial set is then identified as a prefix or a suffix of keys ordered by their score.

The vulnerabilities we exposed may have practical significance in multiple scenarios: In a non-malicious setting, an adaptive algorithm or an optimization process that is applied in sketch space can select keys that tend to be in overestimated (or underestimates) sets, essentially emulating an attack and inadvertently selecting a biased set on which the estimate is off. In malicious settings, the construction of an adversarial input set $U$ can be an end goal. For example, a system that collects statistics on network traffic can be tricked to report that  traffic is much larger or much smaller than it actually is.  A malicious player can poison the dataset by injecting a small adversarial set $M$ to the data $U$, for example, by issuing respective search queries to a system that sketches sets of search queries. The sketch $S_\rho(M\cup U)$ then masks $S_\rho(U)$, making it impossible to recover an estimate of the true cardinality of $U$. 
Finally, cardinality sketches have weighted extensions (max-distinct statistics) and are building blocks of sketches designed for a large class of concave sublinear frequency statistics, that include cap statistics and frequency moments with $p\leq 1$~\citep{CapSampling,CohenGeri:NeurIPS2019,JayaramWoodruff:TALG2023}, and thus these vulnerabilities apply to these extensions.

\section{Related Work}

There are prolific lines of research on the effect of adaptive inputs that span multiple areas including dynamic graph algorithms~\citep{ShiloachEven:JACM1981,AhnGM:SODA2012,gawrychowskiMW:ICALP2020,GutenbergPW:SODA2020,Wajc:STOC2020, BKMNSS22},  sketching and streaming algorithms~\citep{MironovNS:STOC2008,HardtW:STOC2013,BenEliezerJWY21,HassidimKMMS20,WoodruffZ21,AttiasCSS21,BEO21,DBLP:conf/icml/CohenLNSSS22,TrickingHashingTrick:arxiv2022}, adaptive data analysis~\citep{Freedman:1983,Ioannidis:2005,FreedmanParadox:2009,HardtUllman:FOCS2014,DworkFHPRR15} and machine learning~\citep{szegedy2013intriguing,goodfellow2014explaining,athalye2018synthesizing,papernot2017practical}.

\citet{DBLP:journals/icl/ReviriegoT20} and~\citet{cryptoeprint:2021/1139} proposed attacks on the HLL sketch with standards estimators.  The proposed attacks were in a streaming setting and utilized many dependent queries in order to detect keys whose  insertion results in updates to the cardinality estimate. Our attacks are more general: We construct single-batch attacks with standard estimators and also construct attacks on these cardinality sketches that apply with any estimator.
The question of robustness of cardinality sketches to adaptive inputs is related but different than the well studied question of whether they are differentialy private. 
Cardinality sketches were shown to be not privacy preserving when the sketch randomness or content are public~\citep{DLB:PET2019}.  Other works~\citep{NEURIPS2020_e3019767,paghS:ICDT2021,SteinkKnop:2023}  showed that cardinality sketches are privacy preserving, but this is under the assumption that the randomness is used once.
Our contribution here is designing attacks and 
quantifying their efficiency. The common grounds with privacy is the high sensitivity of the sketch maps to removal or insertion of low priority key.

Several works constructed attacks on linear sketches, including the Johnson Lindenstrauss Transform~\citep{DBLP:conf/nips/CherapanamjeriN20}, the AMS sketch~\citep{BenEliezerJWY21,TrickingHashingTrick:arxiv2022}, and CountSketch~\citep{DBLP:conf/icml/CohenLNSSS22,TrickingHashingTrick:arxiv2022}. The latter showed that the standard estimators for CountSketch and the AMS sketch can be compromised with a linear number of queries and the sketches with arbitrary estimators can be compromised with a quadratic number of queries. The method was to combine low bias inputs with disjoint supports and have the bias amplified, since the bias increases linearly whereas the $\ell_2$ norms increases proportionally to $\sqrt{r}$. This approach does not work with cardinality sketches, which required a different attack structure. Combining disjoint sets on which the estimate is slightly biased up will not amplify the bias.  The common ground, perhaps surprisingly, is that these fundamental and popular sketches are all vulnerable with adaptive inputs and in a similar manner: Estimators that optimally use the sketch require linear size attacks.  Arbitrary correct estimators require quadratic size attacks.

\section{Preliminaries} \label{prelim:sec}

An \emph{attack} is an interaction designed to construct a set that is \emph{adversarial} to the randomness $\rho$. 
An adversarial set can be identified by trying out a large number of inputs.
We measure the efficiency of the attack by its \emph{size} (number of issued queries) and \emph{concurrency} (number of batches of concurrent queries). 

A set $U$ is 
\emph{adversarial} for the randomness $\rho$ if 
a \emph{sufficient statistics} for the cardinality that is computed from $S_\rho(U)$ is very skewed with respect to its distribution under sampling of $\rho$. That is, it has proportionally too few or two many low priority keys.

\begin{definition} [Sufficient Statistics]
A statistic $T$ on the sketch domain $S \mapsto \mathbb{R}$ is \emph{sufficient} for the cardinality $|U|$ if it includes all information the sketch provides on the cardinality $|U|$. That is, for each $t$, the conditional distribution of the random variable $S_\rho(U)$ given $T(S_\rho(U))=t$,
does not depend on $|U|$.
\end{definition}

\subsection{Composable Cardinality Sketches} \label{sketches:sec}
The underlying technique in all small space cardinality sketches is to use random hashmaps $h$ that assign ``priorities'' to keys in $\mathcal{U}$.\footnote{\citep{Vinod:ESA2022} proposed a method for streaming cardinality estimates that does not require hashmaps but the sketch is not composable.} The sketch of a set is specified  by the priorities of a small set  of keys with the lowest priorities. 
This information is related to the cardinality as smaller lowest priorities in a subset $U$ are indicative of  larger cardinality. Therefore cardinality estimates can be recovered from the sketch. We describe several common designs.
MinHash sketches (see surveys~\citep{MinHash:Enc2008,Cohen:PODS2023}) are suitable for insertions only (set unions and insertions of new elements) and are also suitable for sketch-based sampling. Domain sampling has priorities that are  discretized sampling rates and has the advantage that the sketch can be represented as random linear maps (specified by $\rho$) of the data vector and therefore have support for deletions (negative entries in sketched  vectors)~\citep{distinct_deletions_Ganguly:2007}. 

\paragraph{MinHash sketches}

Types of MinHash sketches
\begin{itemize}
\item 
\emph{$k$-mins}~\citep{FlajoletMartin85,ECohen6f} $k$ random hash functions $h_1,\ldots,h_k$ that map each key $x\in \mathcal{U}$ to i.i.d samples from the domain of the hash function. The sketch $S_\rho(U)$ of a set $U$ is the list $(\min_{x\in U} h_i(x))_{i\in[k]}$ of minimum values of each hash function over the keys in $U$.  The sketch distribution for a subset $U$ is $\Exp[|U|]^k$, a set of $k$ i.i.d.\ exponentially distributed random variables with parameter $|U|$.  The sum $T(S) := \|S \|_1$ is a \emph{sufficient statistics} for estimating the parameter $|U|$. An unbiased cardinality estimator is $(k-1)/T(S)$.
\item 
\emph{Bottom-$k$}~\citep{Rosen1997a,ECohen6f,BJKST:random02} One random hash function $h$ that maps $x\in \mathcal{U}$ to i.i.d samples from a distribution. The sketch 
$\{h(x) \mid x\in U\}_{(1:k)}$
stores the $k$ smallest hash values of keys $x\in U$. The $k$th smallest value $T(S) := \{h(x) \mid x\in U\}_{(k)}$ is a sufficient statistics for estimating $|U|$. When the distribution is $U[0,1]$, the unbiased cardinality estimate is $(k-1)/T(S)$.
\item 
\emph{$k$-partition} \citep{flajolet2007hyperloglog}.  One hash $P:\mathcal{U}\to [k]$ randomly partition keys to $k$ parts.  One hash function $h:\mathcal{U}$ maps keys to i.i.d $\Exp[1]$. The sketch includes the minimum in each part $(\min_{x\in U \mid P(x)=i} h(x))_{i\in[k]}$.
\end{itemize}
Note that the choice of (continuous) distribution does not affect the information content in the sketch.
Variations of these sketches store rounded/truncated numbers (HLL~\citep{flajolet2007hyperloglog} stores a maximum negated exponent). When studying vulnerabilities of query response algorithms, the result is stronger when the full precision representation is available to them.

The cardinality estimates obtained with these sketches have NRMSE error \eqref{nrmse:eq} of $1/\sqrt{k}$.  
\begin{definition} [bias of the sketch] \label{sketchbias:def}
We say that the sketch $S_\rho(U)$ of a set $U$ is \emph{biased up} by a factor of $1/\alpha$ when
$T(S_\rho(U)) \leq \alpha k/|U|$ and we say it is \emph{biased down} by a factor of $\alpha$ when
$T(S_\rho(U)) \geq (1/\alpha) k/|U|$. 
\end{definition}
For our purposes, $\alpha\leq 1/2$ would places the sketch at the $\delta = e^{-\Omega(k)}$ tail of the distribution under sampling of $\rho$ and we say that $U$ is adversarial for $\rho$.

\paragraph{Domain sampling}
These cardinality sketches can be expressed as discretized bottom-$k$ sketches. Therefore vulnerabilities of bottom-$k$ sketches also apply with domain sampling sketches. 
The input is viewed as a vector of dimension $|\mathcal{U}|$ where the set $U$ corresponds to its nonzero entries. The cardinality $|U|$ is thus the sparsity (number of nonzero entries). The sketch map $S_\rho$ is a dimensionality reduction via a random linear map (specified by $\rho$). 

We sample the domain $\mathcal{U}=[n]$ with different rates $p=2^{-j}$. For each rate, we collect a count $c_j$ (capped by $k$) of the number of sampled keys from our set $X$.  This can be done by storing the first $k$ distinct keys we see or (approximately) by random hashing into a domain of size $k$ and considering how many cells were hit. 
A continuous version known as \emph{liquid legions}~\citep{49177}) is equivalent to a bottom-$k$ sketch: Each key is assigned a random i.i.d.\ priority (lowest sampling rate in which it is counted with domain sampling) and we seek  the sampling rate with which we have $k$ keys.

\paragraph{Specifying keys for the sketch}
Note that with all these sketch maps, the sketch of a set $U$ is specified by a small subset $U_0 \subset U$ of the ``lowest priority'' keys in $U$. With $k$-mins and $k$-partition sketches it is the keys $\arg\min_{x\in U}\{ h_i(x)\}$ for  $i\in [k]$. With bottom-$k$ sketches, it is  the keys with $k$ smallest values in $\{h_i(x)\}_{x\in U}$.  With  domain sampling, it is  the keys with  the highest  sampling rate.  Note that $|U_0| = O(k) \ll |U|$ but $S_\rho(U_0) = S_\rho(U)$.

\section{Attack on the ``standard'' estimators} \label{standardattack:sec}

The ``standard'' cardinality estimators optimally use the content in the sketch. They can be equivalently viewed as reporting a sufficient statistics.
We design a single-batch attack described in Algorithm~\ref{standardattack:algo}.  The algorithm fixes a ground set $N$ of keys. For $r$ queries, it samples a random subset $U\subset N$ where each $u\in N$ is included independently with probability $1/2$. It receives from the estimator the value of the sufficient statistics $T(S_\rho(U)) := 1/M(S_\rho(U))$ (we use the  inverse of the cardinality estimate). For each key $x\in N$ it computes a score $A[x]$ that is the average value of $T(S_\rho(U))$ over all subsets where $x\in U$.

\begin{algorithm2e}[t]\caption{\small{\texttt{Attack ``standard'' estimators}}}\label{standardattack:algo}
{\small
\DontPrintSemicolon
\KwIn{$\rho$,  $n$, $r$, $T$}
Fix a set $N$ of $n$ keys \tcp{selected randomly independently of $\rho$)}
\ForEach(\tcp*[f]{initialize}){key $x\in N$}{$t[x] \gets 0$ \\ $c[x] \gets 0$}
\ForEach{$i=1,\ldots,r$}
{
$U\gets$ include each $x\in N$ independently with prob $\frac{1}{2}$\; 
\ForEach(\tcp*[f]{score keys}){key $x\in U$}{$t[x] \gets t[x]+1$ \\ $c[x] \gets c[x] + T(S_\rho(U))$}
}
\Return{The keys in $N$ ordered by average score $A[x]=\frac{c[x]}{t[x]}$.}
}
\end{algorithm2e}

We show that for $\alpha>0$,  
an attack of size $O(r/\alpha^2)$ produces an adversarial set with sketch that is biased up by a factor $\alpha$ (see Definition~\ref{sketchbias:def}).
\begin{theorem} [Utility of Algorithm~\ref{standardattack:algo}] \label{stattackutility:thm}
     Consider Algorithm~\ref{standardattack:algo} with
    $k$-mins or bottom-$k$ sketches and $T(S)$ being the inverse of the cardinality estimate as specified in Section~\ref{sketches:sec}. For $\alpha > 0$, set the parameters $n=\Omega(\frac{1}{\alpha} k \log(kr))$ and $r= O\left(\frac{k}{\alpha^2} \right)$.
    Then with probability at least $0.99$, the sketch $S_\rho(U_\alpha)$, where $U_\alpha\subset N$ is the of the $\alpha n$ lowest $A[u]$ scores, is biased up by a factor of $\Omega(1/\alpha)$:
    \[
    \E\left[M(U_\alpha)\right] = \Theta(n)\ .
    \]
\end{theorem}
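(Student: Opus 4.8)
The plan is to condition on the sketch randomness $\rho$ and analyze the algorithm's internal randomness (the choices of the subsets $U_i$) separately. Given $\rho$, define the \emph{ideal score} $\mu_x := \E_U[\,T(S_\rho(U))\mid x\in U\,]$, where $U$ is a uniformly random subset of $N$. Conditioned on $\rho$, the summands $T(S_\rho(U_i))$ over the queries $U_i$ containing a fixed key $x$ are i.i.d., so $A[x]$ is the empirical mean of $\Binom(r,1/2)\approx r/2$ i.i.d.\ samples with mean $\mu_x$. A Bernstein-type bound (using $\Var_U(T(S_\rho(U))\mid x\in U)=O(k/n^2)$ and the light tails of $T$) gives $|A[x]-\mu_x|\lesssim \sqrt{k\log(nr)/r}\,/\,n$ simultaneously for all $x\in N$ with probability close to $1$. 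So up to this deviation, the output ordering is the ordering by $\mu_x$.

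Next I would compute $\mu_x$ explicitly. For $k$-mins, $T(S_\rho(U))=\sum_{j\in[k]}\min_{y\in U}h_j(y)$, so conditioning on $x\in U$ and using that $|U\setminus\{x\}|$ concentrates around $n/2$, $\mu_x\approx\frac{2}{n}\sum_{j\in[k]}\bigl(1-e^{-n h_j(x)/2}\bigr)=\frac{2}{n}\bigl(k-\psi(x)\bigr)$, where $\psi(x):=\sum_{j\in[k]}e^{-n h_j(x)/2}$. Thus a key has small ideal score \emph{iff} it has a hash value of order $1/n$ in some coordinate $j$; the ``champion'' key $x_j^\ast:=\arg\min_{y\in N}h_j(y)$ has $h_j(x_j^\ast)=E_j/n$ with $E_j\sim\Exp[1]$ and $\psi(x_j^\ast)\approx e^{-E_j/2}$, which is $\Theta(1)$ with constant probability. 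The bottom-$k$ case is structurally identical: a key with a small value $h(x)$ occupies one of the $k$ smallest slots in $U$, shifting $T$ from the $k$-th toward the $(k{-}1)$-st order statistic of the remaining $\approx n/2$ keys, so again $\mu_x$ drops by $\Theta(1/n)$ exactly when $h(x)=O(1/n)$.

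Then I would record the structural facts about $\rho$ that hold with probability at least $0.995$: for every coordinate $j$ there are $\Theta(1)$ keys with $h_j$-value $O(1/n)$ (equivalently, control of $\min_{y\in N}h_j(y)\sim\Exp[n]$); the number of keys $x$ with $\psi(x)\ge s$ is $\Theta(k\log(1/s))$ over the relevant range of $s$ (so a threshold that admits at most $\alpha n$ keys corresponds to $\log(1/s)\lesssim \alpha n/k$ — this is why we take $n=\Omega(\alpha^{-1}k\log(kr))$, which also guarantees $\alpha n\gg k$, leaving room for the champions plus filler); and $\Var_U(T(S_\rho(U))\mid x\in U)=O(k/n^2)$. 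Conditioned on such a $\rho$, fix a threshold $\tau$ just below the typical ideal score $2k/n$, at distance a constant number of noise standard deviations $\Theta(\sqrt{k/r})/n$ from it. I would then show: (a) \emph{coverage} — for all but an $O(\alpha)$-fraction of coordinates $j$, some key with $h_j$-value $O(1/n)$ gets empirical score $A[x]\le\tau$ (its ideal score is below $\tau$ by $\Omega(1/n)$, and with $r=O(k/\alpha^2)$ the per-coordinate failure probability is small enough that the expected number of uncovered coordinates is $O(\alpha k)$, then a first-moment/Chernoff bound); (b) \emph{no crowding} — at most $\alpha n$ keys get empirical score $\le\tau$, by first-moment control of both the $\Theta(k\log(1/s))$ keys with small ideal score and the $\approx n$ typical keys pushed below $\tau$ by noise. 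Together (a)+(b) force $U_\alpha\supseteq\{x:A[x]\le\tau\}$, hence $\min_{x\in U_\alpha}h_j(x)=O(1/n)$ for all but $O(\alpha k)$ coordinates, while for each remaining coordinate $U_\alpha$ is essentially random and so $\min_{x\in U_\alpha}h_j(x)=O(1/(\alpha n))$. Therefore $T(S_\rho(U_\alpha))=\sum_j\min_{x\in U_\alpha}h_j(x)=O(k/n)+O(\alpha k)\cdot O(1/(\alpha n))=O(k/n)$; combined with the deterministic lower bound $T(S_\rho(U_\alpha))\ge\sum_j\min_{y\in N}h_j(y)=\Theta(k/n)$ (w.h.p.), this gives $M(U_\alpha)=(k-1)/T(S_\rho(U_\alpha))=\Theta(n)$ with probability at least $0.99$. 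The same domination $M(U_\alpha)\le(k-1)/\sum_j\min_{y\in N}h_j(y)$ has expectation exactly $n$ (reciprocal-gamma), so $\E[M(U_\alpha)]=\Theta(n)$.

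The main obstacle is the delicacy of the threshold step. The signal distinguishing a champion from a typical key is only a $\Theta(1/k)$ \emph{relative} perturbation of the score, while each query response $T(S_\rho(U))$ has relative fluctuation $\Theta(1/\sqrt{k})$, so $\Theta(k)$ queries are barely enough to extract one bit of signal per key; at the same time one must prevent the $\approx n$ typical keys — only $\Theta(\sqrt{k/r})\cdot(1/n)$ above the threshold — from spilling past it in numbers exceeding $\alpha n$. Tracking the interplay between the margin, the per-coordinate miss probability, and the $O(\alpha)$-fraction of coordinates one is allowed to lose is exactly what pins down the parameter choices $r=O(k/\alpha^2)$ and $n=\Omega(\alpha^{-1}k\log(kr))$.
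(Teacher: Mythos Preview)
Your high-level strategy matches the paper's: condition on $\rho$, show that the champions $m^j_1$ have smaller conditional mean $\mu_x=\E_U[T(S_\rho(U))\mid x\in U]$ than typical keys by $\Theta(1/n)$, control the fluctuation of $A[x]$ around $\mu_x$, and conclude most champions land in the prefix. The paper executes this via a gap lemma (for each $j$, $\mu_u-\mu_{m^j_1}\ge \delta/(3n)$ with probability $1-\delta$ over $\rho$) and a variance lemma ($\Var_U(T\mid x\in U)=O(k/n^2)$), then applies Chebyshev to each pair $(u,m^j_1)$; you instead propose an explicit formula for $\mu_x$, a uniform Bernstein bound, and a threshold argument.

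The one real issue is your formula $\mu_x\approx\tfrac{2}{n}(k-\psi(x))$ with $\psi(x)=\sum_j e^{-nh_j(x)/2}$. This is what $\mu_x$ would be if $\min_{y\in U\setminus\{x\}}h_j(y)$ were $\Exp[n/2]$-distributed, i.e.\ if you also averaged over the \emph{other} keys' hash values; but you are working conditioned on $\rho$, where those values are fixed. Conditioned on $\rho$, the gap $\mu_u-\mu_{m^j_1}$ for $u\in N'$ is (up to lower-order terms) $\Delta^j_2/2$ where $\Delta^j_2:=h_j(m^j_2)-h_j(m^j_1)$, not any function of $h_j(m^j_1)$ alone---two draws of $\rho$ with identical $h_j(m^j_1)$ but different $h_j(m^j_2)$ give very different gaps, and your $\psi$ cannot see this. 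The paper's fix is to express everything through the independent increments $\Delta^i_\ell\sim\Exp[n-\ell+1]$, which are well-defined functions of $\rho$ with clean distributions over $\rho$; the anti-concentration of $\Delta^j_2$ (namely $\Pr[\Delta^j_2\ge c/(n-1)]\ge e^{-2c}$) is exactly what yields the ``with probability $1-\delta$ the gap exceeds $\delta/(3n)$'' statement driving both the per-coordinate coverage bound and the $r=O(k/\alpha^2)$ scaling. Your threshold/coverage/no-crowding scaffolding is sound and would go through once you replace $\psi$ by the exact conditional expectation and redo the ``structural facts about $\rho$'' in terms of the $\Delta^i_\ell$.
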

Our analysis extends to the case when the estimator reports $T(S_\rho(U))$ with relative error $O(1/\sqrt{k})$.  That is, as long as the estimates are sufficiently accurate (within the order of the accuracy guarantees of a size $k$ sketch), then $O(k)$ attack queries suffice.

\paragraph{Analysis Highlights}
The proof of Theorem~\ref{stattackutility:thm}  is presented in Appendix~\ref{kminsattack:sec}.
The high level idea is as follows. 
We establish that scores are correlated with the priorities of keys -- the keys with lowest priorities have in expectation lower scores. 
Therefore a prefix of the order will contain disproportionately more of them and overestimate the cardinality and a suffix will contain disproportionately fewer of them and underestimate the cardinality. 

We consider, for each key $x\in N$, the distributions of $T(S_\rho(U))$ conditioned on $x\in U$. We bound from above the variance of these distributions and bound from below the gap in the means of the distributions between the keys that have the ``lowest priority'' in $N$ and the bulk of other keys in $N$.  
We then apply Chebyshev's Inequality to bound the number of rounds that is needed so that enough of the low priority keys have lower average scores $A[u]$ than ``most'' other keys. A nuance we overcame in the analysis was to handle the dependence of the sketches of the different queries that are selected from the same ground set.

\section{Experimental Evaluation} \label{experiments:sec}

In this section, we empirically demonstrate the efficacy of our proposed attack (Algorithm~\ref{standardattack:algo}) against the HyperLogLog (HLL) sketch~\cite{DurandF:ESA03,flajolet2007hyperloglog} with the HLL++ estimator~\cite{hyperloglogpractice:EDBT2013}. This is 
the most widely utilized sketch for cardinality estimation in practice. 
Given an accuracy parameter $\epsilon$, the HLL sketch stores $k = 1.04 \epsilon^{-2}$ values that are the negated exponents of a $k$-partition MinHash sketch (described in Section~\ref{sketches:sec}). 

The HLL++ estimator is a hybrid that was introduced in order to improve accuracy on low cardinality queries. When the sketch representation is sparse (fewer parts are populated), which is the case with cardinality lower than the sketch size, HLL++ uses the sketch as a hash table and estimates cardinality based on the number of populated parts. This yields essentially precise values. When all parts are populated, HLL++ uses an estimator based on the MinHash property. 
We set the size of our ground set $n\gg k$ to be in this relevant regime. 

We conduct two primary experiments: (i) For a fixed sketch size, we analyze the efficacy of the attack with a varying number of queries. (ii) For different sketch sizes, we evaluate the effectiveness of the attack with the number of queries linearly proportional to the sketch size. In the following section, we will first provide a detailed explanation of the ingredients required for our experimental setup. Subsequently, we will present the results of each experiment.

\paragraph{Experiment setup.} To generate the data, we ensure that a ground set with a size of at least $10\cdot k$ is produced for a given sketch size $k$. The size of the ground set must be at least linearly larger than the sketch size to prevent the sketch from memorizing the entire dataset. Given the desired size of the ground set, we generate random strings using the English alphabet of a fixed length, where the length is appropriately chosen so that we can generate the desired size set with different strings. 

We utilize the open-source implementation of HLL++ algorithm in \href{https://github.com/svpcom/hyperloglog}{github}. In this implementation, the sketch is fixed by giving the error rate $\epsilon\in (0,1)$ and the sketch size $k$ for error rate $\epsilon$ is $\lceil 1.04/\epsilon^2 \rceil$ (consistent with \cite{flajolet2007hyperloglog}). 

\subsection{Efficacy with a varying number of queries}
In this experiment we examine the impact of introducing a variable quantity of queries. The attack is executed with the same ground set for eight distinct query counts, where each count is a power of 4. At the conclusion, the algorithm generates scores and returns keys sorted in ascending order according to their scores. Keys with high score correspond to low-priority keys which are expected to appear when the estimate is biased up. By including these keys in the adversarial set, we basically can trick the estimator to think that they are seeing a sketch of a large set. Similarly we can construct adversarial input sets by including keys with low scores and trick the estimator to think they are seeing a sketch of a small set.  

We present two sets of plots corresponding to how the estimator overestimates or underestimates as keys are incrementally added to the adversarial input in the increasing or decreasing order of their scores. We consider two different error rates, $\epsilon=0.1$ with corresponding sketch size $k=104$ and
$\epsilon=0.05$, with corresponding sketch size $k=416$. We use the same ground set comprising of $5000$ keys for both sets of experiments. It is worth noting that the plot with one query, which oscillates around the line $y = x$ (denoted by a dashed line), is close to a non-adversarial setting and we can see that the estimates are within the desired specified error of $\epsilon$. 

Figure~\ref{fig:HLLppattack05} reports cardinality estimates when keys are added incrementally in increasing order of their scores. We can see that as we increase the number of queries, the gap between estimated value and the $y=x$ line (actual value) widens. This gap indicate the overestimation error. Our algorithm is able to construct more effective adversarial input with a larger number of queries. However the gain in effectiveness becomes marginal at some point. For example, for $k = 104$, we already see good degree of error in estimation with $4096$ queries. 
\begin{figure}[h]
  \centering
  \includegraphics[width=0.4\textwidth]{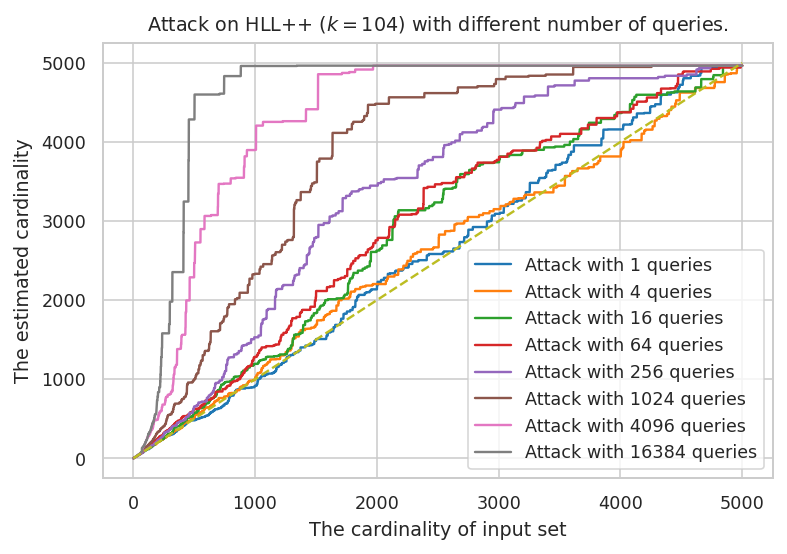}
  \includegraphics[width=0.4\textwidth]{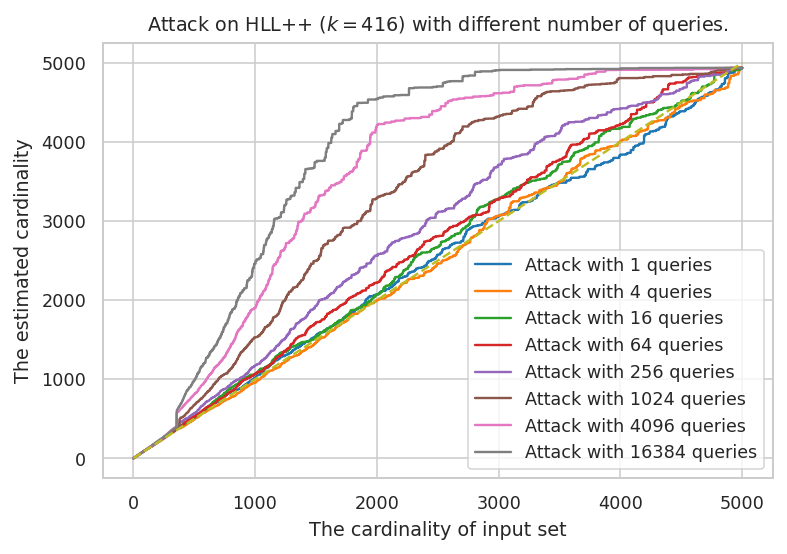}
  \caption{Attack on the HLL++ sketch and estimator, for varying number of queries. Cardinality estimates for the prefix of keys with  lowest average score after $r=4^i$ queries.}
  \label{fig:HLLppattack05}
\end{figure}
Figure~\ref{fig:HLLppattackdiffrounds} reports results when keys are added incrementally in \emph{decreasing} order of their scores. The gap here corresponds to an \emph{underestimation} error.  We can see that the attacks are more effective with more queries.
\begin{figure}[h]
  \centering
  \includegraphics[width=0.4\textwidth]{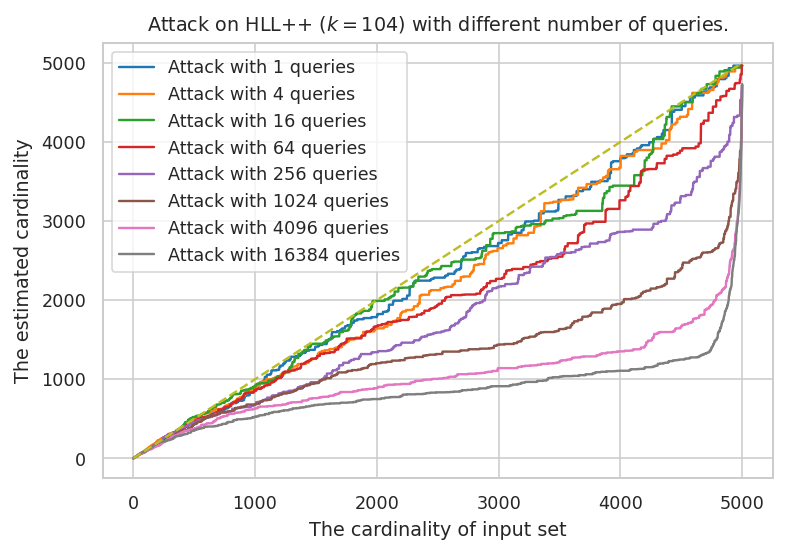}
  \includegraphics[width=0.4\textwidth]{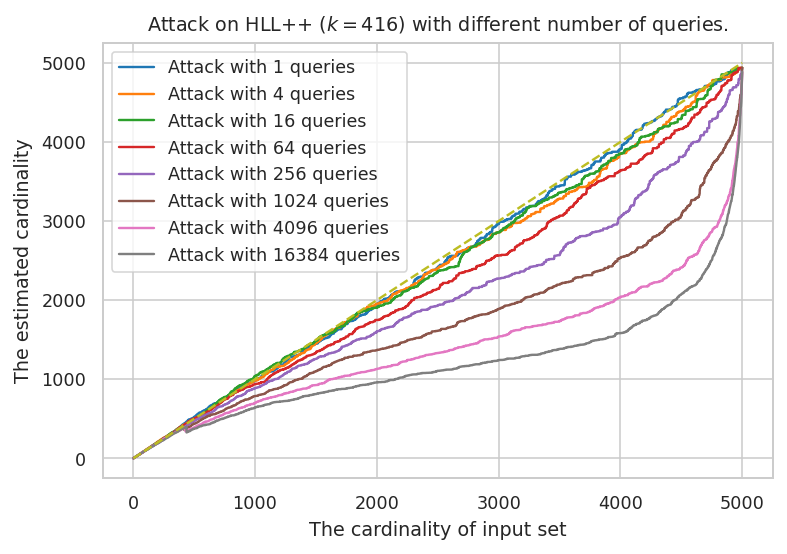}
  \caption{Attack on the HLL++ sketch and estimator, for varying number of queries. Cardinality estimates for the prefix of keys with largest average score after $r=4^i$ queries.}
  \label{fig:HLLppattackdiffrounds}
\end{figure}
\subsection{Efficacy of the attack with a varying sketch sizes} 
In this section, our focus is on examining HLL++ with different sketch sizes, namely we consider six different error rates corresponding to sketch sizes $k= 2^i$ for $i$ ranging from $6$ to $11$. For each sketch size $k$, we generate a ground set of size $n = 10*10^{\lceil log_{10}(k) \rceil}$ to ensure that the ground set is larger than sketch size and the MinHash component of the HLL++ estimator is used.  In Figure \ref{fig:HLLppattack2M}, we report the ratio of estimated size to actual size of the set for all subsets constructed as a prefix of the order on keys, sorted by increasing average scores $A[x]$ for a fixed number of queries set to $4k$.  In this cardinality regime, HLL++ is nearly unbiased and we expect a ratio that is close to $1$ when the queries are not adversarial. However by running attacks with enough number of queries (linear in the size of sketch), we are able to identify keys with low-priority and then trick the estimator to give an estimate for a set much higher than the actual size. 

\begin{figure}[h]
  \centering
  \includegraphics[width=0.4\textwidth]{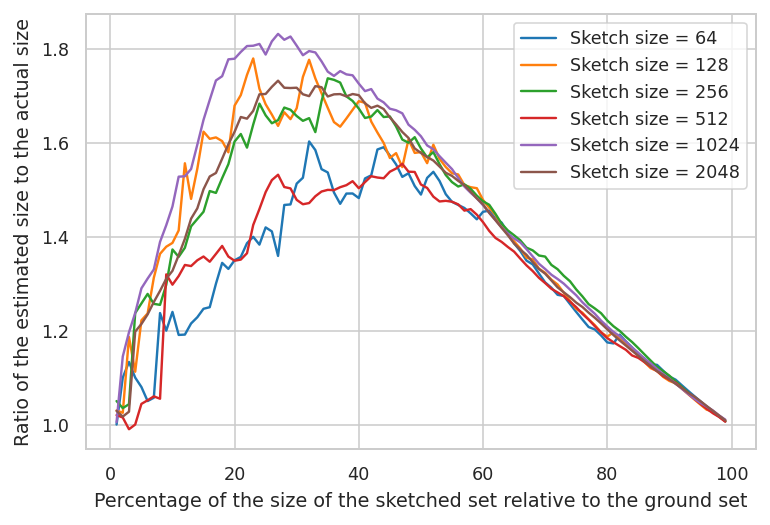}
  \caption{Attack on HLL++ for varying sketch sizes while utilizing queries of size 4 times the sketch size.}
  \label{fig:HLLppattack2M}
\end{figure}

\section{Attack Setup Against Strategic Estimators} \label{setupattack:sec}

We design attacks that apply generally against any query response (QR) algorithm. The attacks are effective even when the specifics of the attack and the full internal state of the attack algorithm are shared with the QR algorithm, including the per-step distribution from which the attacker selects each query. Moreover, we can even assume that the QR algorithm is provided with an enhanced sketch that includes the identities of the low priority keys that determined the sketch and that after the QR algorithm responds to a query, the full query set $U$ is shared with it. The only requirement from the QR algorithm is that it selects \emph{correct} response maps (with respect to the query distribution).  Note that such a powerful QR algorithm precludes attacks that use queries of fixed cardinality, since the QR algorithm can simply return that cardinality value without even considering the actual input sketch.\footnote{Our
attack in Algorithm~\ref{standardattack:algo} is also precluded, since a fixed response of $n/2$ satisfies the requirements (the cardinality is $\Binom(n,p)$ and for $n\gg k$, all queries have size close to $n/2$).}

Moreover, the task of the QR algorithm is the 
following problem that is more specialized than cardinality estimation:
\begin{problem} [Soft Threshold $A$]\label{softthresh:problem} Return $0$ when $|U| \leq A$ and $1$ when $|U| \geq 2A$. 
\end{problem}
\begin{remark} \label{softthreshold:rem}
   Soft Threshold can be solved via a cardinality estimate with a multiplicative error of at most $\sqrt{2}$ by reporting $1$ when the estimate is larger than $\sqrt{2}A$ and $0$ otherwise. When estimates are computed from cardinality sketches with randomness $\rho$ that does not depend on the queries, a sketch of size $k=\Theta(\log(1/\delta))$ is necessary and suffices for providing correct responses with probability $\geq 1-\delta$.
\end{remark}

\paragraph{Attack Framework}
We describe the attack framework. We specify attacks in Sections~\ref{singlebatch:sec} and~\ref{generalmain:sec}.
We model the interaction as a process between three parties: the \emph{Attacker}, the \emph{QR algorithm}, and \emph{System}.
The attacker fixes a ground set $N$ from which it samples query sets. The product of the attack is a subset $M\subset N$ which we refer to as a \emph{mask}.  The aim is for the mask to have size that is much smaller than our query subset sizes and the property that for uniformly sampled $U\subset N$ ($|U|\gg |M|$) the information in the sketch $S_\rho(M\cup U)$ is insufficient to estimate $|U|$. The attack proceeds in steps described in Algorithm~\ref{attackinteraction:algo}:
\begin{algorithm2e}\caption{Attack Interaction Step}\label{attackinteraction:algo}
\begin{small}
\begin{trivlist}
\item $\bullet$
Attacker
specifies a distribution $\mathcal{D}$ over the support $2^N$ and sends it to QR. It then selects a query $U\sim \mathcal{D}$ and sends it to \emph{System}.
\item $\bullet$
QR selects a \emph{correct}
map with $\delta=O(1/\sqrt{k})$ (as in Definition~\ref{correctmap:def}) of sketches to probabilities
$S \mapsto \pi(S)\in [0,1]$.  The selection may depend on the prior interaction transcript and on $\mathcal{D}$. If there is no correct map QR reports \textbf{failure} and halts.
\item $\bullet$ \emph{System} computes the sketch $S_\rho(U)$ and sends it to QR.
\item $\bullet$
QR sends
$Z\sim \Bern[\pi(S_\rho(U))]$ to Attacker. Attacker shares $U$ and its internal state with QR.
\end{trivlist}
\end{small}
\end{algorithm2e}
\begin{definition} [Correct Map] \label{correctmap:def}
   We say that the map $\pi$ is \emph{correct} for
$A$ and $\delta$ and query distribution $\mathcal{D}$
if \ignore{it returns a correct response 
to a soft threshold problem with $A$ with probability at least $1-\delta$.  That is,
\begin{align*} \label{correctmap:eq}
\E_{U\sim\mathcal{D}}[ \mathbf{1}\{|U|<A\}\cdot (\pi(S_\rho(U))) + \\ \mathbf{1}\{|U|>2A\}\cdot (1-\pi(S_\rho(U)))]  &\leq \delta 
\end{align*} 
\edith{Version 2 (stronger, simpler proof later)}
If} for any cardinality value, over the query distribution for this value, it returns a correct response to a soft threshold problem with $A$ with probability at least $1-\delta$.  That is,
\begin{align*} 
\text{for $c<A$, } \E_{U\sim\mathcal{D} \mid |U|=c}(\pi(S_\rho(U))) &\leq \delta \\ 
\text{for $c>2A$, } \E_{U\sim\mathcal{D} \mid |U|=c}(\pi(S_\rho(U))) &\geq 1-\delta\ .
\end{align*} 
\end{definition}

\begin{remark} [Many correct maps] \label{manycorrect:rem}
There can be multiple correct maps and QR may choose any one at any step. 
Since the output when $|U|\in [A,2A]$ is not specified, the probability $\E_{U\sim\mathcal{D}}\pi(S_\rho(U))$ of reporting  $Z=1$ may vary by 
$\approx \Pr_{U\sim\mathcal{D}}[|U|\in [A,2A]] +\delta$ between 
correct maps.
\end{remark}

Recall that our attack on the standard estimators (Algorithm~\ref{standardattack:algo}) issued a single batch of queries (all drawn from the same pre-specified distribution $\mathcal{D}_0$). We show that multiple batches are necessary to attack general QR algorithms:
\begin{lemma} [Multiple batches are necessary] \label{adaptivenecessary:lemma}
    Any attack of polynomial size in $k$ on a soft threshold estimator must use multiple batches.
\end{lemma}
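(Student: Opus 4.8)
The plan is to show that a single batch leaks essentially no information about $\rho$, so the mask $M$ it produces is (close to) independent of $\rho$, and then to argue that a fixed small set is an adversarial mask only with probability $e^{-\Omega(k)}$ over $\rho$ — contradicting an attack that succeeds with constant probability.

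\textbf{Step 1 (reduce to a non-adaptive, oblivious interaction).} In a single batch the attacker must commit to the query distributions $\mathcal{D}_1,\dots,\mathcal{D}_r$ and draw $U_i\sim\mathcal{D}_i$ before seeing any response, so the pairs $(\mathcal{D}_i,U_i)_i$ are independent of $\rho$, and the output mask $M$ is a (randomized) function of $(U_i,Z_i)_i$ together with the attacker's private coins. It therefore suffices to bound $I\big((Z_i)_i;\rho \mid (U_i)_i\big)$ under a worst-case-for-the-attacker (but correct) QR strategy.

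\textbf{Step 2 (a near-oblivious correct QR strategy).} I would exhibit a QR strategy under which $I\big((Z_i)_i;\rho \mid (U_i)_i\big)=\tilde{O}(r/\sqrt{k})$. For the standard estimator $\hat{n}$ of a size-$k$ MinHash sketch, $\hat{n}(S_\rho(U))$ is unbiased for $|U|$, concentrates with relative fluctuation $O(1/\sqrt{k})$, and has sub-exponential tails; hence for any threshold $\theta$ bounded away from $|U|$ by a constant factor, $\mathbf{1}\{\hat{n}(S_\rho(U))>\theta\}=\mathbf{1}\{|U|>\theta\}$ with probability $1-e^{-\Omega(k)}$ regardless of how $\rho$ realizes, while $\pi_\theta(S):=\mathbf{1}\{\hat{n}(S)>\theta\}$ is a correct map for Soft Threshold $A$ with $\delta=O(1/\sqrt{k})$ for \emph{every} $\theta$ in a constant-factor window $\Theta([A,2A])$. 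Given the committed $\{\mathcal{D}_i\}$, QR picks $\theta$ (randomly, within this window) so that only a $\tilde{O}(1/\sqrt{k})$ fraction of the total query mass lands in the $1\pm\tilde{O}(1/\sqrt{k})$ multiplicative band around $\theta$ where $\pi_\theta$'s answer is not forced by $|U_i|$; mixing two such threshold maps makes the in-band answers coin flips driven by QR's own randomness rather than by $\rho$. The only irreducibly $\rho$-dependent responses are then those on queries whose cardinality lies within $1\pm\tilde{O}(1/\sqrt{k})$ of $A$ or $2A$ — where correctness forces a $\delta$-biased coin that must reflect the sketch tail — and each such response carries only $\tilde{O}(1/\sqrt{k})$ bits about $\rho$.

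\textbf{Step 3 (a mask built from little information fails).} By the data-processing inequality $I(M;\rho)=\tilde{O}(r/\sqrt{k})$. For the attack to succeed the mask must contain the $\Theta(k)$ globally-lowest-priority keys of $N$ — e.g.\ $\{\arg\min_{y\in N}h_j(y):j\in[k]\}$ for a $k$-mins sketch, or the $k$ smallest-hash keys for bottom-$k$ — since only then is $S_\rho(M\cup U)\approx S_\rho(M)$ uniformly over $U\subseteq N$, which is exactly what makes a subsequent soft-threshold query on $M\cup U$ unanswerable. For a fixed $M$ with $|M|\ll|N|=n$ these keys form (up to lower-order terms) a uniformly random $\Theta(k)$-subset of $N$, so $\Pr_\rho[M\text{ is an adversarial mask}]\le (|M|/n)^{k}=e^{-\Omega(k)}$; a Fano-type argument then forces any constant-probability attack to satisfy $I(M;\rho)=\Omega\big(k\log(n/|M|)\big)$, which contradicts Step 2 once $r$ is polynomial in $k$ — provided the ground set is taken large enough that a $\Theta(k)$-subset of $N$ cannot be pinned down with only polynomially many bits of information.

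\textbf{Main obstacle.} The crux is Steps 2--3: one must both (a) pin down the per-query leakage of a strategically chosen correct QR, where the boundary queries near $|U|=A,2A$ are the bottleneck and the randomized-threshold construction above is what controls them; and, more importantly, (b) turn ``few bits about $\rho$'' into ``cannot recover the $\Theta(k)$ low-priority keys,'' matching the polynomial bound in the statement. The structural fact any such bound must capture is that in a single non-adaptive batch every informative query is a random $\Theta(n)$-size subset whose response reflects only an $O(1/\sqrt{k})$-level bias jointly determined by $\Theta(k)$ low-priority keys, so no single query isolates one of them; a multi-batch attack instead grows the mask, thereby shrinking the \emph{effective} ground set and amplifying the per-query signal until the remaining low-priority keys become identifiable. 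Making this precise as a reconstruction lower bound — identifying a $\Theta(k)$-subset of $N$ from polynomially many noisy one-bit measurements of random large subsets, even against an oracle that may choose the measurement responses adversarially within the soft-threshold slack — is where the real work lies.
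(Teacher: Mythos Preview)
Your information-theoretic approach is genuinely different from the paper's argument, but it has a quantitative gap that prevents it from proving the full statement, and it misses a much simpler structural idea.

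The paper's proof is essentially one sentence: in a single batch of $r$ non-adaptive queries, the QR algorithm can answer correctly using only a \emph{sub-sketch} of size $k'=O(\log(r/\delta))$ --- the same $k'$ hash functions in a $k$-mins sketch, $k'$ parts in a $k$-partition sketch, or the bottom-$k'$ values in a bottom-$k$ sketch --- because a union bound over $r$ queries with per-query failure $\delta/r$ suffices. Since the responses are then a function of only those $k'$ components of $\rho$, the attacker's output $M$ is \emph{statistically independent} of the remaining $k-k'$ components; hence $M$ cannot be adversarial for the full size-$k$ sketch except with probability $e^{-\Omega(k-k')}$. For any polynomial $r$ we have $k'=O(\log k)\ll k$, and the claim follows.

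Your approach instead tries to bound the total leakage $I((Z_i)_i;\rho\mid (U_i)_i)=\tilde O(r/\sqrt{k})$ for a randomized-threshold QR and then apply Fano. There are two problems. First, even granting the Step~2 bound, the Fano requirement in Step~3 is $I(M;\rho)=\Omega(k\log(n/|M|))$; since the \emph{attacker} chooses $n$ and needs only $n=\Theta(k/\alpha)=\mathrm{poly}(k)$, the right-hand side is $O(k\log k)$, and your contradiction holds only for $r=\tilde O(k^{3/2})$ --- not for all polynomial $r$ as the lemma asserts. Your escape clause ``provided the ground set is taken large enough'' does not apply, because you do not get to choose $n$. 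Second, Step~2 itself is not pinned down: the QR sees only $S_\rho(U)$, not $|U|$, so it cannot decide to ``flip a coin in the band''; and randomizing $\theta$ per query still leaves a $\Theta(1/\sqrt k)$ chance that the realized $|U_i|$ lands near $\theta$, which is exactly the leakage you are trying to control but cannot drive below $\tilde O(1/\sqrt k)$ per query with this construction.

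The idea you are missing is that the QR need not try to make its answers nearly oblivious to $\rho$; it can instead make them \emph{exactly} oblivious to all but a logarithmic-size slice of $\rho$ by simply ignoring the rest of the sketch. That gives zero leakage about $k-O(\log r)$ components and makes the argument both tighter (exponential rather than $k^{3/2}$) and essentially immediate.
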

\begin{proof}
When there is a single batch of $r$ queries,  we can apply the standard estimator while accessing only a ``component'' of the sketch that is of size $k'=O(\log (r/\delta))$ and obtain correct responses on all queries. This component is the same $k'$ hash functions with $k$-mins sketches, only $k'$ parts in a $k$-partition sketch, or the bottom-$k'$ values in a bottom-$k$ sketch. Since the query response only leaks information on this component, the attacker is only able to compromise that component in this single-batch attack. Therefore, an exponential number of queries in $k$ is needed in order to construct an adversarial input in a single batch. 
\end{proof}

\section{Single-batch attack on symmetric QR} \label{singlebatch:sec}

Algorithm~\ref{onebatchgen:algo} specifies a single-batch attack. We establish that the attack succeeds when we set the size $r=\tilde{O}(k^2)$ and QR is constrained to be \emph{symmetric} (see Definition~\ref{symmetric:def}). In essence symmetry means that the QR algorithm does not make a significant distinction between components of the sketch. Symmetry excludes strategies that distinguish between components of the sketch as in the proof of  Lemma~\ref{adaptivenecessary:lemma} but still allows for flexibility including randomly selecting components of the sketch. 
In Section~\ref{generalmain:sec} we extend this attack  to an adaptive attack that works against \emph{any} QR algorithm.

\begin{algorithm2e}[t]\caption{\small{\texttt{Single Batch Attacker}} \label{onebatchgen:algo}}
{\small
\DontPrintSemicolon
\KwIn{$\rho$,  $n$, $r$}
Select a set $N$ of $n$ keys\tcp*{Randomly from $\mathcal{U}$}
$A \gets n/16$\; 
\lForEach(\tcp*[f]{initialize}){key $x\in N$}{$C[x] \gets 0$}
\For{$i=1,\ldots,r$}
{
Sample $q$ as specified in Algorithm~\ref{samplerate:algo}\tcp*{using $A,n$}
$U\gets $ includes each $u\in N$ independently with prob $q$\;
\textbf{send} $U$ \tcp*{$\to$ System}
\textbf{receive} $Z$ \tcp{$\gets$ Symmetric Query Response}
\lForEach(\tcp*[f]{score}){key $x\in U$}{$C[x] \gets C[x]+Z$}
}
$\overline{C}\gets \textrm{median}\{C[N]\}$\tcp*{Compute median score}
\Return{$M \gets \{x\in N \mid C[x] > \overline{C} + \tilde{\Omega}(\frac{r}{k})\}$}\tcp*{Mask}
}
\end{algorithm2e}

The attacker initializes the scores $C[x]\gets  0$ of all keys $x\in N$ in the ground set. Each query is formed by 
sampling a 
rate $q$ (as described in Algorithm~\ref{samplerate:algo}) and selecting a random subset $U \subset N$ so that each key in $N$ is included independently with probability $q$. The attacker receives $Z$ and increments by $Z$ the score $C[x]$ of all keys $x\in U$. The final product is the set $M$ of keys with scores that are higher by $\tilde{\Omega}(r/k)$ than the median score.

\begin{theorem} [Utility of Algorithm~\ref{onebatchgen:algo} with symmetric maps] \label{onebatchutility:thm}
     For $\alpha > 0$, set $n=\Omega(\frac{1}{\alpha} k \log(kr))$ and $r= \tilde{\Omega}\left(\frac{k^2}{\alpha^2} \right)$. 
     Then
     $\Pr[(S_\rho(M) = S_\rho(N)) \land  (|M|<\alpha n)] \geq 0.99$. 
\end{theorem}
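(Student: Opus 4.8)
The plan is to reduce Theorem~\ref{onebatchutility:thm} to a separation statement about the scores $C[x]$ produced by Algorithm~\ref{onebatchgen:algo}. Let $U_0 = U_0(N)\subseteq N$ be the set of at most $k$ lowest-priority keys of $N$, i.e.\ the keys that by the discussion in Section~\ref{sketches:sec} already satisfy $S_\rho(U_0) = S_\rho(N)$, and let $L\subseteq N$ be the ``shell'' of the $\ell = \Theta(k\log(kr))$ lowest-priority keys of $N$. I claim it suffices to show that with probability at least $0.99$ the returned set satisfies $U_0 \subseteq M \subseteq L$. Indeed $U_0\subseteq M\subseteq N$ forces the lowest-priority keys of $M$ to be exactly $U_0$, so $S_\rho(M) = S_\rho(U_0) = S_\rho(N)$; and $M\subseteq L$ gives $|M|\le \ell = \Theta(k\log(kr)) < \alpha n$ once the constant hidden in $n = \Omega(\tfrac1\alpha k\log(kr))$ is taken large enough (this also makes $\ell < n/2$). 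So the task becomes a three-way separation of the scores of (a) the ``signal'' keys $U_0$, which must end up above the algorithm's threshold $\overline{C} + \tilde{\Omega}(r/k)$, (b) the ``shell'' keys $L\setminus U_0$, about which nothing need be proved, and (c) the ``bulk'' keys $N\setminus L$, which must end up below the threshold.

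I would first dispose of the bulk. For a key $x$ of priority rank larger than $\ell$ in $N$ and any sampling rate $q$ in the support of Algorithm~\ref{samplerate:algo}, a Chernoff bound (using $q\ell = \Omega(k\log(kr))$ by the choice of $\ell$) shows that, conditioned on $x\in U$, at least $k$ keys of lower priority than $x$ also lie in $U$, except with probability $(kr)^{-\Omega(k)}$; on this event $x$'s priority is not among the $k$ lowest in $U$, so $S_\rho(U) = S_\rho(U\setminus\{x\})$ and the presence of $x$ cannot affect $Z$. Consequently every bulk key has the same score law up to a $(kr)^{-\Omega(k)}$ error, with mean a common value $\mu_0$ (the expected per-step response averaged over steps, times $r$); since more than half the keys of $N$ are bulk keys, a concentration-plus-union-bound argument places the median $\overline{C}$ within $o(r/k)$ of $\mu_0$ and places every bulk key's score below $\mu_0 + \tilde{\Omega}(r/k)$, hence below the threshold.

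The heart of the proof is the score of a signal key. I would prove the lemma that for every step, every map $\pi$ that is \emph{correct} (Definition~\ref{correctmap:def}) and \emph{symmetric} (Definition~\ref{symmetric:def}), and every $x\in U_0$,
\[
\E_{q,U}[\pi(S_\rho(U)) \mid x\in U] \;-\; \E_{q,U}[\pi(S_\rho(U))] \;=\; \tilde{\Omega}(1/k)\,.
\]
The mechanism: Algorithm~\ref{samplerate:algo} samples $q$ over a range for which the true cardinality $|U| = \Binom(n,q)$ spreads across the band $[A,2A]$, so correctness forces $\pi$ to pass from response $\approx 0$ to response $\approx 1$; but $\pi$ only sees the sketch, whose ``apparent cardinality'' $k/T(S_\rho(U)) \approx |U|\cdot \tfrac{k}{k-|U\cap U_0|}$ is perturbed by the number $|U\cap U_0|$ of signal keys present, whose $\Theta(\sqrt{k})$ fluctuation induces a $\Theta(A/\sqrt{k})$ fluctuation in the apparent cardinality that dominates the $\Theta(\sqrt{A})$ counting noise because $n\gg k$. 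Hence the transition of $\pi$, read back in $|U|$-space, is smeared over width $\Omega(A/\sqrt{k})$ and is increasing in $|U\cap U_0|$ inside it; conditioning on $x\in U$ raises $|U\cap U_0|$ by $\approx 1$, shifting $\pi$ by $\Theta(1/\sqrt{k})$ on the $\Theta(1/\sqrt{k})$-fraction of queries falling in the transition zone, for a net $\Theta(1/k)$ that is independent of how $\pi$ chooses to smear its transition, so $\pi$ cannot escape it. The role of symmetry is precisely to rule out the evasion behind Lemma~\ref{adaptivenecessary:lemma}: a non-symmetric $\pi$ could answer using only an $O(\log(r/\delta))$-size sub-component of the sketch, which is insensitive to the many keys of $U_0$ whose priority lands outside that sub-component; symmetry forces $\pi$ to treat all $k$ components alike, so its answer genuinely reflects the contribution of every key of $U_0$. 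Summing over the $r$ steps yields $\E[C[x]] \ge \mu_0 + \tilde{\Omega}(r/k)$ for $x\in U_0$, a constant factor above the threshold offset used by the algorithm.

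Finally I would pass from expectations to high probability. Each $C[x] = \sum_{i=1}^r Z_i\mathbf{1}\{x\in U_i\}$ is a sum of bounded increments, so Bernstein's inequality gives deviations $O(\sqrt{r}\,\mathrm{polylog}(kr))$; choosing $r = \tilde{\Omega}(k^2/\alpha^2)$ makes the signal gap $\tilde{\Omega}(r/k)$ exceed this deviation, so a union bound over the $\le n$ keys yields $U_0\subseteq M\subseteq L$ with the claimed probability, and with the parameter choices in the statement also $|M|\le\ell<\alpha n$. I expect two points to be the real obstacles. The first is making the core lemma airtight against a fully adversarial correct symmetric $\pi$ that also enjoys the enhanced sketch and the post-hoc disclosure of $U$: one must rule out $\pi$ exploiting non-monotone behaviour in the unconstrained band $(A,2A)$, or learning $U_0$ from the transcript and anti-correlating against it; this is where the exact design of Algorithm~\ref{samplerate:algo} (isolating a genuinely monotone transition adjacent to a pinned region) and a level-by-level comparison of cardinalities are needed. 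The second, exactly as flagged for Theorem~\ref{stattackutility:thm}, is that the $r$ queries are all subsampled from the same $N$ under the same $\rho$, so the sketches $S_\rho(U_i)$ — and hence the increments to a fixed $C[x]$, and the scores of distinct keys — are dependent; I would handle this by conditioning on $\rho$ and on the high-probability ``generic'' event that the priorities in $N$ are well separated, under which the per-query increments become conditionally independent and the cross-key dependence is absorbed into the same event and union bound.
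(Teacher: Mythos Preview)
Your overall architecture matches the paper's: identify the signal set $U_0$ (the paper's $N_0^*$) and shell $L$ (the paper's $N_0$), show bulk/transparent keys share a common score law so the median sits near their mean, establish a $\tilde\Omega(1/k)$ per-step scoring gap for each signal key, and finish with Chernoff-type concentration plus a union bound to conclude $N_0^*\subseteq M\subseteq N_0$ and hence $S_\rho(M)=S_\rho(N)$ with $|M|\le|N_0|=\tilde O(k)<\alpha n$.

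The one substantive divergence is in how the core lemma is obtained, and here your sketch has a gap. The paper does \emph{not} argue the per-key gap directly. It first proves an \emph{average} statement (Lemma~\ref{scoregap:lemma}): the mean scoring probability over all of $N_0$ exceeds that of a transparent key by $\Omega(1/|N_0|)$, and this holds for \emph{any} correct map with no use of symmetry. Symmetry enters only in a second step (proof of Lemma~\ref{symmetricsep:lemma}), and it is used through \emph{both} clauses of Definition~\ref{symmetric:def}: monotonicity guarantees that within a component a lower-rank key has scoring probability at least that of any higher-rank key, and the unordered-set clause gives exchangeability across components; together these force each key of $N_0^*$ to have scoring probability at least the $N_0$-average, so it inherits the $\tilde\Omega(1/k)$ gap. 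Your discussion of symmetry (``forces $\pi$ to treat all $k$ components alike'') captures only the unordered clause. Without monotonicity, a map that is symmetric across components but non-monotone within them could concentrate the aggregate gap on higher-rank keys of $N_0\setminus U_0$ rather than on $U_0$, and your direct heuristic via $|U\cap U_0|$ does not by itself exclude this. The paper's average-then-transfer route sidesteps the issue cleanly; if you want to keep your direct argument you will need to invoke monotonicity explicitly at the point where you pass from ``the response increases with $|U\cap U_0|$'' to ``the response increases when this particular $x\in U_0$ is included''.
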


\subsection{Proof Overview} \label{onebatchhighlights:sec}
See Appendix~\ref{genattackproofs:sec} for details.
We work with the \emph{rank-domain} representations of the sketches with respect to the ground set $N$.  This representation simplifies our analysis as it only depends on the rank order of keys by their hash values and by that factors out the hash values.
The rank-domain sketches $S^R(U)$ have the form
$(Y_1,\ldots,Y_k)$ where $Y_i$ are positive integers in $[N]$.
The sketch distribution over the sampling of $U$ for fixed $q$ is that of $k$ independent $\Geom[q]$ random variables $Y_i$.  
The sum $T = \sum_{i=1}^k Y_i$ is
a sufficient statistics for $q$ from the sketch.

\begin{definition} [symmetric map]\label{symmetric:def}
    A map $\pi$ is \emph{symmetric} if it uses the rank-domain sketch as an \emph{unordered} set and (ii)~is monotone in that if a sketch $S_1 \leq S_2$ coordinate-wise then then $\pi(S_1) \geq \pi(S_2)$.
\end{definition}

We denote by 
$N_0^*$ the set of $k$ lowest-rank (lowest priority) keys. It includes the bottom-$k$ keys with bottom-$k$ sketches, and the minimum hash key with respect to each of the $k$ hashmaps with $k$-mins and $k$-partition sketches. Note that $S_\rho(N) = S_\rho(N_0^*)$. We denote by $N'\subset N$ a set of keys that
are \emph{transparent} -- very unlikely to influence the sketch if included in the attack subsets of Algorithm~\ref{onebatchgen:algo}. 
We show that a key in $N_0^*$ obtains in expectation a higher score than a key in $N'$ and use that to establish the utility claim:
\begin{lemma}  [separation with symmetric maps] \label{symmetricsep:lemma}
Let $\pi$ be correct and symmetric  (Definition~\ref{symmetric:def}). Then for any $m\in N^*_0$ and $u\in N'$,
{\small
\[
\E_{U}[\pi(S_\rho(U)\cdot \mathbf{1}(m\in U)] - \E_{U}[\pi(S_\rho(U)\cdot \mathbf{1}(u\in U)] = \tilde{\Omega}(1/k)
\]}
\end{lemma}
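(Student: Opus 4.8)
The plan is to compare the contribution of a low-priority key $m \in N_0^*$ with that of a transparent key $u \in N'$ to the score, where the ``score'' of a key $x$ is $r$ times $\E_U[\pi(S_\rho(U)) \cdot \mathbf{1}(x \in U)] = q \cdot \E_U[\pi(S_\rho(U)) \mid x \in U]$ (since inclusion is independent with rate $q$, and by linearity over the $r$ rounds, with $q$ the sampling rate drawn in Algorithm~\ref{samplerate:algo}; I will suppress the averaging over $q$ and carry out the argument at a fixed well-chosen rate, then observe the rate distribution is supported near the value that makes $|U|$ straddle $[A,2A]$). The key is that conditioning on $u \in N'$ essentially does not change the sketch distribution — $u$ is transparent, so $S_\rho(U) \mid u \in U$ is statistically close to the unconditional $S_\rho(U)$ — whereas conditioning on $m \in N_0^*$, say on the event that $m$ \emph{is} one of the keys realizing the rank-domain sketch, shifts the sketch to be coordinate-wise smaller, i.e. makes the set look \emph{larger}. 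Since $\pi$ is monotone (decreasing under coordinate-wise decrease of the sketch), this shift can only \emph{increase} $\E[\pi]$, and I need to show the increase is $\tilde\Omega(1/k)$.

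The core calculation I would carry out: pick the sampling rate $q$ so that a uniformly random $U \subset N$ has cardinality concentrated just below $A = n/16$ — so an unconditioned correct $\pi$ must report $Z=1$ with probability at most $\delta + \Pr[|U| \ge A]$, which I will arrange to be small (say $O(1/\sqrt k)$). Now condition on $m \in U$ \emph{and} on $m$ occupying one of the $k$ sketch slots (equivalently, having rank-domain value smaller than some threshold among the keys of $U$): this event has probability $\Theta(1/k)$-ish given $m \in U$ (since $m \in N_0^*$ already has one of the $k$ smallest hash values in all of $N$, it is very likely in the sketch of $U$ whenever $m \in U$), and on this event the corresponding sketch coordinate $Y_i$ is replaced by the small value $\mathrm{rank}_N(m)$, which by definition of $N_0^*$ is $O(k)$ — effectively forcing one coordinate of the rank-domain sketch down to $O(k)$ while the others behave as $\Geom[q]$ with mean $1/q \approx n/A = \Theta(1)$... wait, I need $1/q$ large, so I should take $q \approx k/n$ so $1/q = \Theta(n/k)$ and the "honest" coordinates have mean $\Theta(n/k)$ while the forced coordinate is $O(k) \ll n/k$. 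This makes the sum $T$ drop by a $\Theta(1)$ multiplicative factor of its per-coordinate scale with probability $\Omega(1)$ conditioned on $m$ being slotted — enough to push the \emph{effective cardinality} the sketch represents up past $2A$ — at which point correctness forces $\E[\pi \mid \text{this event}] \ge 1 - \delta$. Combining: $\E_U[\pi \cdot \mathbf{1}(m\in U)] - \E_U[\pi \cdot \mathbf{1}(u \in U)] \ge q \cdot \big( \Omega(1/k)(1-\delta) - (\delta + o(1)) \big)$, and choosing $A, q$ and the rate distribution so the unconditioned $\E[\pi]$ stays $o(1/k)$ gives the claimed $\tilde\Omega(1/k)$ gap (the $q$ factor and logs absorbed into the $\tilde\Omega$).

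The main obstacles I anticipate are two. First, making the event "$m$ occupies a sketch slot of $U$ and this pushes the represented cardinality above $2A$" have probability $\Omega(1/k)$ \emph{simultaneously}: the first part is easy ($m \in N_0^*$), but I need the single forced small coordinate to actually move the sufficient statistic $T$ enough, which requires the other $k-1$ coordinates not to be anomalously small — true with constant probability by concentration of a sum of $\Geom[q]$ variables, so this is a routine but careful second-moment or Chernoff argument. Second, the subtlety that $\pi$ may be a different correct map in the conditional world versus giving the same answer — but here symmetry is exactly what saves us: because $\pi$ treats the rank-domain sketch as an unordered multiset and is monotone, the value $\pi(S)$ depends on $S$ only through order-statistics that are genuinely smaller on the $m$-slotted event, so monotonicity transfers the cardinality-correctness bound pointwise. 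I would also need to handle the rate-mixture honestly rather than a single $q$: average the above over the rate distribution $\mathcal{D}$ of Algorithm~\ref{samplerate:algo}, which is designed precisely so that a constant fraction of its mass lands in the regime where the honest cardinality is $\approx A$ and the $m$-slotted cardinality is $\approx 2A$, preserving the $\tilde\Omega(1/k)$ bound after averaging. Everything else — the translation from this expected-score gap to the final $S_\rho(M) = S_\rho(N)$, $|M| < \alpha n$ conclusion of Theorem~\ref{onebatchutility:thm} — is the concentration-over-$r$-rounds step already outlined after the lemma and is not needed for the lemma itself.
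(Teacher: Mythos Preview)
Your approach has a genuine gap at its core step. You want to fix a rate $q$ and argue that conditioning on $m\in U$ (with $m\in N_0^*$) forces one rank-domain coordinate small enough to push the sketch's effective cardinality from below $A$ to above $2A$. But forcing a single coordinate of the rank-domain sketch $(Y_1,\dots,Y_k)$ changes the sufficient statistic $T=\sum_i Y_i$ by only $\Theta(1/q)$, while $T$ has mean $k/q$ and standard deviation $\sqrt{k}/q$. That is a relative shift of order $1/k$ (about $1/\sqrt{k}$ standard deviations), nowhere near the factor-$2$ change in $T$ needed to move the implied cardinality across the $[A,2A]$ gap. Your own parameter juggling reflects this: with $q\approx A/n$ the ``honest'' coordinates already have mean $\Theta(1)$, so the forced coordinate is not dramatically smaller; with $q\approx k/n$ you get $|U|\approx k \ll A$, so both the conditioned and unconditioned sketches sit deep in the ``below $A$'' regime. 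At no fixed $q$ does the single-coordinate shift produce a threshold crossing with $\Omega(1)$ probability, so the claimed $q\cdot\Omega(1/k)\cdot(1-\delta)$ contribution never materializes.

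The paper's argument is structurally different and hinges on the rate distribution of Algorithm~\ref{samplerate:algo}, which you planned to handle only as an afterthought. The paper first proves an \emph{average} gap over all of $N_0$ (Lemma~\ref{scoregap:lemma}): writing $p_0(\pi)-p'(\pi)$ as a double integral over $q$ and the normalized deviation $z$, it shows (Claim~B.3) that shifting $T$ by $\Delta$ standard deviations changes the $q$-averaged score probability by $\Theta(\Delta/\sqrt{k})$. This step requires the rate to be spread over the interval where $\overline\pi$ transitions from $1$ to $0$; it is exactly what fails at a single fixed $q$. Integrating against $z\,f_Z(k;z)$ then gives $(\sqrt{k}/|N_0|)\cdot\Theta(1/\sqrt{k})=\Theta(1/|N_0|)=\tilde\Omega(1/k)$. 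Only \emph{then} does symmetry enter, and in a different way than you use it: monotonicity in rank implies the scoring probability of each $m\in N_0^*$ is at least the average $p_0(\pi)$ over $N_0$, upgrading the average gap to a per-key gap. Your use of monotonicity (``pointwise transfer of the correctness bound'') is not wrong in spirit, but it cannot substitute for the missing rate-averaging calculation.
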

\begin{proof}
  See Appendix~\ref{symmetriccase:sec}. The gap only holds on average for general correct maps (Lemma~\ref{scoregap:lemma}) but holds per-key when specialized to symmetric maps.
\end{proof}
\begin{proofof}{Theorem~\ref{onebatchutility:thm}}
The distribution of the score $C[x]$ of all transparent keys $u\in N'$ is identical and is the sum of $r$ independent Poisson random variables 
$\sum_{i=1}^r Z_i\cdot \Bern[q_i]$.  

From Lemma~\ref{symmetricsep:lemma} 
$\E[C[m]] - \E[C[u]] = \tilde{\Omega}(r/k)$. 
For $x\in N^*_0$, the gap random variables of different steps may be dependent, but the lower bound on the expected gap in Lemma~\ref{symmetricsep:lemma} holds even conditioned on transcript. Additionally, the expected gap in each step is bounded in $[-1,1]$. We 
can apply Chernoff bounds~\citep{Chernoff52} to bound the probability that a sum deviates by more than $\lambda$ from its expectation 
\begin{equation} \label{Chernoff:eq}
\Pr[|C[x]-\E[C[x]]| \geq \lambda ] \leq 2e^{-2\lambda^2/r}\ . 
\end{equation}
Setting $\lambda = cr/k$ separates a key in $N_0^*$ from a key in $N'$ with probability $1- 2e^{-2c^2 r/ k^2}$. 
Choosing $r=O(k^2\log |N|)$ we get that the order separates out with high probability all the keys $N^*_0$ from all the keys $N'$. 
Note that there are only 
$\tilde{\Omega}(k)$ non transparent keys $N_0 :=N\setminus N'$. Therefore with high probability $N^*_0\subset M\subset N_0$ and (we can fix constants so that) $|M|=\tilde{O}(k)\leq \alpha n$. Since $N^*_0\subset M$, $S_\rho(M) = S_\rho(N)$. 
\end{proofof}

\section{Adaptive Attack on General QR} \label{generalmain:sec}

\begin{algorithm2e}[t]\caption{\small{\texttt{Adaptive Attacker}} \label{adaptivegen:algo}}
{\small
\DontPrintSemicolon
\KwIn{$\rho$,  $n$, $r$}
Select a set $N$ of $n$ keys\tcp*{Randomly from $\mathcal{U}$}
$A \gets n/16$; $M\gets \emptyset$\;
\lForEach(\tcp*[f]{initialize}){key $x\in N$}{$C[x] \gets 0$}
\For{$i=1,\ldots,r$}
{
Sample $U\sim \mathcal{D}_0$ as in Algorithm~\ref{onebatchgen:algo}\;
\textbf{send} $M\cup U$ to system\;
\textbf{receive} $Z$ from QR\;
\lIf{\textbf{failure}}{\textbf{exit}}
\ForEach(\tcp*[f]{score keys}){key $x\in U$}{$C[x] \gets C[x]+Z$\;
\If(\tcp*[f]{test if score is high}){$C[x] \geq  \textrm{median}(C[N\setminus M]) + \sqrt{i \log(200 n r)/2}$}{$M\gets M\cup\{x\}$}}
\textbf{send} $M,C,U$ to QR \tcp*{share internal state}
}
\Return{$M$}\tcp*{Mask}
}
\end{algorithm2e}

An attack on general QR algorithms is given in Algorithm~\ref{adaptivegen:algo}. 
The attacker  maintains an initially empty set $M\subset N$ of keys which we refer to as \emph{mask}. The query sets have the form $M\cup U$, where $U\sim \mathcal{D}_0$ is sampled and scored as in Algorithm~\ref{onebatchgen:algo}.
A key is added to $M$ when its score separates out from the median score. 
We establish the following:
\begin{theorem} [Utility of Algorithm~\ref{adaptivegen:algo}] \label{adaptivegenutility:thm}
     For $\alpha > 0$, set $n=\Omega(\frac{1}{\alpha} k \log(kr))$ and $r= \tilde{\Omega}\left(\frac{k^2}{\alpha^2} \right)$. 
     Then with probability at least $0.99$, $|M|<\alpha n$ and there is no correct map for the query distribution $M\cup U$ where $U\sim\mathcal{D}_0$ is as in Algorithm~\ref{onebatchgen:algo}.
\end{theorem}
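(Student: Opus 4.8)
The plan is to establish the two assertions — $|M|<\alpha n$ and the non-existence of a correct map for the terminal query distribution $\{M\cup U : U\sim\mathcal D_0\}$ — on a single event of probability at least $0.99$, after first reducing the second assertion to a structural statement about which low-priority keys $M$ has absorbed.

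\emph{Reductions.} Since every query set lies in the fixed ground set $N$ and the sketch is coordinate-wise anti-monotone under adding keys, we have the sandwich $S_\rho(M)\ge S_\rho(M\cup U)\ge S_\rho(N)$ for every $U\subseteq N$; moreover $S_\rho(M)=S_\rho(N)$ precisely when $M$ contains the set $N_0^*$ of the $k$ lowest-priority keys — e.g., once a hash coordinate's globally lowest-priority key enters $M$ that coordinate of the sketch is frozen. When $S_\rho(M)=S_\rho(N)$ the sandwich forces $S_\rho(M\cup U)=S_\rho(M)$ for all $U\subseteq N$, so QR receives a fixed sketch and any map $\pi$ returns a constant $p\in[0,1]$; since Algorithm~\ref{samplerate:algo} with $A=n/16$ produces, when $|M|<\alpha n$ for a small constant $\alpha$, queries with $|M\cup U|<A$ with positive probability and queries with $|M\cup U|>2A$ with positive probability, correctness (Definition~\ref{correctmap:def}) would force both $p\le\delta$ and $p\ge 1-\delta$ — impossible since $\delta=O(1/\sqrt k)<1/2$. (More generally, a likelihood-ratio argument as in Remark~\ref{softthreshold:rem} shows it is already enough that $M$ absorb all but $O(\log(1/\delta))$ of $N_0^*$, since that leaves too few informative coordinates to soft-threshold at confidence $1-\delta$.) Thus it suffices to prove that, on the good event, $|M|<\alpha n$ and either QR reports \textbf{failure} at some step — in which case, by definition of failure, there is no correct map for the then-current and hence final $M\cup U$ — or $M\supseteq N_0^*$ at termination.

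\emph{Bounding the mask.} Partition $N$ into the transparent keys $N'$ (those whose inclusion almost surely never changes any sketch arising in the interaction) and $N_0=N\setminus N'$, with $|N_0|=\tilde O(k)$. The running score $C[x]$ of a transparent key is a sum of $[0,1]$-valued increments whose conditional means coincide with those of the reference $\mathrm{median}(C[N\setminus M])$, so by Hoeffding's inequality against the anytime threshold $\sqrt{i\log(200nr)/2}$ of Algorithm~\ref{adaptivegen:algo} and a union bound over at most $n$ keys and $r$ steps, with probability at least $0.995$ no transparent key is ever added to $M$. Hence $M\subseteq N_0$ throughout, giving $|M|\le |N_0|=\tilde O(k)\le\alpha n$ for $n=\Omega(\tfrac1\alpha k\log(kr))$; this also makes $N\setminus M$ overwhelmingly transparent, so the median reference tracks transparent behaviour throughout.

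\emph{Progress of the mask.} Assume QR never fails. The key point is that the active low-priority keys at step $i$ are exactly $A_i:=N_0^*\setminus M_i$: a subset of the \emph{fixed} set $N_0^*$ that shrinks monotonically, since a key leaves $A_i$ only by being absorbed into $M$. Whenever $A_i\ne\emptyset$ we have $S_\rho(M_i)\ne S_\rho(M_i\cup N)$, QR must supply a correct map for the current distribution $M_i\cup U$, and Lemma~\ref{scoregap:lemma} (the on-average relaxation of the per-key Lemma~\ref{symmetricsep:lemma}) gives that the keys of $A_i$ collectively accrue expected one-step excess over the median of order $\tilde\Omega(|A_i|/k)$, a bound that holds conditioned on the transcript and with per-step increments in $[-1,1]$. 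Summing these conditional-mean lower bounds over the steps at which a given $m\in N_0^*$ is active and applying an Azuma--Hoeffding bound to control the $O(\sqrt{i\log(nr)})$ fluctuation, each $m\in N_0^*$ accumulates score exceeding the reference median by more than the threshold after $\tilde O(k^2/\alpha^2)$ steps and is absorbed into $M$; charging the at most $k$ absorption events against the $\tilde\Omega(r/k)$ cumulative group signal shows that $r=\tilde\Omega(k^2/\alpha^2)$ steps suffice for $M_r\supseteq N_0^*$. Tuning the polylogarithmic factors and constants so that all the union bounds total at most $0.01$, the good event has probability at least $0.99$ and carries both conclusions.

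\emph{Main obstacle.} The delicate step is the last one: extracting a \emph{quadratic} query bound from a separation guarantee that, unlike the symmetric case, holds only on average over $N_0^*$, while the mask, the query distribution, and QR's chosen maps all evolve adaptively (and QR sees the attacker's full internal state). The resolution I would lean on is precisely the structural observation above — the active set is always contained in the fixed set $N_0^*$ and can only shrink — so the entire accounting can be carried out over $N_0^*$, with the group signal ``following'' the surviving active keys rather than being diluted against a moving target, and one may stop as soon as $|A_i|$ drops below $\Theta(\log(1/\delta))$. A secondary nuisance, that the median appearing in the threshold must behave like a transparent key's score throughout, is neutralized by $M\subseteq N_0$ together with $|N_0|=\tilde O(k)\ll n$.
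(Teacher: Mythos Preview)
Your reductions and the mask-size bound are fine and match the paper. The gap is in \emph{Progress of the mask}.

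You define the active set as $A_i := N_0^* \setminus M_i$ and then assert that Lemma~\ref{scoregap:lemma} gives the keys of $A_i$ collective one-step excess $\tilde\Omega(|A_i|/k)$. But that lemma (and its masked version, Lemma~\ref{scoregapmask:lemma}, which is the one that applies once $M\neq\emptyset$) bounds the aggregate excess over the set $N'_0$ of \emph{all} keys that are non-transparent given the current mask --- a set of size $\tilde\Theta(k')$ that in general strictly contains $A_i$. For instance, in a $k$-mins sketch, if part $i$ has no masked key then $m^i_1,\ldots,m^i_L$ are all in $N'_0$, not just $m^i_1\in N_0^*$. Since QR is not assumed symmetric, nothing prevents it from concentrating the entire $\tilde\Omega(1)$ aggregate gap on keys of $N'_0\setminus N_0^*$, leaving every key of $A_i$ with zero drift. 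So your claim that each $m\in N_0^*$ eventually separates from the median, and hence that $M\supseteq N_0^*$ at termination, is not supported by the lemma you cite.

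The paper sidesteps this by accounting over all of $N_0$ rather than over $N_0^*$: whenever a correct map exists the aggregate one-step excess over $N'_0$ is $\tilde\Omega(1)$, so over $r$ steps the total accumulated excess is $\tilde\Omega(r)$; but each key stops drifting once its excess reaches the threshold $\tilde O(\sqrt{r})=\tilde O(k)$ and is placed in $M$, so the total is capped at $|N_0|\cdot\tilde O(k)=\tilde O(k^2)$. With $r=\tilde\Omega(k^2/\alpha^2)$ and suitable polylog factors this is a contradiction, so QR must report failure before step $r$. The mechanism is exactly the one your argument misses: QR may route the signal through $N'_0\setminus N_0^*$ keys first; those keys then enter $M$, which shrinks $N'_0$ (and lowers the effective $k'$), eventually either forcing the signal onto the remaining $N_0^*$ keys or dropping $k'$ below the level at which any correct map exists. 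Your ``structural observation'' that $A_i\subset N_0^*$ shrinks monotonically is true but does not justify restricting the accounting to $N_0^*$.
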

We overview the proof with details deferred to Appendix~\ref{genattackproofs:sec}. 
The condition for adding a key to $M$ is such that 
with probability at least $0.99$, only $N_0$ keys are placed in $M$, so $|M| \leq \alpha n$
(Claim~\ref{notransparentinM:claim}).
If the QR algorithm fails, there is no correct map for the distribution $M\cup\mathcal{D}_0$.\footnote{A situation of no correct maps can be identified by Attacker, by tracking the error rate of QR, even if not declared by QR.} It remains to consider the case where the attack is not halted.

Since the mask $M$ is shared with QR ``for free,'' QR only needs to estimate $|U|$ (or $q$). But the sketch of $M\cup U$ partially masks the sketch of $U$. The set of non-transparent keys $N'_0\subset N_0$ decreases as $M$ increases. Additionally, the \emph{effective} sketch size $k' \leq k$ is lower (that is, QR only obtains $k'$ i.i.d\ $\Geom[q]$ random variables). Recall that when $k' < \log(k)/2$, there is no correct map. 

With general correct maps, we can only establish a 
weaker 
\emph{average} version of the score gap over $N'_0$ keys.
This allows some $N'_0$ keys to remain indistinguishable by score from transparent keys. But what works in Attacker's favour is that in this case the score of other $N_0$ keys must increase faster. 
Let $p(\pi,M,x)$ be the probability that key $x$ is scored with map $\pi$ and mask $M$.
The probability is the same for all transparent keys  $x\not\in N'_0$ and we denote it by $p'(\pi,M)$.
We establish (see Lemma~\ref{scoregapmask:lemma}) that for a  correct map $\pi$ for $M\cup\mathcal{D}_0$ it holds that
\[
\sum_{x\in N'_0} \left( p(\pi,M,x) - p'(\pi,M) \right) = \tilde{\Omega}(1)\ .
\]
Therefore, in $r=\tilde{O}(k^2)$ steps, the combined score advantage of $N_0$ keys is (concentrated well around) $\tilde{O}(k^2)$.
But crucially, any one key can not get too much advantage: once $C(x) - \overline{C} = \tilde{\Omega}(k)$ (where $\overline{C}$ is the median score), then key $x$ is placed in the mask $M$, exits $N'_0$, and stops getting scored. Therefore if QR does not fail,  $\tilde{\Omega}(r/k) > |N_0|$ keys are eventually placed in $M$, which must include all $N_0$ keys.

\section{Conclusion}
We demonstrated the inherent vulnerability of the known composable cardinality sketches to adaptive inputs. We designed attacks that use a number of queries that asymptotically match the upper bounds: A linear number of queries with the ``standard'' estimator and a quadratic number of queries with \emph{any} estimator applied to the sketch. Empirically, our attacks are simple and effective with small constants. An interesting direction for further study is to show that this vulnerability applies with any composable sketch structure.  On the positive side, we 
suspect that restricting the maximum number of queries that any one key can participate in to sublinear (with standard estimators) or subquadratic (with general estimators) would enhance robustness.

\newpage
\section*{Acknowledgements}
The authors are grateful to Jelani Nelson and Uri Stemmer for discussions. Edith Cohen is partially supported by Israel Science Foundation (grant no. 1156/23).


\bibliographystyle{plainnat}
\bibliography{main,references,robustHH}

\onecolumn
\newpage
\appendix

\section{Analysis of the Attack on the Standard Estimators} \label{kminsattack:sec}
This section includes the proof of Theorem~\ref{stattackutility:thm}.  
We first consider
$k$-mins sketches and $T(S) = \|S\|_1$. The modification needed for bottom-$k$ sketches are in Section~\ref{standardbottomk:sec}.

\subsection{Preliminaries}

The following are order statistics properties useful for analysing MinHash sketches.
Let $X_i \sim \Exp[1]$ for $i\in[n]$ be i.i.d.\ random variables.
Then the distribution of the minimum value and of the differences between the $i+1$ and the $i$th order statistics (smallest values) are independent random variables with distributions
\begin{align*}
    \Delta_1 &:= \min_{i\in [n]} X_i \sim \Exp[n] \\
    \Delta_i &:= \{ X_i \}_{(i+1)} - \{ X_i \}_{(i)} \sim \Exp[n-i] & i>1
\end{align*}

\begin{lemma}[Chebyshev's Inequality] \label{chebyshev:lemma}
\[
\Pr\left[ |Z-\E[Z]| \leq c\sigma^2 \right] \leq 1/c^2\ .
\]
\end{lemma}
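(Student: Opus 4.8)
The plan is to derive this from Markov's inequality applied to the squared centered variable, so essentially no new idea is needed; I record it here mainly because it is invoked as a black box in the analysis of the attack rounds. First I would state Markov's inequality: for any nonnegative random variable $W$ and any $a>0$, we have $\Pr[W\ge a]\le \E[W]/a$, which follows from the pointwise bound $a\cdot\mathbf{1}\{W\ge a\}\le W$ and taking expectations.

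Next I would apply this with $W:=(Z-\E[Z])^2$, which is nonnegative with $\E[W]=\Var[Z]$. For a deviation threshold $t>0$, since $|Z-\E[Z]|\ge 0$ and squaring is monotone on the nonnegative reals, the events $\{|Z-\E[Z]|\ge t\}$ and $\{W\ge t^2\}$ coincide. Hence Markov's inequality gives $\Pr[|Z-\E[Z]|\ge t]=\Pr[W\ge t^2]\le \Var[Z]/t^2$. Writing $\sigma^2=\Var[Z]$ and substituting $t=c\sigma$ (with the degenerate case $\sigma=0$ handled separately, where $Z$ is almost surely constant and the bound is trivial) yields $\Pr[|Z-\E[Z]|\ge c\sigma]\le 1/c^2$, which is the asserted estimate (the displayed statement should be read with $\ge$ in place of $\le$ inside the probability and with $\sigma$ the standard deviation, equivalently $\Pr[|Z-\E[Z]|\ge t]\le \Var[Z]/t^2$).

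The hard part is genuinely nonexistent: the only points requiring any care are the standard degenerate case $\Var[Z]=0$ and the (trivial) verification that the two events coincide. Since downstream the lemma is only used to control the number of attack rounds via a second-moment bound, no sharper or bespoke version is needed, so I would keep the proof to these two short steps.
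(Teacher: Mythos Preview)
Your proof is the standard Markov-to-Chebyshev derivation and is correct; you also correctly diagnose the typos in the displayed statement (the inequality inside the probability should be $\ge$, and the threshold should be $c\sigma$ rather than $c\sigma^2$). The paper does not actually supply a proof of this lemma at all --- it is recorded as a named black box and invoked later --- so there is nothing to compare against beyond noting that your argument is the textbook one.
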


We set some notation:
For a fixed ground set $N$ and randomness $\rho$, 
for each hash function $i\in [k]$, let $m^i_j\in N$ be the key with the $j$th rank in the $i$th hashmap, that is, $h_i(m^i_j)$ is the $j$th smallest in 
$\{h_i(u)\}_{u\in N}$.
Let
\begin{align*}
L &:=\log_2(rk) + 10 \\
N^i_0 &:= \{m^i_j\}_{j\leq L}\\
N_0 &:= \bigcup_{i\in [k]} N^i_0\\
N' &:= N \setminus N_0  
\end{align*}
be a rank threshold $L$, for $i\in [k]$ the set $N^i_0$ of keys with rank up to $L$ in the $i$th hashmap, the set $N_0$ that is the union of these keys across hashmaps, and the set $N'$ of the remaining keys in $N$.

We show that a choice of $n=O(k L/\alpha)$ ensures that 
certain properties that simplify our analysis hold.  
Our analysis applies to the event that these properties are satisfied:
 \begin{lemma} [Good draws] \label{goodNrho:lemma}
For $n= \Omega(\frac{1}{\alpha} k \log(r k))$, the following hold with probability at least $0.99$:
\begin{itemize}
    \item [p1] (property of $\rho$ and $N$) The keys $m^i_j$ for $i\in [k]$ and $j\leq L$ are distinct.
    \item [p2] In a run of Algorithm~\ref{standardattack:algo}, all $r$ steps, for all $i\in [k]$, $U$ includes a key from $N^i_0$.
    \item [p3] $n \geq 3kL/\alpha$
\end{itemize}
\end{lemma}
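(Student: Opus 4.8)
The plan is to treat p1, p2, and p3 as three essentially independent bad events, bound the probability of each separately, and take a union bound, exploiting the slack deliberately built into $L=\log_2(rk)+10$ and into the hidden constant in $n=\Omega(\tfrac1\alpha k\log(rk))$. Property p3 needs no probabilistic argument at all: since $L=\Theta(\log(rk))$, we have $3kL/\alpha=\Theta\big(\tfrac1\alpha k\log(rk)\big)$, so choosing the constant in $n=\Omega\big(\tfrac1\alpha k\log(rk)\big)$ large enough makes $n\ge 3kL/\alpha$ hold deterministically; it contributes $0$ to the failure probability. It thus remains to handle p1 and p2.

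For p2 I would fix $\rho$ and $N$ and argue over the internal randomness of Algorithm~\ref{standardattack:algo}. Within a single hashmap $h_i$ the $n$ hash values are a.s.\ distinct, so for each $i\in[k]$ the block $N^i_0=\{m^i_j\}_{j\le L}$ consists of exactly $L$ distinct keys, regardless of what happens across hashmaps. In a given step the query $U$ includes each key of $N$ independently with probability $\tfrac12$, so $\Pr[U\cap N^i_0=\emptyset]=2^{-L}$, and the $r$ steps use independent subset draws. A union bound over the $rk$ pairs (step, hashmap) then bounds the probability that p2 fails by $rk\cdot 2^{-L}=rk\cdot 2^{-\log_2(rk)-10}=2^{-10}<10^{-3}$ --- which is exactly the purpose of the additive $10$ in the definition of $L$.

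For p1 I would argue purely about $\rho$ and $N$. For $i\neq i'$ the hashmaps $h_i,h_{i'}$ are independent, and the rank of any fixed key under a single hashmap is uniform on $[n]$; hence for each $x\in N$ the events $x\in N^i_0$ and $x\in N^{i'}_0$ are independent of probability $L/n$ each, giving $\E[\,|N^i_0\cap N^{i'}_0|\,]=n(L/n)^2=L^2/n$. By Markov's inequality the two low-rank blocks intersect with probability at most $L^2/n$, and a union bound over the $\binom{k}{2}$ pairs of hashmaps bounds $\Pr[\text{p1 fails}]$ by $k^2L^2/(2n)$, which is at most, say, $3\cdot10^{-3}$ once $n$ is a sufficiently large constant times its stated lower bound; since distinctness within a hashmap is automatic, cross-hashmap disjointness of the blocks is exactly p1. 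Adding the three contributions, the total failure probability is at most $0+10^{-3}+3\cdot10^{-3}<10^{-2}$, which proves the lemma.

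I do not expect a genuine obstacle: this is a bookkeeping lemma, and the only care needed is (a) fixing the constants so the three bad-event bounds sum to below $0.01$ (the $+10$ in $L$ for p2, the leading constant in $n$ for p1 and p3), and (b) keeping straight which source of randomness each property concerns --- p1 and p3 are statements about $(\rho,N)$ only, while p2 additionally uses the per-step subset sampling, whose independence across the $r$ steps is what makes its union bound immediate. The only mildly substantive point, used implicitly in the rest of the analysis, is what p1 and p2 buy us: p1 caps $|N_0|$ at $kL$, and p2 guarantees that every query genuinely constrains every one of the $k$ coordinates of a $k$-mins sketch, so no coordinate is ever ``pinned'' to its value on $N$ during the attack.
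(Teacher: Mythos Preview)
Your approach is exactly the paper's: p3 is immediate from the constants, p2 is the union bound $rk\cdot 2^{-L}=2^{-10}$, and p1 is a birthday-type argument; your write-up is simply more explicit than the paper's two-line sketch, and your remarks about which source of randomness each property uses are correct.

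One quantitative slip to flag (which the paper's own proof shares): your bound $\Pr[\text{p1 fails}]\le k^2L^2/(2n)$ is valid, but with $n=C\cdot\tfrac{1}{\alpha}k\log(rk)$ and $L=\Theta(\log(rk))$ it evaluates to $\Theta(\alpha kL/C)$, which is \emph{not} driven below $3\cdot 10^{-3}$ by taking $C$ a large absolute constant --- it still grows with $k$. The paper's assertion that $n\ge \tfrac{2}{p}kL$ suffices for the birthday paradox has the same arithmetic gap (birthday with $kL$ items needs $n\gtrsim (kL)^2/p$, not $kL/p$). So either the hidden constant in $n$ must be allowed to depend on $kL$, or one observes that the downstream analysis really only needs $|N_0|\le kL$ and approximate independence, for which full distinctness is stronger than necessary.
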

\begin{proof}
p3 is immediate. For p1, note that if we set $n\geq \frac{2}{p} k L$ then the claim follows with probability $1-p$ using the birthday paradox.
For p2, the probability that a random $U$ does not include one of the $L$ smallest values of a particular hash function is $2^{-L}$. 
    The probability that this happens for any of the $k$ hash functions in any of the $r$ rounds is at most $r k 2^{-L}$. Substituting $L= \log(rk) + 10$ we get the claim.
\end{proof}

 For fixed $N$ and $\rho$, 
consider the random variable $Z := T(S_\rho(U))$ over sampling of $U$ and the contributions $Z_i$ of hash function $i\in[k]$ to $Z$.
\begin{align*}
     Z_i &:= \min_{x\in U} h_i(x) \\
     Z &:= \sum_{i\in [k]} Z_i
\end{align*}

For a key $u\in N$, we consider the random variables $Z_i \mid u\in U$ and $Z \mid u\in U$ that are conditioned on $u\in U$.
From property p1 in Lemma~\ref{goodNrho:lemma}, the $Z_i$ are independent and are also independent when conditioned on $u\in U$.

\subsection{Proof outline}
We will need the following two Lemmas (the proofs are deferred to Section~\ref{gapvarproofs:sec}). 
Intuitively we expect $\E_U[Z]$ to be lower when conditioned on $m^i_1\in U$. We bound this gap from below.
For fixed $\rho$ and $N$, let
\[
G(u,v) := \E_{U \mid u\in U}[Z] -\E_{U \mid v\in U}[Z]\ .
\]
\begin{lemma} [Expectations gap bound] \label{gap:lemma}
For each $i\in [k]$ and $\delta>0$, 
\[
\Pr_{\rho,N}\left[\min_{u\in N'} G(u,m^i_1) \geq  \frac{\delta}{3n}\right] \geq 1-\delta\ .
\]
\end{lemma}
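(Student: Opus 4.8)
## Proof Proposal for Lemma~\ref{gap:lemma} (Expectations gap bound)

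The plan is to fix $\rho$ and $N$ — so that all hash values, the keys $m^i_j$, and the order statistics of each $h_i$ over $N$ are determined — and to bound $G(u,m^i_1)$ coordinate by coordinate using linearity of expectation,
\[
G(u,m^i_1)=\sum_{j=1}^k\Big(\E_{U\mid u\in U}[Z_j]-\E_{U\mid m^i_1\in U}[Z_j]\Big).
\]
The entire useful (positive) contribution will come from the term $j=i$; the terms $j\neq i$ will contribute only a lower-order error that is absorbed by the slack built into the rank threshold $L=\log_2(rk)+10$.

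\emph{The $j=i$ term.} Since $m^i_1$ realizes the global minimum of $h_i$ over $N$, conditioning on $m^i_1\in U$ pins $Z_i=h_i(m^i_1)$, so $\E_{U\mid m^i_1\in U}[Z_i]=h_i(m^i_1)$. Conditioning instead on $u\in U$ — note $u\in N'$, hence $u\neq m^i_1$ and $h_i(u)>h_i(m^i_2)$ — I would split on whether $m^i_1\in U$: with probability $\tfrac12$ we again get $Z_i=h_i(m^i_1)$, and with probability $\tfrac12$ we have $U\subseteq N\setminus\{m^i_1\}$, which forces $Z_i\geq\min_{x\in N\setminus\{m^i_1\}}h_i(x)=h_i(m^i_2)$. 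So $\E_{U\mid u\in U}[Z_i]\geq\tfrac12 h_i(m^i_1)+\tfrac12 h_i(m^i_2)$, and the $j=i$ term is at least $\tfrac12\big(h_i(m^i_2)-h_i(m^i_1)\big)$, i.e.\ half the first spacing of the values $\{h_i(x)\}_{x\in N}$, which is distributed as $\tfrac12\,\Exp[n-1]$ and independent of $h_i(m^i_1)$. This bound does not involve $u$, so it is uniform over $u\in N'$.

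\emph{The $j\neq i$ terms.} Here I only need a matching lower bound on a possibly-negative term. One half is immediate: $\E_{U\mid m^i_1\in U}[Z_j]\leq\E_U[Z_j]$ for every key, because averaging over $m^i_1\in U$ and $m^i_1\notin U$ and using that deleting a key only raises a per-hash minimum gives this inequality. For the other half I would exploit transparency of $u$. Since $h_j(u)>h_j(m^j_L)$, on the event $\mathcal E_j:=\{m^j_1,\dots,m^j_L\notin U\}$ the key $u$ is not the minimizer, so $Z_j$ equals $\widetilde Z_j:=\min_{x\in U\setminus\{u\}}h_j(x)$ there; moreover $\widetilde Z_j$ does not depend on $\{u\in U\}$ and is $\geq Z_j$ pointwise, and on $\mathcal E_j$ it further equals $W_j:=\min_{x\in U\setminus\{u,m^j_1,\dots,m^j_L\}}h_j(x)$ (with the harmless convention $\min\emptyset:=\max_{x\in N}h_j(x)$), a quantity independent of $\mathcal E_j$ since the two depend on disjoint sets of independent inclusion bits. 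Combining $Z_j\geq\widetilde Z_j\mathbf 1(\mathcal E_j^c)$ on $\{u\in U\}$ with $\mathbf 1(\mathcal E_j)\widetilde Z_j=\mathbf 1(\mathcal E_j)W_j$, $\Pr[\mathcal E_j]\leq 2^{-L}$, and these independences yields $\E_{U\mid u\in U}[Z_j]\geq\E_U[\widetilde Z_j]-2^{-L}\E_U[W_j]\geq\E_U[Z_j]-2^{-L}\E_U[W_j]$, where $\E_U[W_j]=O(1/n)$ is the expected minimum over a random half of $n-O(L)$ i.i.d.\ exponentials. Summing, for all $u\in N'$,
\[
G(u,m^i_1)\ \geq\ \tfrac12\big(h_i(m^i_2)-h_i(m^i_1)\big)\ -\ \xi,\qquad \xi:=2^{-L}\textstyle\sum_{j\neq i}\E_U[W_j].
\]

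\emph{Finishing, and the main obstacle.} The half-spacing $\tfrac12(h_i(m^i_2)-h_i(m^i_1))$ exceeds $\tfrac{5\delta}{12n}$ with probability $\geq1-\tfrac56\delta$ (from $\Exp[n-1]$), and $\E_{\rho,N}[\xi]=\Theta(k\,2^{-L}/n)=\Theta(2^{-10}/(rn))$ — this is where the ``$+10$'' in $L$ is spent — so by Markov, $\xi\leq\tfrac{\delta}{12n}$ except on a $\rho,N$-event of probability $O(2^{-10})/(r\delta)\leq\tfrac16\delta$, the last inequality using $r\delta^{2}=\Omega(2^{-10})$, which holds throughout the paper's regime ($\delta=\Theta(1/k)$ under a union bound over the $k$ hashmaps, $r=\tilde\Omega(k^2)$; for arbitrarily small $\delta$ at fixed $r$ one could instead enlarge the additive constant in $L$). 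On the intersection, $G(u,m^i_1)\geq\tfrac{5\delta}{12n}-\tfrac{\delta}{12n}=\tfrac{\delta}{3n}$ for every $u\in N'$, and the total failure probability is at most $\tfrac56\delta+\tfrac16\delta=\delta$. I expect the delicate step to be the $j\neq i$ terms: one must show not just that conditioning a query on a single transparent key cannot perceptibly depress the other $k-1$ per-hash minima, but that the aggregate contamination scales as $k\,2^{-L}/n$ and not, say, as $k\cdot(\text{largest hash value})\cdot2^{-L}$ — the cruder $O(\log n)$-type bound would force $r$ super-quadratic in $k$ and break the parameter budget. The $j=i$ estimate and the concluding tail bounds are routine.
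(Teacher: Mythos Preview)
Your approach is sound and the core $j=i$ estimate --- the half-spacing $\tfrac12\Delta^i_2$ --- is exactly the paper's. The route diverges in how the $j\neq i$ cross-terms are handled. The paper does not bound them at all: it works throughout Appendix~A under the ``good draw'' conditioning of Lemma~\ref{goodNrho:lemma} (properties p1 and p2), under which each $Z_j$ is a function of $U\cap N^j_0$ alone. Since $u\in N'$ lies outside every $N^j_0$, and since p1 guarantees $m^i_1\notin N^j_0$ for $j\neq i$, conditioning on either $\{u\in U\}$ or $\{m^i_1\in U\}$ leaves the distribution of $Z_j$ unchanged for $j\neq i$, and those terms vanish identically. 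Lemma~\ref{gap:lemma} then drops out in one line from the single-coordinate anti-concentration bound (Lemma~\ref{gapexpectationhelp:lemma}(i)). Your direct unconditional bound on the contamination is more self-contained --- it does not rely on the global conditioning framework or on p1 --- at the price of the extra Markov step on $\xi$.

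One quantitative slip to fix: the values $\{h_j(x):x\notin\{u,m^j_1,\dots,m^j_L\}\}$ are not i.i.d.\ exponentials but the top $n-L-1$ order statistics of $n$ i.i.d.\ $\Exp[1]$'s, all sitting above $h_j(m^j_L)$. By memorylessness the minimum over a random half is $h_j(m^j_L)$ plus an $O(1/n)$ term, so $\E_{\rho,N}[\E_U[W_j]]=\Theta(L/n)$, not $O(1/n)$. This inserts an extra $L=\Theta(\log(rk))$ factor into $\E_{\rho,N}[\xi]$, which is harmless: you then need $r\delta^2\gtrsim L$ rather than $r\delta^2\gtrsim 1$. (Also, in the paper's actual use of this lemma --- Lemma~\ref{stkminsutility:thm} and the proof of Theorem~\ref{stattackutility:thm} --- $\delta$ is a constant and $r=\Theta(k)$, not $\delta=\Theta(1/k)$ with $r=\tilde\Omega(k^2)$; either regime satisfies the needed inequality.)
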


We bound from above the maximum over $u\in N$ of $\Var_{U \mid u\in U}[Z]$:
\begin{lemma}[Variance bound]\label{varbound:lemma}
For $\delta>0$ there is a constant $c$, 
\[
     \Pr_{\rho,N}\left[\max_{u\in N} \Var_{U \mid u\in U}[Z] \leq c\left(1+\frac{1}{\sqrt{k\delta}}\right) \frac{k}{n^2}\right]  \geq 1-\delta\ .
 \]
 \end{lemma}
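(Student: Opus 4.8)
The plan is to work on the good event of Lemma~\ref{goodNrho:lemma} and reduce the conditional variance $\Var_{U\mid u\in U}[Z]$ to a single quantity that does not depend on the conditioning key $u$, after which the claim becomes a routine second-moment (Chebyshev) estimate over $\rho$ using the moments of order statistics of i.i.d.\ exponentials. So assume (p1)--(p3) hold. By (p1)--(p2) the coordinate contributions $Z_1,\dots,Z_k$ are independent and remain independent after conditioning on $\{u\in U\}$ (conditioning on $u\in U$ affects $U\cap N^{i}_0$ for at most one index $i$, since the sets $N^i_0$ are disjoint), hence $\Var_{U\mid u\in U}[Z]=\sum_{i\in[k]}\Var_{U\mid u\in U}[Z_i]$.

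Next I would remove the dependence on $u$. Writing $Z_i=h_i(m^i_{\ell_i})$ with $\ell_i$ the smallest rank whose key lies in $U$, one has $\Pr_U[\ell_i=\ell]=2^{-\ell}$, so $\E_U[Z_i^2]=\sum_{\ell\ge 1}2^{-\ell}h_i(m^i_\ell)^2=:W_i$ is a function of $\rho$ alone (in fact of $h_i$ alone), and $\Var_U[Z_i]\le W_i$. I claim more: $\Var_{U\mid u\in U}[Z_i]\le W_i$ for \emph{every} key $u$. If $u\notin N^i_0$ then conditioning on $u\in U$ does not change the law of $Z_i$, so $\Var_{U\mid u\in U}[Z_i]=\Var_U[Z_i]\le W_i$; and if $u\in N^i_0$, then including $u$ in $U$ can only lower the minimum $Z_i$, so $Z_i\mid u\in U$ is stochastically dominated by $Z_i$, and since $z\mapsto z^2$ is nondecreasing on $[0,\infty)$ we still get $\Var_{U\mid u\in U}[Z_i]\le\E_{U\mid u\in U}[Z_i^2]\le\E_U[Z_i^2]=W_i$. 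Summing over $i$ gives the key reduction $\max_{u\in N}\Var_{U\mid u\in U}[Z]\le W:=\sum_{i\in[k]}W_i$, a random variable depending only on $\rho$.

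It then remains to bound $W$ with probability $1-\delta$ over $\rho$. The $W_i$ are independent (each depends on its own hash $h_i$), and $h_i(m^i_\ell)$ is distributed as the $\ell$-th order statistic of $n$ i.i.d.\ $\Exp[1]$ variables, whose $p$-th moments for $p=2,4$ are $O((\ell/n)^p)$ in the relevant range $\ell\le n/2$ and are negligibly weighted beyond it (from the exponential-gap decomposition recalled before the lemma). The geometric weights $2^{-\ell}$ make the resulting series converge, so $\E_\rho[W_i]=O(1/n^2)$ and, using $\big(\sum_\ell 2^{-\ell}a_\ell\big)^2\le\sum_\ell 2^{-\ell}a_\ell^2$, also $\E_\rho[W_i^2]=O(1/n^4)$; hence $\E_\rho[W]=O(k/n^2)$ and $\Var_\rho[W]=\sum_i\Var_\rho[W_i]\le\sum_i\E_\rho[W_i^2]=O(k/n^4)$. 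Chebyshev's inequality (Lemma~\ref{chebyshev:lemma}) then yields $W\le \E_\rho[W]+\sqrt{\Var_\rho[W]/\delta}=O(k/n^2)+O\!\big(\sqrt{k/\delta}\,/n^2\big)=c\big(1+\tfrac{1}{\sqrt{k\delta}}\big)\tfrac{k}{n^2}$ with probability $\ge 1-\delta$, which is exactly where the $1+1/\sqrt{k\delta}$ factor comes from. Combining with the previous paragraph, and absorbing the constant failure probability of Lemma~\ref{goodNrho:lemma} (which can be pushed below $\delta/2$ by enlarging $n$ by a constant factor), proves the lemma.

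The step I expect to be the main obstacle is the uniform-in-$u$ reduction of the second paragraph. A naive per-key treatment is not uniform: conditioning on $u=m^i_L\in U$ could, a priori, leave $Z_i$ spread over $\{h_i(m^i_1),\dots,h_i(m^i_L)\}$ and inflate its variance to $\Theta(L^2/n^2)$, which would contaminate the bound with a spurious $\log^2$ factor. The observation that conditioning on $u\in U$ can only push each $Z_i$ downward — hence never increases its second moment above the unconditioned value $W_i$ — is what makes the bound uniform over $u$ and keeps it at $O(k/n^2)$. Everything after that (the order-statistic moment computations, the convergence of the geometric series, and the Chebyshev bookkeeping) is routine; the only mild care needed is to apply the decomposition $\Var_{U\mid u\in U}[Z]=\sum_i\Var_{U\mid u\in U}[Z_i]$ under the conditioning implied by the good event, so that the $Z_i$ really are independent.
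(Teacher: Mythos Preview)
Your proposal is correct and follows essentially the same approach as the paper: bound $\Var_{U\mid u\in U}[Z_i]\le \E_U[Z_i^2]$ uniformly in $u$, sum over $i$ by conditional independence, and then apply Chebyshev over $\rho$ to the resulting $\rho$-measurable sum. The only cosmetic difference is parameterization: the paper rewrites $\E_U[Z_i^2]$ in terms of the independent gaps $\Delta^i_j$ (obtaining a weighted sum of independent squared exponentials across both $i$ and $j$), whereas you keep the order statistics $h_i(m^i_\ell)$ and exploit independence only across $i$, handling the within-$i$ dependence via the Jensen step $(\sum_\ell 2^{-\ell}a_\ell)^2\le\sum_\ell 2^{-\ell}a_\ell^2$. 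Your stochastic-domination justification of $\E_{U\mid u\in U}[Z_i^2]\le \E_U[Z_i^2]$ is in fact more explicit than the paper's, which simply asserts this inequality in Lemma~\ref{expressvar:lemma}.
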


We use the following to bound the number $r$ of attack queries needed so that the sorted order by average score  separates the minimum hash keys from the bulk of the keys in $N'$: 
\begin{lemma} [Separation] \label{shift:lemma}
    Let $\alpha>0$. Assume that 
    \begin{itemize}
        \item
        $\min_{u\in N'} G(u,m^i_1) \geq M > 0$
\item
$\max_{u\in N} \Var_{U \mid u\in U}[Z] \leq V^2$
\item
During Algorithm~\ref{standardattack:algo}, the keys $u\in N'$ and $m^i_1$ are selected in $U$ in at least
$r' \geq \frac{2 V^2}{M^2} \frac{1}{\alpha}$ rounds each.
    \end{itemize}
Then \[
\Pr[A[u] > A[m^i_1]] \leq \alpha
\]
\end{lemma}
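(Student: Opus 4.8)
The plan is to prove Lemma~\ref{shift:lemma} by a straightforward two-sided Chebyshev argument on the average scores $A[u]$ and $A[m^i_1]$, exploiting the hypothesized gap in means and the uniform bound on variances. Fix the key $v := m^i_1$ for brevity, and let $W := A[v] - A[u]$ be the difference of the two average scores at the end of the run. I want to show $\Pr[W < 0] \leq \alpha$, i.e. that with probability at least $1-\alpha$ the low-priority key $v$ ends up with a \emph{smaller} average score than $u\in N'$.

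First I would set up the averages. Let $r_u$ (resp.\ $r_v$) be the number of rounds in which $u$ (resp.\ $v$) is selected into $U$; by hypothesis both are at least $r' \geq \frac{2V^2}{M^2\alpha}$. Conditioned on which rounds select $u$ and which select $v$, the score $A[u]$ is the average of $r_u$ samples of $Z \mid u\in U$, each with variance at most $V^2$, and similarly for $A[v]$. The subtlety is that these samples are \emph{not independent} across rounds in a given average and the two averages are correlated (the same query $U$ can contain both $u$ and $v$, and the $Z$ values in different rounds share the ground set). However, for a Chebyshev bound I only need the variance of $W$, and by Cauchy--Schwarz / the triangle inequality for standard deviations, $\sqrt{\Var[W]} \leq \sqrt{\Var[A[u]]} + \sqrt{\Var[A[v]]}$. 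Each average-of-$r_u$-samples has variance at most $V^2/r_u$ only if the per-round terms were independent; since they may not be, I would instead bound $\Var[A[u]] = \Var[\frac{1}{r_u}\sum_{\text{rounds } j \ni u} Z^{(j)}] \leq \frac{1}{r_u^2}\big(\sum_j \sqrt{\Var[Z^{(j)}]}\big)^2 \leq \frac{1}{r_u^2}\cdot r_u^2 V^2 = V^2$, which is too weak. So the cleaner route is to observe that the queries $U$ in distinct rounds \emph{are} independent of each other (each round resamples $U\subset N$ independently), hence conditioned on the selection pattern the samples $Z^{(j)}$ entering $A[u]$ across rounds $j$ are independent, giving $\Var[A[u] \mid \text{pattern}] \leq V^2/r_u \leq V^2/r'$ and likewise for $A[v]$.

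With that in hand, condition on the selection pattern (which rounds pick $u$, which pick $v$), which fixes $r_u, r_v \geq r'$. Then $\E[W \mid \text{pattern}] = \E_{U\mid u\in U}[Z] - \E_{U\mid v\in U}[Z] = G(u,v) \geq M$ (the conditional expectation per round does not depend on $r_u,r_v$, only on the conditioning event $u\in U$ or $v\in U$), and $\Var[W\mid \text{pattern}] \leq (\sqrt{V^2/r'} + \sqrt{V^2/r'})^2 = 4V^2/r'$. By Chebyshev (Lemma~\ref{chebyshev:lemma}),
\[
\Pr[W < 0 \mid \text{pattern}] \leq \Pr\big[|W - \E[W\mid\text{pattern}]| \geq M \,\big|\, \text{pattern}\big] \leq \frac{4V^2/r'}{M^2} \leq \frac{4V^2}{M^2}\cdot\frac{M^2\alpha}{2V^2} = 2\alpha.
\]
Since this holds for every conditioning pattern, it holds unconditionally, giving $\Pr[A[u] > A[v]] = \Pr[W<0] \leq 2\alpha$; absorbing the factor $2$ into the constant in the hypothesis $r' \geq \frac{2V^2}{M^2\alpha}$ (or restating with $4$ in place of $2$) yields the claimed bound $\leq \alpha$.

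The main obstacle is the dependence structure: getting an honest variance bound on $W$ requires being careful that averaging is over \emph{independent} rounds even though within a round the statistic $Z$ couples all keys of $N$, and that the conditional mean $G(u,v)$ is exactly the per-round gap regardless of how many rounds selected each key. Once those two facts are pinned down, the rest is a one-line Chebyshev computation. I would also remark that this lemma is applied in combination with Lemmas~\ref{gap:lemma} and~\ref{varbound:lemma} to instantiate $M = \Theta(1/n)$ and $V^2 = \Theta(k/n^2)$, which is what forces $r' = \Theta(k/\alpha)$ and hence (after a union bound over the $O(k L)$ keys in $N_0$ and a Chernoff bound ensuring every relevant key is selected in $\Omega(r)$ rounds) the overall attack size $r = O(k/\alpha^2)$ in Theorem~\ref{stattackutility:thm}; but that synthesis is outside the scope of this lemma itself.
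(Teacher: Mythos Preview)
Your approach is the same as the paper's: apply Chebyshev to the difference of the two average scores. The paper simply sets $Y = A[u] - A[m^i_1]$, asserts $\E[Y]\geq M$ and $\Var[Y]\leq 2V^2/r'$, and concludes $\Pr[Y<0]\leq \alpha$.

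Two slips in your write-up. First, a sign error: with $W := A[v]-A[u]$ you have $\E[W] = \E[Z\mid v\in U]-\E[Z\mid u\in U] = -G(u,v) \leq -M$, not $+G(u,v)\geq M$ as you wrote; consequently $\Pr[W<0]$ is the \emph{likely} event, not the rare one. (In fact the lemma statement itself has the inequality the wrong way around --- the paper's own proof, and the downstream use in Lemma~\ref{stkminsutility:thm}, establish $\Pr[A[u]<A[m^i_1]]\leq \alpha$.) Second, once you condition on the \emph{joint} selection pattern for both $u$ and $v$, the per-round conditional mean in a round containing $u$ is $\E[Z\mid u\in U,\, v\in U]$ or $\E[Z\mid u\in U,\, v\notin U]$, not $\E[Z\mid u\in U]$; hence $\E[W\mid\text{pattern}]$ is not exactly $\pm G(u,v)$. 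The paper sidesteps this by working unconditionally. Finally, if you want the paper's sharper $\Var[Y]\leq 2V^2/r'$ (and hence $\alpha$ rather than your $2\alpha$), note that rounds containing both $u$ and $v$ contribute the \emph{same} $Z$ to both averages, so $\cov(A[u],A[v])\geq 0$ and $\Var[A[u]-A[v]]\leq \Var[A[u]]+\Var[A[v]]$.
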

\begin{proof}
Consider the random variable
\[
Y=A[u]-A[m^i_1]\ .
\]
From our assumptions:
\begin{align*}
    \E[Y] &\geq M  \\
    \Var[Y] &\leq \frac{2 V^2}{r'}
\end{align*}
We get 
\begin{align*}
    \Pr[Y<0] &\leq \Pr[ |Y-\E[Y]| \geq E[Y]] \\
&\leq \Pr[ |Y-\E[Y]| \geq M] \\
&= \Pr[ |Y-\E[Y]| \geq \frac{1}{\sqrt{\alpha}}\cdot \frac{V}{\sqrt{2r'}}] \leq \alpha
\end{align*}
Using Chebyshev's inequality.
\end{proof}

We are now ready to conclude the utility proof of Algorithm~\ref{standardattack:algo}:

 \begin{lemma} [Utility of Algorithm~\ref{standardattack:algo}] \label{stkminsutility:thm}
For $\alpha,\delta>0$. Consider Algorithm~\ref{standardattack:algo} with $r=O(\frac{k}{\delta \alpha})$. Then for each $i\in[r]$, with probability $1-\delta$, for any $u\in N'$ we have
$\Pr[A[u] < A[m^i_1]] \leq \alpha$.
\end{lemma}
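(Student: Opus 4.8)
The plan is to assemble three preliminary estimates on the draw of $(\rho,N)$ — the gap bound (Lemma~\ref{gap:lemma}), the variance bound (Lemma~\ref{varbound:lemma}), and the ``good draws'' event (Lemma~\ref{goodNrho:lemma}) — into a single high-probability event on which every hypothesis of the Separation lemma (Lemma~\ref{shift:lemma}) holds simultaneously for \emph{all} $u\in N'$, and then read off the conclusion from Lemma~\ref{shift:lemma}. Fix a hash index $i$ (I read the ``$i\in[r]$'' in the statement as a typo for $i\in[k]$, since $m^i_1$ is the rank-$1$ key of the $i$-th hashmap).

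First I would carve out the conditioning event. Applying Lemma~\ref{gap:lemma} with parameter $\delta/3$ gives $M:=\min_{u\in N'}G(u,m^i_1)\ge \delta/(9n)$ with probability $\ge 1-\delta/3$; applying Lemma~\ref{varbound:lemma} with parameter $\delta/3$ gives $V^2:=\max_{u\in N}\Var_{U\mid u\in U}[Z]\le c\,(1+1/\sqrt{k\delta/3})\,k/n^2$ with probability $\ge 1-\delta/3$; and Lemma~\ref{goodNrho:lemma}, whose $0.99$ is upgraded to $1-\delta/3$ by inflating the constants in the choice of $n$ and enlarging $L$ to $\log_2(rk)+O(\log(1/\delta))$, gives properties p1--p3 with probability $\ge 1-\delta/3$. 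A union bound places us on all three events with probability $\ge 1-\delta$, and we condition on them from here on. Here p1 is what makes the coordinates $Z_j$ (and their conditioned versions $Z_j\mid u\in U$) independent — the property used in the variance bookkeeping of Lemma~\ref{shift:lemma} — and p2 is what makes $G$ and $V^2$ behave as in Lemmas~\ref{gap:lemma} and~\ref{varbound:lemma}.

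Next I would handle the random selection counts, since Algorithm~\ref{standardattack:algo} includes each $x\in N$ in $U$ independently with probability $\tfrac12$ at each step, so the number $t[x]$ of rounds containing $x$ is $\Binom(r,\tfrac12)$. A Chernoff bound and a union bound over the $\le n$ keys show that, except with probability $n\,e^{-\Omega(r)}$ (negligible once $r=\Omega(\log n)$, which is dominated by the bound on $r$ anyway), every key — in particular every $u\in N'$ and $m^i_1$ — appears in at least $r/4$ rounds; set $r':=r/4$. Choosing the hidden constant in $r=O(\tfrac{k}{\delta\alpha})$ large enough, and using the displayed bounds on $M$ and $V^2$ so that $\tfrac{2V^2}{\alpha M^2}=O(\tfrac{k}{\delta\alpha})$ over the relevant parameter range, forces $r'\ge \tfrac{2V^2}{M^2\alpha}$. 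Now all three hypotheses of Lemma~\ref{shift:lemma} hold for every $u\in N'$: $G(u,m^i_1)\ge M$, $\Var_{U\mid u\in U}[Z]\le V^2$, and $u,m^i_1$ are each selected in at least $r'$ rounds. Lemma~\ref{shift:lemma} (whose proof establishes precisely the inequality in the $<$ direction) then gives $\Pr[A[u]<A[m^i_1]]\le\alpha$ for every $u\in N'$, which is the claim; the inner $\Pr$ is over the algorithm's randomness for the fixed good pair $(\rho,N)$, and the ``$1-\delta$'' is the probability of the conditioning event.

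The part I expect to be delicate is not any single computation above but making the three estimates hold \emph{uniformly over all} $u\in N'$ at once — this is why the gap must be controlled as a lower bound on $\min_{u\in N'}G(u,m^i_1)$ and the variance as a uniform upper bound, and it is precisely in those two lemmas that the dependence among the $r$ queries (all sampled from one shared ground set $N$ under a single $\rho$) is absorbed. Given those uniform bounds, the present lemma is a Chebyshev-type argument (packaged in Lemma~\ref{shift:lemma}) together with the mild Binomial concentration of the selection counts; the only genuinely fiddly bookkeeping is balancing $\delta$ across the three invocations and checking that the prescribed $n$ and $r$ are large enough.
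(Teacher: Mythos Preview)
Your proposal is correct and follows essentially the same route as the paper: invoke Lemma~\ref{gap:lemma} for the expectation gap $M=\Omega(\delta/n)$, Lemma~\ref{varbound:lemma} for the uniform variance bound $V^2=O(k/n^2)$, note that each key is selected in $\Theta(r)$ rounds with high probability, and feed these into the Separation lemma (Lemma~\ref{shift:lemma}). Your write-up is in fact more careful than the paper's short proof --- you split the failure probability across the three auxiliary events and explicitly fold in Lemma~\ref{goodNrho:lemma}, and you correctly flag both the $i\in[r]$ vs.\ $i\in[k]$ typo and the sign in the conclusion of Lemma~\ref{shift:lemma} --- but the skeleton is identical.
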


\begin{proof}
Consider a key $m^i_1$ and a key 
$u\not\in N'$. We bound the probability that $A[u] < A[m^i_1]$. 
 
From Lemma~\ref{varbound:lemma}, with probability $1- 1/k$ we have  $V^2 = O\left(\frac{k}{n^2}\right)$.  From Lemma~\ref{gap:lemma}, for each $i$, with probability $1-\delta$, we have
$M\geq \delta/(3n)$. If we choose $r = 4r'$ in Algorithm~\ref{standardattack:algo}, then with high probability a key $x\in N$ is selected to $U$ at least $r'$ times. The claim follows from Lemma~\ref{shift:lemma} by setting
$r'= O(\frac{k}{\delta \alpha})$ .
\end{proof}

We are now ready to conclude the proof of Theorem~\ref{stattackutility:thm}.
Recall that a subset $U$ of keys of size $M=2\alpha n$ has $T(S(U))$ over random $\rho$ concentrated around 
$k/M$ with standard error $\sqrt{k}/M)$.
We now consider the prefix $U$ of $M=2\alpha n$ keys in the sorted order by average scores.
This selection 
has 
$\E[T(S(U))] \leq (3\alpha)  \cdot k/(2\alpha n)$. To see this, 
note that $(1-\delta)$ of the MinHash values are the same as in the sketch of $N$. Therefore they have expected value $1/n$. The remaining $\delta$ fraction have expected value $1/(2\alpha n)$. Therefore $\E[T_\rho(S(U))] = k((1-\delta)/n + \delta/(2\alpha n)) = (k/(2\alpha n))(\delta + (1-\delta)2\alpha)$. Therefore $T$ is a factor of $1/(\delta + 2\alpha)$ too small, which for small $\alpha$ and $\delta < \alpha$ is a large constant multiplicative error.

\subsection{Proofs of Lemma~\ref{gap:lemma} and Lemma~\ref{varbound:lemma}} \label{gapvarproofs:sec}

For fixed $\rho$ (and $N$), for $i\in [k]$ and $j\in [N-1]$, denote by
\begin{align*}
    \Delta^i_1 &:= \min_{x\in N} h_i(x) \equiv h_i(m^i_1) \\
    \Delta^{i}_j &:= \{h_i(x) \mid x\in N\}_{(j+1)} - \{h_i(x) \mid x\in N\}_{(j)} \equiv h_i(m^i_{j+1}) -h_i(m^i_{j}) & \text{ for $j>1$}
\end{align*}    
the gap between the $j$ and $j+1$ smallest values in $\{h_i(x)\}_{x\in N}$.

\begin{lemma} [Properties of $\Delta^i_j$] \label{Deltaprop:lemma}
  The random variables $\Delta^i_j$ $i\in [k]$, $j\in [L]$ over the sampling of $N$ are independent with distributions $\Delta^{i}_j \sim \Exp[N-j]$. This holds also when conditioning on properties p1 and p2 of Lemma~\ref{goodNrho:lemma}. 
\end{lemma}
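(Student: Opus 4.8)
The plan is to separate, for each hash function $h_i$, the \emph{rank information} (which key of $N$ occupies which position in the sorted order of $\{h_i(x)\}_{x\in N}$) from the \emph{value information} (the sorted list of values, equivalently the spacings $\Delta^i_j$), and then to observe that properties p1 and p2 are measurable with respect to the rank information (plus, for p2, the algorithm's own independent coins), whereas the $\Delta^i_j$ are functions of the value information alone.

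First I would fix the ground set $N$ and write $n:=|N|$. For each $i\in[k]$ let $\sigma_i$ be the permutation of $N$ that sorts the keys by $h_i$-value (so $m^i_j$ is the $j$th key of $\sigma_i$ and $N^i_0=\{m^i_1,\dots,m^i_L\}$ is the prefix of length $L$), and let $D_i=(\Delta^i_1,\Delta^i_2,\dots)$ be the vector of successive spacings of the sorted values $\{h_i(x)\}_{x\in N}$. Since $(h_i(x))_{x\in N}$ are i.i.d.\ and continuous ($\Exp[1]$), two classical facts apply: (i)~the sorting permutation is independent of the vector of order statistics, so $\sigma_i\perp D_i$; and (ii)~by the R\'enyi representation of exponential order statistics — exactly the spacing distributions recorded in the order-statistics box preceding this section — the coordinates of $D_i$ are independent with $\Delta^i_j\sim\Exp[n-j]$. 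Because $h_1,\dots,h_k$ are independent, the pairs $(\sigma_1,D_1),\dots,(\sigma_k,D_k)$ are mutually independent, and combining this with (i) the full tuple $(\sigma_1,\dots,\sigma_k)$ is independent of $(D_1,\dots,D_k)$. In particular $(\Delta^i_j)_{i\in[k],\,j\in[L]}$ is a function of $(D_i)_i$ alone and, by (ii), is a family of independent exponentials with the stated parameters; this establishes the unconditional part of the lemma.

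Second I would check which $\sigma$-algebra each conditioning event lives in. Property p1 asks only that the prefix sets $N^i_0$ are pairwise disjoint across $i$ (equivalently $\sum_i|N^i_0|=kL$), which is a function of $(\sigma_1,\dots,\sigma_k)$. Property p2 asks that in every one of the $r$ rounds of Algorithm~\ref{standardattack:algo} the sampled subset $U$ meets each $N^i_0$; here each $U$ is generated by fair independent coin flips drawn independently of $\rho$, and — crucially — Algorithm~\ref{standardattack:algo} is single-batch, so these subsets do not depend on the observed values $T(S_\rho(U))$ either. Hence p2 is a function of $(\sigma_1,\dots,\sigma_k)$ together with the algorithm's internal coins, both of which are independent of $(D_i)_i$. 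Therefore conditioning on p1\,$\wedge$\,p2 does not change the joint law of $(\Delta^i_j)_{i,j}$, which gives the conditional statement.

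The main obstacle is not any computation but this bookkeeping: one must be careful that p2, although Algorithm~\ref{standardattack:algo} does feed $S_\rho(U)$ into its scoring, constrains only the \emph{support} of each $U$ relative to the rank-defined sets $N^i_0$, and that these $U$'s are themselves independent of $\rho$. Once the factorization ``$(\sigma_i)_i$ independent of $(D_i)_i$, and both independent of the algorithm's coins'' is in hand, the lemma follows immediately from facts (i) and (ii) above.
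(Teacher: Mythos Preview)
Your proposal is correct and follows essentially the same approach as the paper: the paper's proof is a terse two-sentence version of exactly your argument, citing the exponential order-statistics spacings for the unconditional part and then noting that ``p1 and p2 are independent of the actual values of $h_i(m^i_j)$ and only depend on the rank in the order.'' Your explicit factorization into the rank data $(\sigma_i)_i$ and the spacing data $(D_i)_i$, together with the observation that the single-batch queries $U$ are drawn independently of $\rho$, is precisely the bookkeeping that justifies that sentence.
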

\begin{proof}
It follows from properties of the exponential distribution that
over the sampling of $\rho$, $\Delta^{i}_j \sim \Exp[N-j]$ are independent random variables for $i,j$. Note that p1 and p2 are independent of the actual values of $h_i(m^i_j)$ and only depend on the rank in the order.
\end{proof}

We can now express the distribution of the random variable $Z_i$ in terms of $\Delta^i_j$:  We have 

For $j\geq 1$, the probability that $Z_i= \sum_{\ell=1}^j \Delta^i_\ell$ is $2^{-j}/(1-2^{-L})$.  This corresponds to the event that $U$ does not include the keys $m^i_\ell$ for $\ell<j$ and includes the key $m^i_j$.  The normalizing factor $(1-2^{-L})$ arises from property p2 in Lemma~\ref{goodNrho:lemma}. In the sequel we omit this normalizing factor for brevity.

\begin{align*}
Z_i = \begin{cases}
    \Delta^i_1 & \text{with probability } 2^{-1}/(1-2^{-L}) \\
    \Delta^i_1 + \Delta^i_2 & \text{with probability } 2^{-2}/(1-2^{-L}) \\
    \Delta^i_1 + \Delta^i_2 + \Delta^i_3 & \text{with probability } 2^{-3}/(1-2^{-L}) \\
    & \vdots \\
    \Delta^i_1 + \cdots + \Delta^i_j & \text{with probability } 2^{-j}/(1-2^{-L}) \\
    & \vdots
\end{cases}
\end{align*}

We now consider $Z_i$ conditioned on the event $m^i_j\in U$. Clearly for $j=1$ (conditioning on the event $m^i_1 \in U$) we have $Z_i \equiv \Delta^i_1$.
For $j\geq 1$, we have $Z_i=\sum_{\ell=1}^h \Delta^i_\ell$ with probability $2^{-h}$ for $h< j$ and 
$Z_i=\sum_{\ell=1}^j \Delta^i_\ell$ with probability $2^{-j+1}$.

We bound the expected value of $Z_i$ conditioned on presence of a key $u\in U$.
 \begin{lemma} \label{gapexpectationhelp:lemma}
 \begin{itemize}
     \item[(i)] (Anti-concentration)
 For $u\not= m^i_1$, the random variable over sampling of $\rho,N$
\[
G = \max_{u\in N\setminus\{m^i_1\}}\E_{U\mid u\in U}[Z_i] - \E_{U \mid m^i_1\in U}[Z_i]
\]
is such that for all $c >0$,
$\Pr_{\rho,N}\left[G \geq \frac{c}{n-1}\right
] \geq e^{-2c}$.
\item[(ii)] (Concentration)
For $u\in N$, the random variable
\[
G = \E_{U \mid u\in U}[Z_i] - \E_{U \mid m^i_j\in U}[Z_i]
\]
is such that for $c \geq 1$
\[
\Pr_{\rho,N}\left[G \geq c\cdot \frac{3}{n 2^{j}}\right] \leq c e^{-c}\ .
\]
\end{itemize}
\end{lemma}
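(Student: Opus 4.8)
The plan is to first derive an exact closed form for the conditional expectation $\E_{U\mid m^i_p\in U}[Z_i]$ in terms of the consecutive‑gap variables of $h_i$ over $N$, and then read off both parts from it. Write $g_0:=h_i(m^i_1)$ and $g_\ell:=h_i(m^i_{\ell+1})-h_i(m^i_\ell)$ for $\ell\ge 1$, so that $h_i(m^i_p)=\sum_{\ell=0}^{p-1}g_\ell$; by the order‑statistics facts recalled at the start of this section (cf.\ Lemma~\ref{Deltaprop:lemma}), the $g_\ell$ are independent with $g_\ell\sim\Exp[n-\ell]$, and this remains true conditioned on the good events of Lemma~\ref{goodNrho:lemma}. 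Conditioned on $m^i_p\in U$, the coordinate $Z_i=\min_{x\in U}h_i(x)$ equals $h_i(m^i_h)$ where $h$ is the smallest index with $m^i_h\in U$: for $h<p$ this event has probability $2^{-h}$, and the leftover event $m^i_1,\dots,m^i_{p-1}\notin U$ (probability $2^{-(p-1)}$) contributes $h_i(m^i_p)$. Expanding each $h_i(m^i_h)$ into gaps and collapsing the double sum with the identity $\sum_{h=\ell+1}^{p-1}2^{-h}+2^{-(p-1)}=2^{-\ell}$ gives
\[
\E_{U\mid m^i_p\in U}[Z_i]=\sum_{\ell=0}^{p-1}2^{-\ell}g_\ell .
\]
In particular $\E_{U\mid m^i_1\in U}[Z_i]=g_0$, and for $p>j$ the difference $\E_{U\mid m^i_p\in U}[Z_i]-\E_{U\mid m^i_j\in U}[Z_i]=\sum_{\ell=j}^{p-1}2^{-\ell}g_\ell\ge 0$, while for $p\le j$ it is $\le 0$. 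Deriving this identity cleanly — unrolling the conditional law of the minimum and telescoping the resulting double sum — is the main obstacle; everything after it is short.

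For part~(i), I would lower‑bound the maximum over $u\ne m^i_1$ by its value at $u=m^i_2$, which the identity (or a direct two‑case argument: given $m^i_2\in U$ the minimum is $h_i(m^i_1)$ or $h_i(m^i_2)$, each with probability $\tfrac12$) gives as exactly $\tfrac12 g_1$. Since $g_1\sim\Exp[n-1]$,
\[
\Pr_{\rho,N}\!\Big[G\ge\tfrac{c}{n-1}\Big]\ \ge\ \Pr\!\Big[g_1\ge\tfrac{2c}{n-1}\Big]\ =\ e^{-2c},
\]
which is the claim.

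For part~(ii), if $u=m^i_p$ with $p\le j$ then $G\le 0$ and the bound is vacuous, so assume $p>j$ and use $G=\sum_{\ell=j}^{p-1}2^{-\ell}g_\ell\le\widetilde G:=\sum_{\ell=j}^{n-1}2^{-\ell}g_\ell$. This is a geometrically weighted sum of independent exponentials with leading weight $2^{-j}$ and rates $\Theta(n)$ on all coordinates carrying non‑negligible weight; its moment generating function factors as $\E[e^{s\widetilde G}]=\prod_{\ell=j}^{n-1}(n-\ell)/(n-\ell-s2^{-\ell})$, finite for $0<s<(n-j)2^{j}$. Evaluating at $s=\tfrac12(n-j)2^{j}$, the geometric decay of the weights forces the product to converge to an absolute constant $C$ (the $\ell=j$ factor equals $2$, and the $m$‑th factor thereafter is at most $1/(1-2^{-m})$), and a Chernoff bound $\Pr[\widetilde G\ge t]\le C\,e^{-\tfrac12(n-j)2^{j}t}$ with $t=3c/(n2^{j})$ and $j\le L=O(\log(rk))\ll n$ yields $\Pr[G\ge 3c/(n2^{j})]\le C e^{-c}$, i.e.\ the claim up to the value of the absolute constant in the threshold; the finitely many heavy coordinates with $\ell$ near $n$ carry weight $2^{-\ell}$ so tiny that, on the good event of Lemma~\ref{goodNrho:lemma}, they perturb neither the MGF product nor the threshold by more than a $(1+o(1))$ factor. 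The only real care here is the bookkeeping that keeps this MGF product an absolute constant uniformly in $j$.
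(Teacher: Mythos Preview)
Your proof is correct and follows essentially the same route as the paper: both express the conditional expectation $\E_{U\mid m^i_p\in U}[Z_i]$ as a geometrically weighted sum of the consecutive gaps, lower–bound part~(i) by the single term $\tfrac12\Delta^i_2=\tfrac12 g_1\sim\tfrac12\Exp[n-1]$, and upper–bound part~(ii) by the tail sum $\sum_{\ell\ge j}2^{-(\ell-1)}\Delta^i_\ell$ followed by an exponential tail bound. The only substantive difference is that for~(ii) the paper invokes Janson's tail inequality for weighted sums of independent exponentials (giving the specific form $\Pr[G\ge c\mu]\le ce^{-c}$), whereas you compute the MGF directly and apply a Chernoff bound; your argument yields $Ce^{-\Theta(c)}$ rather than exactly $ce^{-c}$, which you correctly flag as matching the statement only up to constants --- this is harmless for the downstream use.

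One small remark: your proof of~(i) lower–bounds the maximum by the single choice $u=m^i_2$, which suffices for the statement as written. The paper's proof in fact shows $\E_{U\mid u\in U}[Z_i]-\E_{U\mid m^i_1\in U}[Z_i]\ge \Delta^i_2/2$ for \emph{every} $u\neq m^i_1$ (which your closed form also gives immediately, since the difference is $\sum_{\ell\ge 1}2^{-\ell}g_\ell\ge \tfrac12 g_1$ whenever $p\ge 2$). This stronger per-$u$ version is what is actually needed to derive Lemma~\ref{gap:lemma} as a corollary, so it is worth stating.
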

\begin{proof}
Per Lemma~\ref{goodNrho:lemma} we are assuming the event (that happens with probability $1-\delta$ that $U$ includes a key $u \in \{m^i_j\}_{j\in [L]}$ for all $i\in [k]$. Therefore, for a key $u\in N'$ it holds that
 \[
 \E_{U \mid u \in U}[Z_i] =  \E_{U}[Z_i]\ .
 \]

Recall that $Z_i \mid m^i_1 \in U = \Delta^i_1$. Otherwise (when not conditioned on $m^i_1\in U$ or conditioned on presence of $u\not= m^i_1$) $Z_i=\Delta^i_1$ with probability $1/2$ (when the random $U$ includes $m^i_1$) and $Z_i \geq \Delta^i_1 + \Delta^i_2$ otherwise (when $U$ does not include $m^i_1$). Thus,
\begin{align*}
    \E_{U}[Z_i] - \E_{U \mid m^i_1\in U}[Z_i] \geq \Delta^i_2/2  \\
    \E_{U \mid m^i_j\in U}[Z_i]-\E_{U \mid m^i_1\in U}[Z_i] \geq 
    \Delta^i_2/2 & \text{ for $1<j\leq L$}
\end{align*}

Therefore for $u\not= m^i_1$, 
\[
G := \E_{U\mid u\in U}[Z_i] - \E_{U \mid m^i_1\in U}[Z_i] \geq \Delta^i_2/2\ .
\]
Since $\Delta^i_2 \sim \Exp[N-1]$,
we have for all $t>0$,
$\Pr_\rho[G \geq t] \geq e^{-2t/(n-1)}$. This establishes claim (i).  

For claim (ii), note that
\begin{align*}
\E_{U \mid u\in U}[Z_i] &\leq \E[Z_i] \leq \sum_{\ell=1}^L \frac{1}{2^{\ell-1}} \Delta^i_\ell \\
\E_{U \mid m^i_j\in U}[Z_i] &\geq \sum_{\ell=1}^{j-1} \frac{1}{2^{j-1}} \Delta^i_j \\
\end{align*}
Therefore 
\[
G := \E_{U \mid u\in U}[Z_i] - \E_{U \mid m^i_j\in U}[Z_i] 
 \leq \sum_{\ell=j}^L \frac{1}{2^{\ell-1}} \Delta^i_\ell\ .
\]
This is a sum of independent exponential random variables and recall that we assumed $L\leq n/3$. Therefore,
this is stochastically smaller that the respective geometrically decreasing weighted sum of independent $\Exp[3/n]$ random variables. It follows that 
\[
\E\left[ G \right]\leq \frac{3}{n} \sum_{\ell=j}^L \frac{1}{2^{\ell-1}} = \frac{3}{n 2^{j}}
\]

We apply an upper bound on the tail~\citep{JansonTailbounds:2017} that shows that this concentrates almost as well as a single exponential random variable:
$\Pr[G \geq t \mu] \leq t e^{-t}$ and obtain claim (ii)
\[
\Pr\left[G \geq t \frac{3}{n 2^{j}}\right] \leq t e^{-t}\ .
\]
\end{proof}

Lemma~\ref{gap:lemma} is a corollary of the first claim of Lemma~\ref{gapexpectationhelp:lemma}.

We now express a bound on the variance of $Z_i$, also when conditioned on the presence of any key $u\in U$, for fixed $\rho$, $N$.
\begin{lemma} \label{expressvar:lemma}
    For fixed $N$, $\rho$, and any $u\in N$
    \[
    \Var_{U | u\in U}[Z_i], \Var[Z_i]=\Theta(\sum_{j} (3/2)^{-j} (\Delta_j^i)^2)\ .
    \]
\end{lemma}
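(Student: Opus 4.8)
The plan is to represent $Z_i$ through a single random stopping index together with the fixed gaps $\Delta^i_j$, to compute its variance in closed form, and then to sandwich that closed form between constant multiples of $\sum_j (3/2)^{-j}(\Delta^i_j)^2$. Work under the good-draw event of Lemma~\ref{goodNrho:lemma}, so the $\Delta^i_j$ with $j\le L$ are fixed nonnegative constants and $U$ always hits $N^i_0$. Let $J := \min\{j : m^i_j\in U\}$ be the rank in $h_i$ of the minimum-hash key of $U$, so that $Z_i = h_i(m^i_J) = \sum_{\ell=1}^{J}\Delta^i_\ell = \sum_{\ell\ge 1}\Delta^i_\ell\, B_\ell$, where $B_\ell := \mathbf{1}\{J\ge\ell\}$ is a monotone sequence of Bernoullis. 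Unconditionally $\Pr[J=j]\asymp 2^{-j}$ (exactly $2^{-j}/(1-2^{-L})$ for $j\le L$); conditioning on $u\in U$ leaves this law unchanged when $u$ has rank $>L$ in $h_i$, in particular for every $u\in N'$, and truncates the support to $\{1,\dots,j\}$ when $u=m^i_j$, with $\Pr[J=h]=2^{-h}$ for $h<j$ and $\Pr[J=j]=2^{-(j-1)}$. In every case $\Pr[J\ge\ell\mid\cdot]\le 2^{-(\ell-1)}=\Pr[B_\ell=1]$.

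From $B_\ell B_{\ell'}=B_{\max(\ell,\ell')}$ and $\E[B_\ell]=2^{-(\ell-1)}$ one gets the exact identity
\begin{equation*}
\Var[Z_i] \;=\; \sum_{\ell,\ell'\ge 2}\Delta^i_\ell\Delta^i_{\ell'}\;2^{-(\max(\ell,\ell')-1)}\bigl(1-2^{-(\min(\ell,\ell')-1)}\bigr),
\end{equation*}
the $\min(\ell,\ell')=1$ terms dropping out because $B_1\equiv 1$. Since $1-2^{-(\min-1)}\in[\tfrac12,1]$ on $\min\ge 2$, this equals $\Theta\!\bigl(\sum_{\ell,\ell'\ge 2}\Delta^i_\ell\Delta^i_{\ell'}2^{-\max(\ell,\ell')}\bigr)$. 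For the upper bound I would apply $\Delta^i_\ell\Delta^i_{\ell'}\le\tfrac12\bigl((\Delta^i_\ell)^2+(\Delta^i_{\ell'})^2\bigr)$ together with $\sum_{\ell'\ge 1}2^{-\max(\ell,\ell')}=(\ell+1)2^{-\ell}$, and then use that $(\ell+1)2^{-\ell}\le c\,(3/2)^{-\ell}$ for an absolute constant $c$ (the ratio $(\ell+1)(3/4)^{\ell}$ is uniformly bounded), which yields $\Var[Z_i]\le c'\sum_j(3/2)^{-j}(\Delta^i_j)^2$. The same bound holds for every conditioned variance, since $\Var_{U\mid u\in U}[Z_i]\le \E_{U\mid u\in U}[Z_i^2]=\sum_{\ell,\ell'}\Delta^i_\ell\Delta^i_{\ell'}\Pr[B_{\max(\ell,\ell')}=1\mid u\in U]\le\sum_{\ell,\ell'}\Delta^i_\ell\Delta^i_{\ell'}2^{-(\max(\ell,\ell')-1)}=O\!\bigl(\sum_j(3/2)^{-j}(\Delta^i_j)^2\bigr)$.

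For the lower bound, the diagonal of the exact identity already gives $\Var[Z_i]\ge\sum_{\ell\ge 2}(\Delta^i_\ell)^2\,2^{-(\ell-1)}(1-2^{-(\ell-1)})=\Omega\!\bigl(\sum_{\ell\ge 2}2^{-\ell}(\Delta^i_\ell)^2\bigr)$, and under the good-draw event the $\Delta^i_j\sim\Exp[n-j]$ with $j\le L$ are all of order $\tilde\Theta(1/n)$, so the geometric weights $2^{-j}$ and $(3/2)^{-j}$ are controlled by the same $O(1)$ dominant indices and $\sum_j(3/2)^{-j}(\Delta^i_j)^2=\Theta(\Var[Z_i])$; this is the sense in which the stated $\Theta$ holds (for $u\in N'$ and unconditionally, where the two variances coincide). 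I expect the only real subtlety to be exactly this last point: for adversarial values of the gaps — say a single far-out $\Delta^i_j$ dominating — $\Var[Z_i]$ is genuinely smaller than $\sum_j(3/2)^{-j}(\Delta^i_j)^2$, so the lower direction is not a deterministic statement and has to be read together with the good-draw conditioning, whereas the upper bound, which is all that Lemma~\ref{varbound:lemma} consumes, is deterministic and uniform over $u\in N$.
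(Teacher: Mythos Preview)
Your argument is correct and follows essentially the same route as the paper: bound the (conditional) variance by the second moment, expand $Z_i=\sum_{\ell}\Delta^i_\ell\,\mathbf{1}\{J\ge\ell\}$, and show that the resulting coefficient on $(\Delta^i_j)^2$ is of order $j\,2^{-j}=O((3/2)^{-j})$. The paper's proof is in fact only an upper-bound chain (it begins with $\Var\le\E[Z_i^2]$ and the closing ``$=\Theta$'' is loose), so your observation that the lower direction is not a deterministic statement in the $\Delta^i_j$, and that only the upper bound feeds into Lemma~\ref{varbound:lemma}, is accurate and actually more careful than the paper itself.
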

\begin{proof}
\begin{align*}
    \Var_{U | u\in U}[Z_i], \Var[Z_i] &\leq \E[Z_i^2] \leq \sum_{j\geq 1} 2^{-j} \left(\sum_{\ell=1}^j \Delta^i_j\right)^2 \leq  \sum_{j\geq 1} 2^{-j} j \sum_{\ell=1}^j (\Delta^i_j)^2  \\
    &= \sum_{j\geq 1}  \left(\sum_{\ell\geq j} \frac{\ell}{2^{\ell}}\right) (\Delta^i_j)^2 =\Theta(\sum_{j} (3/2)^{-j} (\Delta_j^i)^2)
\end{align*}
\end{proof}

\begin{proofof}{Lemma~\ref{varbound:lemma}}
Since the $Z_i$, also when conditioned on $u\in U$, are independent (modulu our simplifying assumption in Lemma~\ref{goodNrho:lemma}), it follows from 
Lemma~\ref{expressvar:lemma} that
\[
    \Var_{U \mid u\in U}[Z]=\Theta\left(\sum_{i\in[k]} \sum_{j\geq 1} (3/2)^{-j} (\Delta_j^i)^2 \right)\ .
    \]
    Therefore,
    \[
    \max_{u\in N} \Var_{U \mid u\in U}[Z]=\Theta\left(\sum_{i\in[k]} \sum_{j\geq 1} (3/2)^{-j} (\Delta_j^i)^2 \right)\ .
    \]
The right hand side 
\[
Y := \max_{u\in N} \Var_{U \mid u\in U}[Z]
\]
is a random variable over $\rho,N$ that is a 
a weighted sum of the squares of independent exponential random variables $\Delta^i_j$.
The PDF of a squared exponential random variable $\Exp[w]^2$ is $\frac{w}{2\sqrt{t}}e^{-w \sqrt{t}}$. The 
mean is $\frac{2}{w^2}$ and the  variance is at most $\E[t^2] = 24/w^4$.  Applying this, we obtain 
that $\E[Y] = \Theta(k/n^2)$ and $\Var[Y] =  O(k/n^4)$.

From Chebyshev's inequality,
$\Pr[Y-\E[Y] \geq c \cdot \sqrt{k}/n^2] = O(1/c^2)$ and we obtain for any $\delta>0$ and a fixed constant $c$
 \[
     \Pr_{\rho,N}\left[\max_{u\in N} \Var_{U \mid u\in U}[Z] \geq c\left(1+\frac{1}{\sqrt{k\delta}}\right) \frac{k}{n^2}]\right]  \leq \delta\ .
 \]

\end{proofof}

\subsection{Attack on the Bottom-$k$ standard estimator}\label{standardbottomk:sec}

The argument is similar to that of $k$-mins sketches. We highlight the differences.
Recall that a bottom-$k$ sketch uses a single hash function $h$ with the sketch storing 
the $k$ smallest values $S(U) := \{h(x) \mid x\in U\}_{(1:k)}$. 
We use the $k$th order statistics ($k$th smallest value)
$T(S) := \{h(x) \mid x\in U\}_{(k)}$.

For fixed $\rho$ and $N$, let $m_j\in N$ ($j\in [n]$) be the key with the $j$th smallest hashmap  $h(m_j) = \{h(x) \mid x\in U\}_{(k)}$.
Define $\Delta_1 := h(m_1)$ and for $j>1$, $\Delta_j := h(m_{j}) - h(m_{j-1})$.

Let $R$ be the random variable that is the rank in $N$ of the key with the $k$th smallest hashmap in $U$. The distribution of $R$ is the sum of $k$ i.i.d.\ Geometric random variables $\Geom[q=1/2]$. 
We have $\E[R] = k/q$ and the concentration bound~\citep{JansonTailbounds:2017} that  for any $c\geq 1$, $\Pr[R > c \E[R]] \leq c e^{-c}$.  

Let $L = 10+ 2\log r$, $N_0 = \{m_j\}_{j\leq kL/q}$ be the keys with the $L (k/q)$ smallest hashmaps. Let $N' = N\setminus N_0$ be the remaining keys.  We show that the attack separates with probability $\alpha$ a key with one of the bottom-$k$ ranks and a key in $N'$.

Assume $n > 3 |N_0|$. Assume that we declare failure when a set $U$ selected by the algorithm does not contain $k$ keys from $N_0$.  The probability of such selection is
at most $Le^{-L} < 0.01/r$ and at most $0.01$ in all $r$ steps.

For fixed $\rho$ and $N$, consider the random variable $Z:= T(S(U))$. 
The following parallels Lemma~\ref{gap:lemma} and
Lemma~\ref{expressvar:lemma}:
\begin{lemma}
Fixing $\rho,N$,
\begin{itemize}
\item[(i)]
let
\[
G(u,v) := \E_{U \mid u\in U}[Z] -\E_{U \mid v\in U}[Z]
\]
For each $i\leq [k]$ and $\delta>0$, 
\[
\Pr\left[\min_{u\in N'} G(u,m_i) \geq  \frac{\delta}{3n}\right] \geq 1-\delta\ .
\]
\item [(ii)]
for $\delta>0$ and some constant $c$, 
\[
     \Pr_{\rho,N}\left[\max_{u\in N} \Var_{U \mid u\in U}[Z] \leq c\left(1+\frac{1}{\sqrt{k\delta}}\right) \frac{k}{n^2}\right]  \geq 1-\delta\ .
 \]
\end{itemize}
\end{lemma}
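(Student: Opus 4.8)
The plan is to rerun the $k$-mins argument of Lemma~\ref{gap:lemma} and Lemma~\ref{varbound:lemma} (via Lemma~\ref{gapexpectationhelp:lemma} and Lemma~\ref{expressvar:lemma}), replacing the per-hashmap decomposition of $Z_i$ by the single-hash rank decomposition $Z=\sum_{j=1}^{R}\Delta_j$. Fix $\rho$ and $N$ and model the randomness of a query $U$ by independent bits $B_j:=\mathbf 1[m_j\in U]\sim\Bern[1/2]$, so that $R$ is the position of the $k$-th one among $B_1,B_2,\dots$, it has the law of a sum of $k$ i.i.d.\ $\Geom[1/2]$ (mean $2k$, variance $\Theta(k)$), and $\mathbf 1[j\le R]=\mathbf 1[\#\{\ell<j:B_\ell=1\}\le k-1]$. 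The $\Delta_j\sim\Exp[n-j+1]$ are independent over $\rho$ with $\E_\rho[\Delta_j]\ge1/n$. Condition throughout on the good-draw event that every query contains at least $k$ keys of $N_0$ (probability $1-0.01/r$ per step, and depending only on the rank order, not the hash values). For $u\in N'$ this event forces the $k$-th smallest hash in $U$ to have rank below that of $u$, so $\E_{U\mid u\in U}[Z]=\E_U[Z]$ for every $u\in N'$; hence $G(u,m_i)$ is the same for all $u\in N'$ and the $\min$ in part~(i) is vacuous.

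For part~(i), compare $\Pr[R\ge j]=\Pr[\Binom(j-1,\tfrac12)\le k-1]$ with $\Pr[R\ge j\mid B_i=1]$. Both equal $1$ for $j\le k$; for $j>k$, conditioning on $B_i=1$ replaces $\Binom(j-1,\tfrac12)$ by $1+\Binom(j-2,\tfrac12)$, and conditioning on the last Bernoulli trial gives
\[
G(u,m_i)=\E_U[Z]-\E_{U\mid m_i\in U}[Z]=\tfrac12\sum_{j>k}\Delta_j\,\Pr\!\left[\Binom\!\left(j-2,\tfrac12\right)=k-1\right],
\]
independent of the choice of $i\in[k]$. The weights $w_j:=\tfrac12\Pr[\Binom(j-2,\tfrac12)=k-1]$ are nonnegative, sum to $1$ (using $\sum_{m\ge k-1}\binom{m}{k-1}2^{-m}=2$), are each $O(1/\sqrt k)$, and have their bulk on a window $j\in[2k-\Theta(\sqrt k),\,2k+\Theta(\sqrt k)]$ on which $w_j\ge\Omega(1/\sqrt k)$ and (since $n=\Omega(\tfrac1\alpha k\log(kr))$) $n-j+1=\Theta(n)$. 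Restricting $G$ to that window lower-bounds it by $\Omega(1/\sqrt k)$ times a sum of $\Theta(\sqrt k)$ independent exponentials that stochastically dominate $\Exp[n]$, i.e.\ by $\tfrac{1}{\sqrt k}\cdot\Gamma(\Theta(\sqrt k),\Theta(n))$, while $\E_\rho[G]\ge1/n$. The left tail $\Pr[\Gamma(m,1)<x]\le x^m/m!\le(ex/m)^m$ with $m=\Theta(\sqrt k)$ then yields $\Pr_\rho[G<\delta/(3n)]\le\delta$ for every $\delta\in(0,1]$ once $k$ exceeds an absolute constant, which is part~(i).

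For part~(ii), express $\Var_U[Z]$ through the covariance of the monotone indicators: for $j\le j'$, $\cov(\mathbf 1[R\ge j],\mathbf 1[R\ge j'])=\Pr[R<j]\,\Pr[R\ge j']$, so
\[
\Var_U[Z]=\sum_j\Delta_j^2\Pr[R<j]\Pr[R\ge j]+2\!\!\sum_{j<j'}\!\!\Delta_j\Delta_{j'}\Pr[R<j]\Pr[R\ge j'] .
\]
Symmetrizing the cross terms via $2\Delta_j\Delta_{j'}\le\Delta_j^2+\Delta_{j'}^2$ and telescoping (each inner sum collapses to an $\E[(R-j)^+]$- or $\E[(j-R)^+]$-type quantity, which is $O(\sqrt k)$ and nonnegligible only on an $O(\sqrt k)$-window around $j\approx 2k$, using $\Var[R]=\Theta(k)$ and the tail $\Pr[R>c\,\E[R]]\le ce^{-c}$ of \citep{JansonTailbounds:2017}) bounds $\Var_U[Z]$ by a nonnegative weighted sum $\sum_j W_j\Delta_j^2$ with $\sum_j W_j=O(k)$; the same bound dominates $\Var_{U\mid u\in U}[Z]$ for every $u$ (an equality up to the good-event normalizer when $u\in N'$, and when $u=m_i$, $i\le k$, conditioning only shifts $R$ by one step and is absorbed into constants). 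Thus $Y:=\max_u\Var_{U\mid u\in U}[Z]$ is bounded by a weighted sum of independent squared exponentials ($\E_\rho[\Delta_j^2]=\Theta(1/n^2)$, $\Var_\rho[\Delta_j^2]=O(1/n^4)$), and exactly as in the proof of Lemma~\ref{varbound:lemma} one gets $\E_\rho[Y]=\Theta(k/n^2)$ together with the matching second-moment estimate, after which Chebyshev gives $\Pr_\rho[Y\ge c(1+\tfrac{1}{\sqrt{k\delta}})\tfrac{k}{n^2}]\le\delta$; both sides coincide with $\Theta(k/n^2)$ in the constant-$\delta$ regime used later.

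The main obstacle is the gap bound of part~(i). In the $k$-mins case the per-hashmap gap is bounded below by a \emph{single} exponential $\Delta^i_2/2$, so the $1-\delta$ guarantee is immediate from the anti-concentration of one exponential and holds for every $\delta$; here the gap is a weighted sum of $\Theta(\sqrt k)$ independent exponentials, each carrying only an $O(1/\sqrt k)$ fraction of the total weight, so no single term suffices and one must instead invoke concentration (a Gamma left tail) of the whole window sum — which is why the clean all-$\delta$ statement needs $k$ above an absolute constant. The residual bookkeeping — the coupling of the $\Delta_j$, hence of the queries, through the shared ground set, and the conditioning on the good-draw event — is handled exactly as in the $k$-mins analysis, since that event depends only on ranks.
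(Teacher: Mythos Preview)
Your approach matches the paper's (very terse) sketch: write the gap as a weighted average $G=\sum_j w_j\Delta_j$ of the independent exponential increments, and bound $\Var_U[Z]$ by $\sum_j W_j\Delta_j^2$ with total weight $\Theta(k)$ (your AM--GM step gives exactly $W_j=\cov(\mathbf 1[R\ge j],R)$, and these sum to $\Var[R]=2k$), then pass to concentration over~$\rho$. Your explicit formula $w_j=\tfrac12\Pr[\Binom(j-2,\tfrac12)=k-1]$ and the observation that $G$ does not depend on $i\in[k]$ are correct and considerably more detailed than what the paper provides.

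One small technical point on part~(i): the crude bound $(ex/m)^m$ has base $e\delta/(3cc')$, and on any window one has $cc'\le\sum_{j\in W}w_j\le 1$ (the local-CLT profile of the $w_j$ in fact caps $cc'$ near $0.48$), so the base exceeds~$1$ once $\delta\gtrsim 3cc'/e\approx 0.5$, and then the bound \emph{grows} with $k$ rather than decays --- your ``once $k$ exceeds an absolute constant'' does not repair this range. The fix is immediate: either use the sharper Chernoff form $\Pr[\Gamma(m,1)<x]\le (x/m)^m e^{m-x}$ with a window chosen so that $cc'\ge 1/3$, or for $\delta$ above an absolute constant simply invoke $\Pr[G<1/(3n)]\le e^{-\Omega(\sqrt k)}$. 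Since the lemma is only applied at constant~$\delta$, this is cosmetic.
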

\begin{proof}
    (i) Note that $Z = m_R$. When $i\leq [k]$, $Z = m_{R+1}$. The gap is a weighted average of $\Delta_j$ for $j\in [R, |N_0|]$. These are independent $\Exp[n-i]$ random variables with $i\leq n/3$.  The expected value is $\Theta(1/n)$ and the tail bounds are at least as tight as for a single $\Exp[n/3]$ random variable.  

    (ii) We use the concentration bound on $R$ to express the variance for fixed $\rho,N$ as a weighted sum with total weight $\Theta(k)$ and each of weight $O(1)$ of independent squared exponential random variables.  The argument is as in the proof of Lemma~\ref{expressvar:lemma}.
\end{proof}

Using the same analysis, a subset $U\subset N$ of size $\alpha n$ has $T(S(U))$ that in expectation has the $k/\alpha$ smallest rank in $N$ with standard error $\sqrt{k}/\alpha$ and normalized standard error $1/\sqrt{k}$.  The subset $U$ selected as a prefix of the order generating in the attack includes the $(1-\delta)k$ of the bottom-$k$ in $N$ and 
$\delta k$ of the bottom in $U$.  This means that in expectation $T(S(U))$ has the $k \delta/\alpha$ rank in $N$.  That is, error that is $(1/\delta)$ factor off.

\section{Analysis of Attack on General Query Response Algorithms} \label{genattackproofs:sec}

We include details for Sections~\ref{singlebatch:sec} and~\ref{generalmain:sec}.

\subsection{Rank-domain representation of sketches}

We use the 
\emph{rank domain} representation $S_\rho^R(U)$ of the input sketch $S_\rho(U)$. This representation is defined for subsets of a fixed ground set $N$. Instead of hash values, it includes the ranks in $N$ of the keys that are represented in the sketch $S_\rho(U)$ with respect to the relevant hashmaps. 
\begin{definition} (Rank domain representation)
For a fixed ground set $N$, and a subset $U\subset N$, the 
\emph{rank domain} representation $S^R_\rho(U)$ of a respective MinHash sketch has   
the form
$(Y_1,\ldots,Y_k)$, where $Y_i \in \mathbbm{N}$.
\begin{itemize}
\item
    $k$-mins sketch: For $i\in[k]$ and $j\geq 1$, let $m^i_j$ be the key $x\in N$ with the $j$th smallest $h_i(x)$. 
For $i\in [k]$, let $Y_i := \arg\min_j m^i_j\in U$. That is, $Y_i$ is the smallest $j$ such that $m^i_j\in U$.
\item
$k$-partition sketch: For $i\in[k]$ and $j\geq 1$, let $m^i_j$ be the key $x\in N$ that is in part $i$ with the $j$th smallest $h(x)$. For $i\in [k]$, let $Y_i := \arg\min_j m^i_j\in U$ be the smallest rank in the $i$th part.
\item
Bottom-$k$ sketch: For $j\geq 1$, let $m_j$ be the key $x\in N$ with the $j$th smallest $h(x)$ value.  Let the bottom-$k$ keys in $U$ be $m_{i_1},m_{i_2},\ldots, m_{i_k}$ where $i_1<i_2<\cdots <i_k$. We then define
$Y_1 := i_i$ and $Y_j := i_j - i_{j-1}$ for $1<j\leq k$.
\end{itemize}
\end{definition}
Note that when the ground set $N$ is available to the query response algorithm the rank domain and MinHash representations are equivalent (we can compute one from the other).

The following properties of the rank domain facilitate a simpler analysis: (i)~It only depends on the order induced by the hashmaps and not on actual values and thus allows us to factor out dependence on $\rho$, (ii)~It  subsumes the information on $q$ (and $|U|$) available from $S_\rho(U)$ and (iii)~It has a unified form and facilitates a unified treatment across the MinHash sketch types. 

The subsets $U\sim \mathcal{D}_0$ generated by our attack algorithm selects a rate $q$ and then sample $U$ by including each $x\in N$ independently with probability $q$. 
We consider the distribution, which we denote by $S^R[q]$ of the rank domain sketch under this sampling of $U$ with rate $q$.
We show that for a sufficiently large $|N|=n$, the rank domain representation is as follows: 

\begin{lemma} [distribution of rank-domain sketches] \label{dist:lemma}
For $\delta>0$, $q\in (0,1)$,  and an upper bound $r$ on the attack size, let $L =\log_2(rk/\delta)/q + 10$, and assume $n > 3kL/\delta$.
Then for all the three MinHash sketch types, the distribution 
$S^R[q]$ is within total variation distance $\delta$ from 
$(Y_1,\ldots,Y_k)$ that are  
$k$ independent geometric random variables with parameter $q$: 
$Y_i \sim \Geom[q]$.
\end{lemma}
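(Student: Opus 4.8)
The plan is to condition on the hash randomness $\rho$ and regard the query set $U$ as produced by independent coin flips $B_x\sim\Bern[q]$, one per key $x\in N$. In this view each rank-domain coordinate $Y_i$ is a \emph{waiting time}: scanning the keys of $N$ in the order relevant to register $i$ (the $i$-th hash order for $k$-mins, the keys of part $i$ for $k$-partition, the single hash order for bottom-$k$), $Y_i$ counts how many keys are examined up to and including the first one with $B_x=1$. The target distribution $\Geom[q]^{\otimes k}$ is exactly the joint law of $k$ such waiting times when the underlying flips are all distinct and independent, so the proof reduces to bounding three discrepancies: the scan for register $i$ may run out of keys before a success (truncation); for $k$-mins the $k$ scans overlap, so the flips they read are not a priori independent; and for bottom-$k$ fewer than $k$ successes may occur within $N$.

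I would dispatch bottom-$k$ and $k$-partition first. For bottom-$k$ there is a single Bernoulli process on $N$, and by its renewal property the position of the first success and the successive gaps are i.i.d.\ $\Geom[q]$ \emph{exactly}; the only bad event is $|U\cap N|<k$, but $n>3kL/\delta$ forces $nq=\Omega\!\big(k\log(rk/\delta)/\delta\big)$, so a Chernoff bound makes this event have probability $\ll\delta$. For $k$-partition the $k$ registers read disjoint blocks of $N$ (the parts), hence $(Y_1,\dots,Y_k)$ are genuinely independent, each a $\Geom[q]$ truncated at the size of its part; a Chernoff bound (mean part size $n/k>3L/\delta$) shows every part has at least $L$ keys except with probability $\ll\delta$, and on that event $\Pr[Y_i>L]\le (1-q)^L\le e^{-qL}\le\delta/(rk)$ by the choice $L=\log_2(rk/\delta)/q+10$, so the truncation costs at most $k\cdot\delta/(rk)=\delta/r$ in total variation.

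The main obstacle is $k$-mins, where the $k$ hash orders share the universe and the scans overlap. Here I would condition on the ``good $\rho$'' event that the $kL$ keys $\{m^i_j:i\in[k],\,j\le L\}$ — the top-$L$ keys of the $k$ orders — are pairwise distinct; its failure probability is controlled by a birthday/union bound over the $\binom{k}{2}$ pairs of registers (each pair contributing collision probability $O(L^2/n)$), and this is where the lower bound on $n$ enters. On this event, whether $Y_i\le L$ is determined by the flips $B_{m^i_1},\dots,B_{m^i_L}$, and the $kL$ flips so used are mutually independent, hence $(Y_1\wedge L,\dots,Y_k\wedge L)$ has exactly the law of $(G_1\wedge L,\dots,G_k\wedge L)$ for i.i.d.\ $G_i\sim\Geom[q]$. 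Assembling the pieces via the coupling inequality, the total variation distance is at most the distinctness-failure probability plus $\Pr[\exists i:Y_i>L]+\Pr[\exists i:G_i>L]$, and since each of the latter two is at most $k(1-q)^L\le\delta/r$ and the choice of $n$ absorbs the former, the bound $\delta$ follows. The three cases package uniformly: in every case $Y_i$ is a Bernoulli waiting time over a block of $N$ that has length $\ge L$ with high probability, the blocks are disjoint (or renewal-independent) up to a rare collision event, and truncation at $L$ is exponentially unlikely.
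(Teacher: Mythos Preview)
Your proposal is correct and follows essentially the same approach as the paper: both view each $Y_i$ as a Bernoulli waiting time, use the birthday paradox to guarantee that the first $L$ keys in each of the $k$ orders are distinct (for $k$-mins) or that each part has at least $L$ keys (for $k$-partition), and bound the truncation error via $(1-q)^L\le \delta/(rk)$. Your write-up is in fact more explicit than the paper's sketch, particularly in spelling out the coupling inequality and the renewal argument for bottom-$k$; the only point to be careful about is that the birthday bound you quote, $O(k^2L^2/n)$, is not obviously $\le\delta$ under the stated hypothesis $n>3kL/\delta$ alone---the paper is equally loose here, so this is a shared quantitative slack rather than a gap in your argument.
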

\begin{proof}
As in Lemma~\ref{goodNrho:lemma}. Applying the birthday paradox with $n>3kL/\delta$, with probability at least $1-\delta$:
For $k$-mins sketches, the keys $m^i_j$ for $i\in [k]$ and $j\in [L]$ are distinct. For $k$ partition sketches, there are at least $L$ keys assigned to each part so the keys $m^i_j$ for $i\in [k]$ and $j\in [L]$ are well specified.

A sketch from $S^R[q]$ can be equivalently sampled using the following process:
\begin{itemize}
\item
    $k$-mins and $k$-partition sketch: For each $i\in[k]$, process keys $m^i_j$ by increasing $j\geq 1$ until $\Bern[q]$ and then set $Y_i=j$.
\item Bottom-$k$ sketch:  Process keys $m_j$ in increasing $j\geq 1$ until we get $\Bern[q]$ $k$ times. 
\end{itemize}    

We next establish that with our choice of $n$, with probability at least $1-\delta$, in all of $r$ sampling of $U$, 
the sketch $S^R(U)$ is determined by the $Lk$ smallest rank keys. Therefore there are sufficiently many keys for the sketch to agree with the sampling $k$ i.i.d.\ $\Geom[k]$ random variables.

For $k$-mins and $k$-partition sketches, the probability that for a single hashmap $i\in [k]$ none of the $L$ smallest rank is included is at most $(1-q)^L$. Taking a union bounds over $k$ maps and $r$ sampling and using that $\log(1/(1-q) \approx q$ gives the claim. With bottom-$k$ sketches the requirement is that in all $r$ selections, the $k$th smallest rank is $O(kL)$.
\end{proof}

\begin{remark} \label{statisticprop:rem}
Estimating $q$ from a sketch from $S^R[q]$ is a standard parameter estimation problem. A sufficient statistic $T$ for estimating $q$  is $T(S^R) := \sum_{i=1}^k Y_i$. 
 Note the following properties:
\begin{itemize}
    \item 
    The distribution $S^R[q]$ does not provide  additional information on the cardinality $|U|$ beyond an estimate of $q$.
   \item
 The distribution of $S^R[q]$ conditioned on $T(S)=\tau$ is the same for all $q$ (this follows from the definition of sufficient statistic).
   \item
  The statistic $T$ has expected value $k/q$, variance   $k(1-q)/q^2$, and single-exponential concentration~\citep{janson2017tail}. 
 \end{itemize}
\end{remark}

\subsubsection{Continuous rank domain representation}
We now cast the distribution of $S^R[q]$ using a continuous representation $S^C$.  This is simply a tool we use in the analysis.

We can sample a sketch from $S^R[q]$ as follows
\begin{itemize}
    \item 
     Set the rate $q':=-\ln(1-q)$
    \item 
     Sample a sketch $S^C(U)=(Y'_1,\ldots,Y'_k)$ where $Y'_i \sim \Exp[q']$ are i.i.d
    \item 
     Compute $S^R(U)$ from $S^C(U)$ using $Y_i \gets  
1+\lfloor Y'_i \rfloor +1$ for $i\in [k]$.
\end{itemize}
The correctness of this transformation is from the relation between a geometric $\Geom[q]$ and exponential $\Exp[q']$ distributions:
\[
\Pr[Y_i=t] = \Pr[ t-1 \leq Y'_i < t] = e^{-q'(t-1)} - e^{-q't} = (1-e^{-q'})\cdot e^{- q't} = q\cdot(1-q)^t\ .
\]

Note that we can always recover $S^R$ from $S^C$ but we need to know $q$ in order to compute $S^C$ from $S^R$:
\[Y_i'\sim \Exp[-\ln(1-q)]\mid Y'_i\in [Y_i-1,Y_i)\ .\] 
Therefore being provided with the continuous representation only makes the query response algorithm more informed and potentially more powerful.  Also note that $|q-q'|< q^2/2$.

A sufficient statistic for estimating $q'$ from $S^C[q']$ is $T' := \sum_{i=1}^k Y'_i$.  
In the sequel we will work with $S^C$ and omit the prime from $q$ and $T$.

Now note that the distribution of $T := \| S^C[q]\|_1$ for a given $k$ and $q$ is
the sum of $k$ i.i.d.\ $\Exp[q]$ random variables. This is  the Erlang distribution that has density function for $x\in[0,\infty]$:
\begin{equation}\label{Erlangpdf:eq}
 f_T(k,q;x) = \frac{q^k}{(k-1)!} x^{k-1}e^{-qx}
\end{equation}
The distribution has mean $\E[T]=k/q$, variance 
$\Var[T] = k/q^2$ and exponential tail bounds~\citep{JansonTailbounds:2017}:
\begin{align}
    \text{For $c>1$:}\, & \Pr[T\geq c\cdot k/q] \leq \frac{1}{c} e^{-k(c-1-\ln c)} \label{tailup:eq}\\
    \text{For $c<1$:}\, & \Pr[T\leq c\cdot k/q] \leq  e^{-k(c-1-\ln c)} \label{taildown:eq}
\end{align}

Consider the random variable 
\begin{equation} \label{Z:eq}
    Z = (T-\E[T])/\sqrt{\Var[T]}
\end{equation}
 that is the number of standard deviations of $T$ from its mean. We have
$T= \frac{k}{q} + Z\cdot \frac{\sqrt{k}}{q}$ and $Z = \frac{qT}{\sqrt{k}}-\sqrt{k}$.

The domain of $Z$ is $[-\sqrt{k},\infty)$ and the density function of $Z$ is
\begin{align}
 f_Z(k;z) &= \frac{\sqrt{k}}{q}\frac{q^k}{(k-1)!}(\frac{k}{q}+z\frac{\sqrt{k}}{q})^{k-1}e^{-q(\frac{k}{q}+z\frac{\sqrt{k}}{q})}\nonumber\\
 &= \frac{\sqrt{k}}{(k-1)!}(k+z\sqrt{k})^{k-1}e^{-(k+z\sqrt{k})}\label{Deltapdf:eq}
\end{align}

This density satisfies 
\begin{align}
    &\int_{-\sqrt{k}}^\infty f_Z(k;x) x dx = 0 \label{EDelta:eq}\\
    &\int_{-\sqrt{k}}^\infty f_Z(k;x) x^2 dx = 1 \label{squared:eq} \\
    &\int_{-\sqrt{k}/4}^{\sqrt{k}/4} f_Z(k;x) x^2 dx = \Theta(1) \label{squaredin:eq} \\
    &\text{for $c\in (0,1]$, }\int_{-c}^0 f_Z(k;x) dx, \int_{0}^{c} f_Z(k;x) dx = \Omega(c) \label{boundedrange:eq}\\
    &\text{for $c \geq 0$, } \Pr[T \geq c\cdot \sqrt{k}] \leq \frac{1}{c+1} e^{-k\cdot (c-\ln(c+1))} \label{uptailZ:eq}\\
    &\text{for $c\in (0,1)$, } \Pr[T \leq -c\cdot \sqrt{k}] \leq e^{-k\cdot (1-c-\ln(1-c))} \label{lowtailZ:eq}
\end{align}
Note that $T$ is available to the query response algorithm but $q$, and thus the value of $Z$ are not available.

\subsection{Correct maps}

A map $S \mapsto \pi(S)\in [0,1]$ maps sketches to the probability of returning $1$. We require that the maps selected by QR are correct as in Definition~\ref{correctmap:def} with $\delta=O(1/\sqrt{k})$.

For a map $\pi$ and $\tau$ we denote by 
$\overline{\pi}(\tau)$ the mean value of $\pi(S)$ over sketches with statistic value $T(S)=\tau$. This is well defined since for the query distribution in our attacks $\mathcal{D}_0\mid q=q^*$, even when conditioned on a fixed rate $q^*$, 
the distribution of the sketch conditioned on $T(S)=\tau$ does not depend on $q^*$ (See Remark~\ref{statisticprop:rem}). 

We now specify conditions on the map
$\overline{\pi}(\tau)$ that must be satisfied by a correct $\pi$.
A correct map may return an incorrect output, when conditioned on cardinality, with probability $\delta$. This means that there are correct maps with large error on certain $\tau$ (since each cardinality has a distribution on $\tau$).  We therefore can not make a sharp claim on $\overline{\pi}(\tau)$ that must hold for any $\tau$ in an applicable range. Instead, we make an average claim: For any interval of $\tau$ values that is wide enough to include $\Omega(1)$ of the values for some cardinality value $c\not\in [A,2A]$, the average error of the mapping must be $O(\delta)$.

\begin{claim} \label{Tforcorrectmap:claim}
For any $\xi > 0$, there is $c_0>0$ such that for any correct map $\pi$ for $A$ and $\delta\leq c_0/\sqrt{k}$ and 
$\tau_b > (1+0.1/\sqrt{k}) \tau_a$ it holds that
\begin{equation}
\begin{cases}
  \frac{1}{\tau_b-\tau_a}  \int_{\tau_a}^{\tau_b} \overline{\pi}(x) dx < \xi & \text{if } \tau_a > \frac{kn}{A} (1-1/\sqrt{k}) \\
   \frac{1}{\tau_b-\tau_a} \int_{\tau(1-a/\sqrt{k})}^\tau \overline{\pi}(x) dx > 1- \xi & \text{if } \tau_b < \frac{kn}{2A}(1+1/\sqrt{k})
\end{cases}
\end{equation}
\end{claim}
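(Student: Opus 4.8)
The plan is to translate the correctness requirement of Definition~\ref{correctmap:def} from the cardinality domain into the $\tau$-domain via the distribution of $T$, using the Erlang/$Z$-density facts \eqref{Erlangpdf:eq}--\eqref{lowtailZ:eq}. First I would fix a cardinality value $c$ with $c<A$ (the first case; the second is symmetric with $c>2A$), recall that under $\mathcal{D}_0 \mid |U|=c$ the rate $q$ is essentially $q \approx k/\E[T]$ with $\E_c[T] = kn/c$ up to the $(1\pm 1/\sqrt k)$ slack coming from Lemma~\ref{dist:lemma} and the geometric-vs-exponential discretization, and that $T$ concentrates with the single-exponential tails \eqref{uptailZ:eq}--\eqref{lowtailZ:eq} around $\E_c[T]$ with standard deviation $\sqrt{k}/q = \E_c[T]/\sqrt{k}$. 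The correctness condition $\E_{U\sim\mathcal{D}_0 \mid |U|=c}[\pi(S_\rho(U))] \le \delta$ rewrites, after conditioning on $T(S)=x$ and using that the conditional sketch distribution is $q$-independent (Remark~\ref{statisticprop:rem}), as $\int f_T(k,q_c;x)\,\overline{\pi}(x)\,dx \le \delta$, i.e.\ $\overline{\pi}$ integrated against the Erlang density centered at $\E_c[T]$ is at most $\delta$.

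The key step is then a covering/averaging argument. Given the target interval $[\tau_a,\tau_b]$ with $\tau_a > \frac{kn}{A}(1-1/\sqrt k)$ and $\tau_b > (1+0.1/\sqrt k)\tau_a$: since $\tau_a$ already exceeds (essentially) the largest possible mean $\E_c[T]$ over $c<A$, and the window width $\tau_b-\tau_a$ is at least a constant fraction of a standard deviation of $T$ for the cardinality $c$ achieving $\E_c[T]$ closest to $\tau_a$, the density $f_T(k,q_c;\cdot)$ places $\Omega(1)$ mass on $[\tau_a,\tau_b]$ — this is exactly inequality \eqref{boundedrange:eq} (in $Z$-coordinates, an interval of constant length near $0$, shifted so its left endpoint is at or above the mean, carries $\Omega(1)$ probability). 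Hence
\[
\frac{1}{\tau_b-\tau_a}\int_{\tau_a}^{\tau_b}\overline{\pi}(x)\,dx \;\le\; \frac{1}{(\tau_b-\tau_a)\cdot \min_{x\in[\tau_a,\tau_b]} f_T(k,q_c;x)}\int_{\tau_a}^{\tau_b} f_T(k,q_c;x)\,\overline{\pi}(x)\,dx \;\le\; \frac{\delta}{c_1},
\]
for a suitable constant $c_1=c_1(\xi)$ controlling the density's lower bound on the window; choosing $\delta \le c_0/\sqrt k$ with $c_0$ small enough and absorbing the $1/\sqrt k$ factors makes the right side $<\xi$. The lower-endpoint case is the mirror image: when $\tau_b < \frac{kn}{2A}(1+1/\sqrt k)$, for $c>2A$ the mean $\E_c[T]=kn/c < \frac{kn}{2A}$ lies (up to slack) at or above $\tau_b$, so $[\tau_a,\tau_b]$ sits in the left tail/body of $f_T(k,q_c;\cdot)$, again carrying $\Omega(1)$ mass by \eqref{boundedrange:eq}, and the correctness bound $\E_c[1-\pi]\le\delta$ gives $\frac{1}{\tau_b-\tau_a}\int (1-\overline{\pi}) < \xi$, i.e.\ the average of $\overline{\pi}$ exceeds $1-\xi$.

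The main obstacle I expect is pinning down the constants and the $1/\sqrt k$ slack so the chosen window $[\tau_a,\tau_b]$ is guaranteed to land in the region where some admissible $c$ (with $c<A$, resp.\ $c>2A$) puts constant density: one must check that the width threshold $0.1/\sqrt k$ relative to $\tau_a$, combined with the factor-$2$ gap between $A$ and $2A$ and the $(1\pm 1/\sqrt k)$ discretization error from Lemma~\ref{dist:lemma}, still leaves room to apply \eqref{boundedrange:eq} with a genuinely constant (not $k$-dependent) lower bound on $f_T$. A secondary nuisance is that $\mathcal{D}_0$ mixes over a small range of rates $q$ rather than a single $q$, so $\E_c[T]$ and the variance are only approximately the Erlang parameters; I would handle this by noting the mixing range is $o(1/\sqrt k)$-narrow (from the definition of $\mathcal{D}_0$ in Algorithm~\ref{samplerate:algo}) and hence negligible against the constant-length $Z$-window, or alternatively by conditioning on $q$ as well as on $|U|=c$ throughout and only un-conditioning at the end.
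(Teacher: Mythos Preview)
Your overall plan---translate correctness into an Erlang-density-weighted bound on $\overline\pi$ and then average---is the right one and is also what the paper does. But the displayed inequality has a real gap. You invoke a \emph{single} cardinality $c$ and divide by $\min_{x\in[\tau_a,\tau_b]}f_T(k,q_c;x)$; for the conclusion $\le\delta/c_1$ you need $(\tau_b-\tau_a)\cdot\min_x f_T\ge c_1$. The hypothesis $\tau_b>(1+0.1/\sqrt k)\tau_a$ is only a \emph{lower} bound on the width, and in the actual use of this claim (inside Claim~\ref{deltascore:claim}, equations \eqref{leftendlowbound:eq}--\eqref{rightendbound:eq}) the intervals have constant relative width, e.g.\ $[\omega_a+W,\omega_a+\omega/4+W]$. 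Any fixed Erlang density $f_T(k,q_c;\cdot)$ has standard deviation $\E_c[T]/\sqrt k$, so over an interval of width $\Theta(\E_c[T])$ its minimum is $e^{-\Omega(k)}$ and your bound becomes vacuous. Knowing only that $[\tau_a,\tau_b]$ carries $\Omega(1)$ mass is not enough either: $f_T$ could concentrate that mass on a tiny subinterval while $\overline\pi\equiv 1$ on the remainder.

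The paper closes this gap by actually carrying out the covering you name but do not execute. It tiles $[\tau_a,\tau_b]$ by subintervals $[\tau,\tau(1+0.1/\sqrt k)]$ and, for \emph{each} tile, picks the cardinality $c\approx kn/\tau$ so the Erlang mean sits at $\tau$ and the density on that tile is $\Theta(\sqrt k/\tau)=\Theta(1/\text{tile width})$. Correctness for that $c$ then bounds $\int_\tau^{\tau(1+0.1/\sqrt k)}\overline\pi$ tile by tile; summing over tiles gives a uniform $O(\sqrt k\,\delta)$ bound on the average over all of $[\tau_a,\tau_b]$, and $\delta\le c_0/\sqrt k$ makes this $<\xi$. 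So your proof sketch is salvageable, but only after replacing the single-$c$ density lower bound by this per-tile argument with a varying $c$. (A minor correction: the sentence ``$\tau_a$ already exceeds the largest possible mean $\E_c[T]$ over $c<A$'' is reversed---for $c<A$ the means $kn/c$ are bounded \emph{below} by $kn/A$ and unbounded above; the relevant fact is that $\tau_a$ exceeds their infimum, so for every $\tau\ge\tau_a$ there is an admissible $c<A$ with mean near $\tau$.)
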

\begin{proof}
For a cardinality value $c$, 
the distribution of the statistic $T$ conditioned on a cardinality value $c$ is $f_T(k,c/n;x)$ \eqref{Erlangpdf:eq}. With cardinality value $c$, it holds that
$\Pr[T< kn/c] \geq 1/e$ and $\Pr[T> kn/c] \geq 1/e$. Moreover, the density in the interval $\frac{kn}{c}[1-0.1/\sqrt{k},1+0.1/\sqrt{k}]$ is $\Theta(1)$.
It follows from the correctness requirement for cardinality value $c = k/\tau$ that there exists $c_1>0$ such that:
\begin{equation}
\begin{cases}
    \int_\tau^{\tau(1+0.1/\sqrt{k})} \overline{\pi}(x) dx <c_1 \delta & \text{if } \tau > \frac{kn}{A}(1-1/\sqrt{k}) \\
    \int_{\tau(1-0.1/\sqrt{k})}^\tau \overline{\pi}(x) dx > 1- c_1\delta & \text{if } \tau < \frac{kn}{2A}(1+1/\sqrt{k})
\end{cases}
\end{equation}
Therefore for $\tau_a > \frac{kn}{A} (1-1/\sqrt{k})$
\begin{align*}
    \int_{\tau_a}^{\tau_b} \overline{\pi}(x) dx \leq 10 \sqrt{k}(\tau_b-\tau_a) c_1 \delta
\end{align*}
and for $\tau_b < \frac{kn}{2A}(1+1/\sqrt{k})$
\begin{align*}
    \int_{\tau_a}^{\tau_b} \overline{\pi}(x) dx \geq (\tau_b-\tau_a)\cdot (1 - 10 \sqrt{k} c_1 \delta)\ .
\end{align*}
    Choosing $c_0 \leq 10 c_1$ establishes the claim.
\end{proof}

For fixed $q$, the cardinality $|U|$ of the selected $U$ has distribution $\Binom(q,n)$.
The $n$ chosen for the attack is large enough so that for all our $r$ queries $||U|-qn|/(qn) \ll 1/\sqrt{k}$. That is, the variation in $|U|$ for fixed $q$ is small compared with the error of the sketch and $|U|\approx qn$.

 \subsection{Relating $Z$ and sampling probability of low rank keys}

The sketch is determined by $k$ keys that are lowest rank in $U$. We can view the sampling of $U$ to the point that the sketch is determined in terms of a process, as in the proof of Lemma~\ref{dist:lemma}, that examines keys from the ground set $N$ in a certain order until the $k$ that determine the sketch are selected. The process selects each examined key with probability $q$. 
For a bottom-$k$ sketch, keys are examined in order of increasing rank until $k$ are selected. With $k$-mins and $k$-partition, keys in each part (or hash map) are examined sequentially by increasing rank until there is selection for the part. For all sketch types,
the statistic value $T$ corresponds to the number of keys from $N$ that are examined until $k$ are selected. This applies also with the continuous representation $S^C$.

We denote by $N_0$ the set of keys that are examined with probability at least $\delta_c \leq 1/(rk)$ when the rate is at least $q_a$. We refer to these keys as \emph{low rank} keys. It holds that
$|N_0| \leq k \ln(1/\delta_c)/q_a$.
For $\delta_c = 1/O(rk)$ we have $|N_0|=O(k \log(rk))$.
The remaining keys $N' := N\setminus N_0$ are unlikely to impact the sketch content and we refer to them as \emph{transparent}. 

With rate $q$, the probability that a certain key is included in $U$ is $q$. 
We now consider a rate $q$ and the inclusion probability \emph{conditioned on the normalized deviation from the mean $Z$}.
Transparent keys have inclusion probability $q$.
The low rank keys $N_0$ however have \emph{average} inclusion probability that depends on $Z$. Qualitatively, we expect that when $Z < 0$, the inclusion probability is larger than $q$ and this increases with magnitude $|Z|$. When $Z>0$, the inclusion is lower and decreases with the magnitude.  This is  quantified in the following claim:
\begin{claim} \label{AveM:claim}
    Fix a rate $q$ and $\Delta$. Consider the  distribution of $U$ conditioned on $Z=\Delta$. The average probability over $N_0$ keys to be in $U$ is $q - \Delta \frac{\sqrt{k}}{|N_0|}$.
\end{claim}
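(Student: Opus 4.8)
The plan is to express the conditional inclusion probabilities via the ``examination process'' description of the sketch. Recall that with rate $q$, the sketch $S^C(U)$ is determined by examining the keys of $N$ in the prescribed order (by rank within each part, or globally for bottom-$k$) and including each with probability $q$; the statistic $T$ equals the total number of keys examined until $k$ are selected, and $Z=(qT-k)/\sqrt{k}$ is the normalized deviation. So conditioning on $Z=\Delta$ is the same as conditioning on $T=\tau$ where $\tau = k/q + \Delta\sqrt{k}/q$, i.e.\ fixing the total number of examined keys. A key $x\in N_0$ is ``in $U$'' (as far as the sketch is concerned) precisely when $x$ is one of the $k$ selected keys among the first $\tau$ examined. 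Summing over the low-rank keys, $\sum_{x\in N_0}\mathbf{1}(x\text{ selected})$ counts how many of the $k$ selected keys have low rank; since all $k$ selected keys lie among the first $\tau$ examined and $N_0$ contains all keys that are examined with non-negligible probability, essentially all $k$ selected keys are in $N_0$. Hence $\sum_{x\in N_0}\Pr[x\in U\mid Z=\Delta] \approx k$ exactly because the sketch is \emph{always} filled by $k$ low-rank keys (up to the $\delta$-probability bad events already excluded in Lemma~\ref{dist:lemma}).

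From here the computation is immediate: the average inclusion probability over the $|N_0|$ low-rank keys is
\[
\frac{1}{|N_0|}\sum_{x\in N_0}\Pr[x\in U\mid Z=\Delta] = \frac{k}{|N_0|}\cdot\frac{\text{(number examined that were \emph{not} selected)} }{\tau}\ ,
\]
wait — more cleanly: among the first $\tau$ examined keys exactly $k$ are selected, so the fraction of examined low-rank keys that end up selected, times the count of examined low-rank keys, gives $k$; but the relevant normalization is over \emph{all} $|N_0|$ low-rank keys, not just the examined ones. Writing $E$ for the (random, but here we condition so effectively deterministic in expectation) number of examined keys $=\tau$, the keys examined are selected with probability $q$ conditionally tilted by the fixed total count. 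The cleanest route is: $\sum_{x\in N_0}\Pr[x\in U\mid Z=\Delta]=k$ (the sketch is always complete), while $\sum_{x\in N_0}\Pr[x \text{ is examined}\mid Z=\Delta]=\tau$ (by definition of $T$, all examined keys are in $N_0$ w.h.p.), and among examined keys the inclusion indicator is Bernoulli. So the average over $N_0$ is $k/|N_0|$ scaled by... no. I would instead directly compute: a transparent key has inclusion prob $q$; the total inclusion mass over \emph{all} of $N$ conditioned on $Z=\Delta$ is $\E[|U|\mid Z=\Delta]$. But $|U|$ (the full set) is nearly independent of $Z$ — $Z$ only reflects the rank-ordering of the low-rank keys, not the global count — so $\E[|U|\mid Z=\Delta]\approx qn$. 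The transparent keys $N'=N\setminus N_0$ each contribute $q$, totaling $q(n-|N_0|)$. Therefore $\sum_{x\in N_0}\Pr[x\in U\mid Z=\Delta]\approx qn - q(n-|N_0|)=q|N_0| $... which gives average $q$, not $q-\Delta\sqrt{k}/|N_0|$. The correction term comes from the fact that the low-rank keys that get \emph{selected} are removed from the ``available pool'': conditioned on $T=\tau$, exactly $k$ of the first $\tau$ low-rank keys are selected and these are \emph{deterministically} in, while the remaining $|N_0|-\tau$ low-rank keys beyond rank $\tau$ are examined with probability $0$ (the sketch already closed) — contributing $0$, not $q$. So $\sum_{x\in N_0}\Pr[x\in U\mid Z=\Delta] = k$, and the average is $k/|N_0|$. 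Comparing to the unconditioned $q=\E[k/T]\approx k/\E[T]$ with $\E[T]=k/q$, and using $\tau = k/q+\Delta\sqrt k/q$, we get average $=k/|N_0|$, and writing $q|N_0|$ as the baseline the signed deficit relative to $q$ per key works out to $-\Delta\sqrt{k}/|N_0|$ after substituting $\tau$ and first-order expansion in $\Delta/\sqrt{k}$.

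So the concrete steps are: (1) rewrite the conditioning ``$Z=\Delta$'' as ``$T=\tau$'' with $\tau=k/q+\Delta\sqrt k/q$; (2) observe that, off a $\delta$-probability bad event (Lemma~\ref{dist:lemma}), the sketch is always determined by $k$ keys of $N_0$, hence $\sum_{x\in N_0}\Pr[x\in U\mid Z=\Delta]=k$ exactly, while the $|N_0|-(\text{examined})$ keys of $N_0$ beyond the closing rank contribute $0$; (3) divide by $|N_0|$ and compare with the transparent-key baseline $q$, using $q\approx k/\E[T]$ and a first-order Taylor expansion of $k/\tau$ around $\tau=k/q$ to extract the linear-in-$\Delta$ correction $-\Delta\sqrt k/|N_0|$. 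The main obstacle — and the only subtle point — is step (2): one must argue carefully that conditioning on the aggregate statistic $T$ does not bias \emph{which} low-rank keys are selected beyond what is forced by the count (this is exactly the sufficiency of $T$, Remark~\ref{statisticprop:rem}), and that the ``leftover'' low-rank keys past the closing rank genuinely get inclusion probability $0$ rather than $q$ in the sketch-relevant sense; handling the $\Theta(\delta)$ discrepancy from the bad events of Lemma~\ref{dist:lemma} is routine since $\delta=O(1/(rk))$ is negligible against the $\sqrt k/|N_0|=\tilde\Omega(1/k)$ scale of the claimed gap.
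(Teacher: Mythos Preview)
Your proposal has a genuine error in step (2). You assert that the $|N_0|-\tau$ keys of $N_0$ beyond the closing rank ``contribute $0$, not $q$'' to the inclusion probability. This conflates \emph{being examined by the sketch process} with \emph{being in $U$}. A key of rank larger than $\tau$ is not examined because the sketch is already determined, but it is still included in $U$ independently with probability $q$: the event $T=\tau$ is a function of the first $\tau$ ranks only and is independent of the inclusion indicators of later-rank keys. So these keys contribute $q$, not $0$.

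With this error your sum $\sum_{x\in N_0}\Pr[x\in U\mid Z=\Delta]=k$ gives average $k/|N_0|$, which is a constant independent of $\Delta$; no Taylor expansion can then recover a $\Delta$-dependent correction, and indeed your step (3) is incoherent as written (you expand $k/\tau$, but your average is $k/|N_0|$, not $k/\tau$). Even if one charitably reads step (3) as expanding $k/\tau$, one gets $q-q\Delta/\sqrt{k}$, not the claimed $q-\Delta\sqrt{k}/|N_0|$; these agree only if $|N_0|=k/q$, which is off by the $\log(rk)$ factor in the definition of $N_0$.

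The paper's argument is the correct version of what you were reaching for: among the $\tau$ examined keys exactly $k$ are in $U$, giving effective rate $k/\tau$; the remaining $|N_0|-\tau$ keys in $N_0$ have rate $q$. Averaging,
\[
\frac{\tau\cdot(k/\tau)+(|N_0|-\tau)\cdot q}{|N_0|}=\frac{k+q|N_0|-q\tau}{|N_0|}=q-\frac{\Delta\sqrt{k}}{|N_0|},
\]
using $q\tau=k+\Delta\sqrt{k}$. No expansion is needed; the identity is exact.
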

\begin{proof}
Equivalently, consider the distribution conditioned on $T= \frac{1}{q}(k+\Delta\sqrt{k})$. 
The sampling process selects $k$ keys after examining $T$ keys.  The average effective sampling rate for the examined keys is $q_{e} = k/T$. 
There are $T$ keys out of $N_0$ that are processed with effective rate $q_e$ and the remaining keys in $N_0$ have effective rate $q$. 

Averaging the effective rate over the $T= \frac{1}{q}(k+Z\sqrt{k})$ processed keys and the remaining $N_0$ keys we obtain
\begin{align*}
    \frac{T\cdot q_e + (|N_0|-T)\cdot q}{|N_0|} &=
 \frac{T\cdot \frac{k}{T} + (|N_0|-T)\cdot q}{|N_0|}   = \frac{k + (|N_0|-(\frac{k + \Delta\sqrt{k}}{q}))\cdot q}{|N_0|}\nonumber \\ &= q - \Delta \frac{\sqrt{k}}{|N_0|} 
\end{align*}

\end{proof}

\subsection{Scoring probability gap}

For a map $\pi$, let $p'(\pi)$ be the score probability, over the distribution of $q$ and $Z$, of a key in $N'$. Let $p_0(\pi)$ be the \emph{average} over $N_0$ of the score probability of keys in $N_0$.

Let $f_\lambda(x)$ be the density function of the selected rate, described by Algorithm~\ref{samplerate:algo}. 
\begin{algorithm2e}[h]\caption{\small{\texttt{Sample rate}}}\label{samplerate:algo}
{\small
\DontPrintSemicolon
\KwIn{$A$,  $n$}
$\omega \gets \frac{n}{2A}$\;
$\omega_a\gets \frac{1}{2}\omega$; $\omega_b \gets \frac{5}{2} \omega$\tcp*{range of inverse rates}
$D\sim U[0,\omega/4]$\;
$\omega^*_a  \gets \omega_a+D$; $\omega^*_b \gets \omega^*_a + \frac{7}{4}\omega$\tcp*{range of sampled inverse rate}
\Return{$q\sim \frac{1}{U[\omega^*_a,\omega^*_b]}$}
}
\end{algorithm2e}
Note that the selected rate is in the interval
$q\in [\frac{1}{\omega_b},\frac{1}{\omega_a}]= \frac{1}{\omega}\cdot [\frac{2}{5},2] =\frac{A}{n}\cdot[\frac{4}{5},4]$.

 For each transparent key, the score probability is:
 \begin{align}
     p'(\pi) = \int_{q_a}^{q_b} \int_{-\sqrt{k}}^\infty \overline{\pi}(\frac{k}{q}(1+z/\sqrt{k})) f_Z(k; z) \cdot d z\cdot q \cdot f_\lambda(q) dq \label{scoreprobGen:eq}
 \end{align}
 
 On average over the low-rank keys $N_0$ using Claim~\ref{AveM:claim}  it is 
  \begin{align}
     p_0(\pi) = \int_{q_a}^{q_b} \int_{-\sqrt{k}}^\infty  \overline{\pi}(\frac{k}{q}(1+z/\sqrt{k})) f_Z(k; z) \cdot \left(q - z \frac{\sqrt{k}}{|N_0|}\right)\cdot dz \cdot f_\lambda(q) dq \label{scoreprobM:eq}
 \end{align}

For a correct map $\pi$ (as in Definition~\ref{correctmap:def}), we express the gap between $p'(\pi)$ and $p_0(\pi)$. Note that we bound the gap without assuming much on the actual values, as they can highly vary for  different correct $\pi$.
\begin{lemma} [Score probability gap] \label{scoregap:lemma}
Consider a step of the algorithm and a correct map $\pi$ (see Remark~\ref{manycorrect:rem}).  
Then
\[
p_0(\pi)-p'(\pi) = \Omega\left(\frac{1}{|N_0|}\right) = \Omega\left(\frac{1}{k\log(kr)}\right)
\]
\end{lemma}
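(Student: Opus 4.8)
The plan is to compute the difference $p_0(\pi) - p'(\pi)$ directly from the integral expressions \eqref{scoreprobGen:eq} and \eqref{scoreprobM:eq} and show the ``correction term'' coming from Claim~\ref{AveM:claim} contributes a net positive $\Omega(1/|N_0|)$. Subtracting \eqref{scoreprobGen:eq} from \eqref{scoreprobM:eq}, the terms with weight $q$ cancel and we are left with
\[
p_0(\pi)-p'(\pi) = -\frac{\sqrt{k}}{|N_0|}\int_{q_a}^{q_b}\!\! f_\lambda(q)\int_{-\sqrt{k}}^{\infty}\overline{\pi}\!\left(\tfrac{k}{q}(1+z/\sqrt{k})\right) f_Z(k;z)\, z\, dz\, dq\ .
\]
So it suffices to show the inner double integral $\int f_\lambda(q)\int \overline{\pi}(\cdot) f_Z(k;z)\, z\, dz\, dq$ is negative and bounded away from $0$ by an absolute constant. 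Intuitively this must hold because $\overline{\pi}$ is (essentially) a threshold-type function that is small for large $\tau$ and close to $1$ for small $\tau$, so it is negatively correlated with $z$; the weight $z$ then picks out this anti-correlation.

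The key steps, in order. First I would use \eqref{EDelta:eq}, $\int f_Z(k;z)\, z\, dz = 0$, to rewrite the inner integral over $z$ as $\int (\overline{\pi}(\tfrac{k}{q}(1+z/\sqrt{k})) - \overline{\pi}(\tfrac{k}{q})) f_Z(k;z)\, z\, dz$ — or, more robustly, subtract off any constant, since the goal is to exploit monotone-ish behavior rather than a pointwise value of $\overline{\pi}$. Second, I would split the rate range: Algorithm~\ref{samplerate:algo} samples $q$ so that $k/q$ ranges over an interval of multiplicative width $\Theta(1)$ (namely $\frac{kn}{A}\cdot[\frac{4}{5},4]$ roughly), and this interval straddles the ``threshold scale'' where a correct map must transition from near-$0$ to near-$1$. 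Concretely, the randomization by $D$ guarantees that for a constant fraction of sampled $q$, the point $\tau = k/q$ lies above $\frac{kn}{A}(1-1/\sqrt k)$ (where Claim~\ref{Tforcorrectmap:claim} forces $\overline{\pi}\approx 0$ on an interval to its right), while for another constant fraction $\tau$ lies below $\frac{kn}{2A}(1+1/\sqrt k)$ (where $\overline{\pi}\approx 1$ on an interval to its left). Third, on the ``$\overline{\pi}\approx 0$'' side: $\int \overline{\pi}(\cdot)f_Z(k;z) z\, dz$ is dominated by $z>0$ (where $\tau$ is even larger, so $\overline{\pi}$ even smaller — contributing a small positive amount) and $z<0$ (where $\tau$ decreases toward and past the transition, so $\overline{\pi}$ can become $\Theta(1)$ — contributing $z<0$ times something $\Theta(1)$, i.e. a negative amount of size $\Omega(1)$ using \eqref{boundedrange:eq} over the relevant constant-width $z$-window). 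Symmetrically on the ``$\overline{\pi}\approx 1$'' side, writing $\overline{\pi} = 1 - (1-\overline{\pi})$ and using $\int f_Z z\,dz = 0$ again, the same sign works out. Combining, the inner double integral is $\le -c$ for an absolute constant $c>0$, giving $p_0(\pi)-p'(\pi) \ge c\sqrt k/|N_0| = \Omega(1/|N_0|)$, and $|N_0| = O(k\log(kr))$ was established earlier.

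The main obstacle I anticipate is step three: a correct map is only constrained \emph{on average over $\tau$-intervals} (Claim~\ref{Tforcorrectmap:claim}), not pointwise, so $\overline{\pi}$ could oscillate wildly within the transition region, and it could also behave arbitrarily \emph{inside} the band $[A,2A]$ where correctness says nothing. The averaging over $q$ supplied by the $U[0,\omega/4]$ offset $D$ in Algorithm~\ref{samplerate:algo} is precisely the device that tames this: integrating over $q$ converts the per-$\tau$ average bound of Claim~\ref{Tforcorrectmap:claim} into a usable statement, and the $\Theta(1)$ multiplicative width of the rate interval together with the factor $\sqrt k$ in front ensures the ``don't care'' band $[A,2A]$ (which has multiplicative width only $2$, comparable to the whole range) still leaves a constant-fraction sub-range on each side where the sign of the contribution is pinned down. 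Making the bookkeeping of these overlapping $q$- and $z$-windows precise — in particular verifying that the $\Omega(1)$ lower bound survives the worst-case placement of the uncontrolled band — is where the real work lies; everything else is substitution and the already-tabulated identities \eqref{EDelta:eq}–\eqref{lowtailZ:eq}.
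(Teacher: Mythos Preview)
Your overall structure and starting formula are right (and your sign is the correct one; the paper's display \eqref{scoreprobDiff:eq} silently drops a minus). But step three has a quantitative gap that breaks the plan as written. For a fixed $q$ with $\tau_0=k/q$ in the ``$\overline{\pi}\approx 0$'' region, the argument of $\overline{\pi}$ is $\tau_0(1+z/\sqrt{k})$; over a \emph{constant-width} $z$-window this varies only by a multiplicative factor $1\pm O(1/\sqrt{k})$, which is far too small to reach the transition band (reaching it requires $\tau$ to drop by a factor $\approx 2$, i.e.\ $z\approx -\sqrt{k}/2$, where $f_Z$ is already exponentially small by \eqref{lowtailZ:eq}). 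So for such $q$ the $z$-integral is not $\Omega(1)$; it is at best $O(1/\sqrt{k})$. The double integral is therefore $\Theta(1/\sqrt{k})$, not $\Theta(1)$ --- which, with the $\sqrt{k}/|N_0|$ prefactor, is exactly what gives $\Theta(1/|N_0|)$, but your route to it does not work.

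The paper's fix is the one you half-anticipate in your ``obstacle'' paragraph but do not actually deploy: swap the order and integrate over $q$ first for each fixed $z$. As $q$ ranges over its full interval (multiplicative width $\Theta(1)$), the argument $\tau$ sweeps across the whole transition, and the factor $(1+z/\sqrt{k})$ only shifts the \emph{endpoints} of that sweep. The paper packages this as Claim~\ref{deltascore:claim}: a change of variable plus the boundary estimates from Claim~\ref{Tforcorrectmap:claim} (this is precisely where the $D$-randomization enters) yield
\[
\int_{q_a}^{q_b}\overline{\pi}\Bigl(\tfrac{k}{q}(1+\tfrac{z}{\sqrt{k}})\Bigr) f_\lambda(q)\,dq \;=\; \int_{q_a}^{q_b}\overline{\pi}\Bigl(\tfrac{k}{q}\Bigr) f_\lambda(q)\,dq \;-\; \Theta\Bigl(\tfrac{z}{\sqrt{k}}\Bigr)
\]
for $|z|\le \sqrt{k}/4$. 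The constant term then vanishes against $\int f_Z\,z\,dz=0$ via \eqref{EDelta:eq}, and the $\Theta(z/\sqrt{k})$ term against $f_Z\,z$ produces $\tfrac{1}{\sqrt{k}}\int f_Z\,z^2\,dz=\Theta(1/\sqrt{k})$ via \eqref{squaredin:eq}; the tails $|z|>\sqrt{k}/4$ are handled by \eqref{uptailZ:eq}--\eqref{lowtailZ:eq}. In short, the mechanism is ``pay a $1/\sqrt{k}$ at the boundary and collect the second moment of $Z$,'' and it only appears after integrating in $q$ first.
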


In the remaining part of this section we present the proof of Lemma~\ref{scoregap:lemma}. We will need the following claim, that relates $\Delta$ and scoring probability.
\begin{claim} \label{deltascore:claim}
For $|\Delta| < \sqrt{k}/4$
\[
\int_{q_a}^{q_b} \overline{\pi}(\frac{k}{q}(1+\frac{\Delta}{\sqrt{k}}))\cdot f_\lambda(q) dq = \int_{q_a}^{q_b} \overline{\pi}(\frac{k}{q})\cdot f_\lambda(q) dq + \Theta(\frac{\Delta}{\sqrt{k}})
\]
\end{claim}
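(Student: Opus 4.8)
First I would move the multiplicative perturbation off $\overline{\pi}$ and onto the rate density, then analyze the perturbed density --- using only $\overline{\pi}\in[0,1]$ for the upper bound and the correctness guarantee (Claim~\ref{Tforcorrectmap:claim}) for the lower bound. Write $\beta := 1+\Delta/\sqrt{k}$, so $|\beta-1|<1/4$. Substituting $p = q/\beta$ gives
\[
\int_{q_a}^{q_b}\overline{\pi}\!\left(\tfrac{k}{q}\beta\right)f_\lambda(q)\,dq \;=\; \int\overline{\pi}\!\left(\tfrac{k}{p}\right)\tilde f_\beta(p)\,dp,\qquad \tilde f_\beta(p):=\beta\,f_\lambda(\beta p),
\]
and $\tilde f_\beta$ is again a density (the law of $Q/\beta$ for $Q\sim f_\lambda$), so the quantity to control is exactly $\int\overline{\pi}(k/p)\,(\tilde f_\beta-f_\lambda)(p)\,dp$. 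It is convenient to work in the inverse-rate variable $s = 1/q$: there $f_\lambda$ is, by construction of Algorithm~\ref{samplerate:algo}, an explicit trapezoid --- flat of height $\Theta(A/n)$ on a plateau of width $\Theta(n/A)$, flanked by linear ramps of width $\tfrac{n}{8A}$ created by the convolution with $D\sim U[0,\omega/4]$ --- and $\tilde f_\beta$ is that trapezoid rescaled by $\beta$, with $g(s):=\overline{\pi}(ks)$ playing the role of $\overline{\pi}(k/q)$.

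\textbf{Upper bound.} Since $|\overline{\pi}|\le 1$, $\bigl|\int\overline{\pi}(k/q)(\tilde f_\beta-f_\lambda)\bigr|\le\|\tilde f_\beta-f_\lambda\|_{L^1}$. Rescaling the trapezoid by $\beta=1+\Delta/\sqrt{k}$ changes its plateau height by a factor $1\pm O(|\Delta|/\sqrt{k})$ and displaces its ramps by $O\!\bigl(\tfrac{n}{A}\cdot\tfrac{|\Delta|}{\sqrt{k}}\bigr)$; the hypothesis $|\Delta|<\sqrt{k}/4$ keeps this displacement within the ramp width, and a shift by less than the convolution window has $L^1$-effect only $O(|\Delta|/\sqrt{k})$. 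On the far tail of the support, where the displacement could exceed the window, I would instead invoke correctness (Claim~\ref{Tforcorrectmap:claim}) to argue $g\approx 0$ there, so that region is negligible. This establishes the $O(|\Delta|/\sqrt{k})$ bound.

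\textbf{Lower bound and main obstacle.} The lower bound genuinely needs correctness, and the obstacle is that a general correct map need not be monotone in $\tau$, so one cannot simply integrate by parts against a monotone $g$. Instead, by Claim~\ref{Tforcorrectmap:claim} with a small constant $\xi$, $g$ averages $\ge 1-\xi$ over the low-$s$ part of the plateau and $\le\xi$ over the high-$s$ part; the rate range in Algorithm~\ref{samplerate:algo} is chosen so that each part has width $\Omega(n/A)$ and, in particular, $\int g\,f_\lambda=\Theta(1)$ bounded away from $0$ and $1$. For $\Delta>0$ (the case $\Delta<0$ is symmetric), $\tilde f_\beta-f_\lambda$ equals a constant negative value of order $\tfrac{A}{n}\cdot\tfrac{|\Delta|}{\sqrt{k}}$ on the overlap of the two plateaus --- a sub-plateau of width $\Omega(n/A)$ whose low-$s$ half lies in the high-$g$ region --- plus further nonpositive mass at the low-$s$ edge (where $g\approx 1$) and compensating positive mass at the high-$s$ edge (where $g\approx 0$). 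Weighting by $g\ge 0$: the low-$s$ edge reinforces the sub-plateau, the high-$s$ edge is negligible, and the sub-plateau contributes $-\Omega(|\Delta|/\sqrt{k})$. Hence the correction term is $-\Omega(|\Delta|/\sqrt{k})$ for $\Delta>0$ (and $+\Omega(|\Delta|/\sqrt{k})$ for $\Delta<0$), which together with the upper bound gives magnitude $\Theta(|\Delta|/\sqrt{k})$ --- the sign being forced by correctness, as increasing $\Delta$ inflates the sufficient statistic, deflates the implied cardinality, and lowers the probability that a correct map reports $1$. Matching the edge constants against $\beta\in(\tfrac34,\tfrac54)$ is the one place where the specific constants $\tfrac12,\tfrac52,\tfrac14,\tfrac74$ of Algorithm~\ref{samplerate:algo} enter, and is routine.
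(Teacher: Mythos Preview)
Your plan is correct and follows essentially the same strategy as the paper: push the multiplicative perturbation off $\overline{\pi}$ and onto the integration domain, then decompose into a ``bulk/scaling'' piece and two ``boundary'' pieces, and use Claim~\ref{Tforcorrectmap:claim} to pin $g\approx 1$ at the low-$s$ boundary and $g\approx 0$ at the high-$s$ boundary. The paper carries this out by fixing the convolution offset $D$ first (so the inner integral is over a uniform window $[\omega_a^*,\omega_b^*]$), doing the change of variable there to get the three terms $-\Theta(\tfrac{\Delta}{\sqrt{k}})\!\int_{\omega_a^*}^{\omega_b^*}g$, $-\Theta(1)\!\int_{\omega_a^*}^{\beta\omega_a^*}g$, $+\Theta(1)\!\int_{\omega_b^*}^{\beta\omega_b^*}g$, and only then integrating over $D$; you instead integrate over $D$ first (producing the trapezoid $f_\lambda$) and compare $f_\lambda$ with its rescaling $\tilde f_\beta$. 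These are the same computation in different orders, and your plateau/left-edge/right-edge pieces correspond to the paper's scaling/left-boundary/right-boundary terms.

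One place where the paper's ordering is slightly cleaner: because the paper works at fixed $D$, its right-boundary term is literally $\int_{\omega_b^*}^{\beta\omega_b^*} g$ over an interval of width $\Theta(\omega|\Delta|/\sqrt{k})$, so Claim~\ref{Tforcorrectmap:claim} immediately gives the bound $\xi\cdot\Theta(\omega|\Delta|/\sqrt{k})$ with the crucial $|\Delta|/\sqrt{k}$ factor. In your density picture the positive region has width $\Theta(\omega)$ (the whole right-ramp zone) even when $|\Delta|/\sqrt{k}$ is tiny, so ``$g\approx 0$ there, hence negligible'' is not quite enough as stated --- you need the extra observation that $\sup(\tilde f_\beta - f_\lambda)^+ = O(h|\Delta|/\sqrt{k})$ on that region (or equivalently that the total positive mass is $\Theta(|\Delta|/\sqrt{k})$ by conservation), which then gives the $O(\xi|\Delta|/\sqrt{k})$ bound. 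This is a routine fix, and with it your argument goes through. Also, your claim that the ``low-$s$ half'' of the sub-plateau lies in the high-$g$ region is a slight overstatement (only a $\Theta(1)$-fraction of it does, since the plateau is $[6\beta,18]$ while the high-$g$ region is roughly $s<8$), but that fraction is all you need.
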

\begin{proof}
Using the distribution specified in Algorithm~\ref{samplerate:algo}, for any $g()$:
\begin{align}
 \int_{q_a}^{q_b} g(q) f_\lambda(q) dq   = -\frac{4}{\omega}\int_0^{\omega/4} dD \frac{4}{7\omega} \int_{\omega_a+D}^{\omega_a+D+\frac{7}{4}\omega} g(1/x) dx \label{q2omega:eq}
\end{align}

We use $w^*_a = \omega_a+D$ and $w^*_b = \omega_a+D+\frac{7}{4}\omega$ and get
\begin{align*}
    \int_{\omega^*_a}^{\omega^*_b} \overline{\pi}(k x (1+\frac{\Delta}{\sqrt{k}}))\cdot d x &= \frac{1}{1+\frac{\Delta}{\sqrt{k}}}\int_{\omega^*_a \cdot(1+\Delta/\sqrt{k})}^{\omega^*_b\cdot (1+\Delta/\sqrt{k})} \overline{\pi}(k y) dy \text{$\;\;$ (change variable $x$ to $y=x (1+\Delta/\sqrt{k})$)} \nonumber\\
    &= \frac{1}{1+\frac{\Delta}{\sqrt{k}}}\left(\int_{\omega^*_a}^{\omega^*_b} \overline{\pi}(k x)dx -  \int_{\omega^*_a }^{\omega^*_a (1+\Delta/\sqrt{k})}\overline{\pi}(k x)dx + \int_{\omega^*_b }^{\omega^*_b (1+\Delta/\sqrt{k})}\overline{\pi}(k x)dx \right)
\end{align*}

Therefore,\footnote{Note that $\Delta$ can be negative. In which case in order to streamline expressions we interpret the asymptotic notation $O(c\Delta)$ as $-O(c|\Delta|)$.}
\begin{align}
    \lefteqn{\int_{\omega^*_a}^{\omega^*_b} \left( \overline{\pi}(k x (1+\frac{\Delta}{\sqrt{k}})) - \overline{\pi}(kx)\right) \cdot d x =}\label{diffomega:eq}\\
&=    -\Theta(\frac{\Delta}{\sqrt{k}})\cdot \int_{\omega^*_a}^{\omega^*_b} \overline{\pi}(kx) dx - \Theta(1)\cdot \int_{\omega^*_a }^{\omega^*_a (1+\Delta/\sqrt{k})}\overline{\pi}(k x)dx + \Theta(1)\cdot \int_{\omega^*_b }^{\omega^*_b (1+\Delta/\sqrt{k})}\overline{\pi}(k x)dx \nonumber\\
    &= -O(\frac{\Delta}{\sqrt{k}}\omega) - \Theta(1)\cdot \int_{\omega^*_a }^{\omega^*_a (1+\Delta/\sqrt{k})}\overline{\pi}(k x)dx + \Theta(1)\cdot \int_{\omega^*_b }^{\omega^*_b (1+\Delta/\sqrt{k})}\overline{\pi}(k x)dx\ .\nonumber
\end{align}  

The last equality follows using
\[\int_{\omega^*_a}^{\omega^*_b} \overline{\pi}(kx) dx \in [0, \omega^*_b-\omega^*_a] = [0, \frac{7}{4}\omega]\]

\begin{align}
\lefteqn{  \int_{q_a}^{q_b} \overline{\pi}(\frac{k}{q}(1+\frac{\Delta}{\sqrt{k}}))\cdot f_\lambda(q) dq - \int_{q_a}^{q_b} \overline{\pi}(\frac{k}{q})\cdot f_\lambda(q) dq =}\label{claimdiff:eq}\\
\lefteqn{=\int_{q_a}^{q_b} \left(\overline{\pi}(\frac{k}{q}(1+\frac{\Delta}{\sqrt{k}}))-\overline{\pi}(\frac{k}{q})\right)\cdot f_\lambda(q) dq =}\nonumber\\
&= \frac{4}{\omega}\int_0^{\omega/4} dD \frac{4}{7\omega} \int_{\omega_a+D}^{\omega_a+D+\frac{7}{4}\omega} \left(\overline{\pi}(k x (1+\frac{\Delta}{\sqrt{k}}))-\overline{\pi}(k x)\right) dx \;\;\text{(Using \eqref{q2omega:eq})}\nonumber\\
&= -O(\frac{\Delta}{\sqrt{k}})-\Theta(\frac{1}{\omega^2})\cdot \int_0^{\omega/4} dD \int_{\omega^*_a }^{\omega^*_a (1+\Delta/\sqrt{k})}\overline{\pi}(k x)dx + \Theta(\frac{1}{\omega^2})\cdot \int_0^{\omega/4} dD\cdot \int_{\omega^*_b }^{\omega^*_b (1+\Delta/\sqrt{k})}\overline{\pi}(k x)dx\ \;\;\text{(Apply \eqref{diffomega:eq})}\nonumber
\end{align}

We now separately bound terms\footnote{Argument for negative $\Delta$ is similar}:
\begin{align}
   \int_0^{\omega/4} dD \int_{\omega^*_a }^{\omega^*_a (1+\Delta/\sqrt{k})}\overline{\pi}(k x)dx &\geq
   \int_0^{\omega/4} dD \int_{\omega_a+D }^{\omega_a +D+ \frac{\Delta}{\sqrt{k}}\omega_a}\overline{\pi}(k x)dx \nonumber\\
   &= \int_{0}^{\frac{\Delta}{\sqrt{k}}(\omega/4)} dW \int_{\omega_a+W}^{\omega_a+\omega/4+W} \overline{\pi}(k x)dx\nonumber\\
   &\geq  \frac{\Delta}{\sqrt{k}}\frac{\omega}{4}\cdot \frac{\omega}{4}(1-\xi) = \Theta(\omega^2 \frac{\Delta}{\sqrt{k}}) \;\; \text{(Using Claim~\ref{Tforcorrectmap:claim})}\label{leftendlowbound:eq}\\
   \int_0^{\omega/4} dD \int_{\omega^*_a }^{\omega^*_a (1+\Delta/\sqrt{k})}\overline{\pi}(k x)dx &\leq
   \int_0^{\omega/4} dD \int_{\omega_a+D }^{\omega_a +D+ \frac{\Delta}{\sqrt{k}}(\omega_a+\omega/4)}\overline{\pi}(k x)dx \nonumber\\
   &= \int_{0}^{\frac{\Delta}{\sqrt{k}}(\omega_a+\omega/4)} dW \int_{\omega_a+W}^{\omega_a+\omega/4+W} \overline{\pi}(k x)dx
   &= O(\omega^2 \frac{\Delta}{\sqrt{k}})\label{leftendupbound:eq}
\end{align}
Combining \eqref{leftendlowbound:eq} and \eqref{leftendupbound:eq} we obtain that
\begin{equation} \label{leftendbound:eq}
  \int_0^{\omega/4} dD \int_{\omega^*_a }^{\omega^*_a (1+\Delta/\sqrt{k})}\overline{\pi}(k x)dx =  \Theta(\omega^2 \frac{\Delta}{\sqrt{k}}) 
\end{equation}

We next bound the last term:
\begin{align}
   \int_0^{\omega/4} dD \int_{\omega^*_b }^{\omega^*_b (1+\Delta/\sqrt{k})}\overline{\pi}(k x)dx &=
   \int_0^{\omega/4} dD \int_{\frac{9}{4}\omega+D }^{(\frac{9}{4}\omega+D)\cdot (1+\Delta/\sqrt{k})}\overline{\pi}(k x)dx\nonumber\\
   &\leq \int_0^{\omega/4} dD \int_{\frac{9}{4}\omega+D }^{(\frac{9}{4}\omega+D) + \frac{5}{2}\omega \cdot \frac{\Delta}{\sqrt{k}}}\overline{\pi}(k x)dx\nonumber\\
   &= \int_{0}^{\frac{5}{2}\omega \cdot \frac{\Delta}{\sqrt{k}}} dW \int_{\frac{9}{4}\omega +W}^{\frac{9}{4}\omega +W+\omega/4}  \overline{\pi}(k x)dx \nonumber\\
   &\leq \frac{5}{2}\omega \frac{\Delta}{\sqrt{k}}\cdot \frac{\omega}{4}\xi =\frac{5}{8}\frac{\Delta}{\sqrt{k}}\omega^2 \xi \;\; \text{(Apply Claim~\ref{Tforcorrectmap:claim})}\label{rightendbound:eq}
\end{align}   

We substitute \eqref{leftendbound:eq} and \eqref{rightendbound:eq} in \eqref{claimdiff:eq} to conclude the proof, choosing a small enough constant $\xi$.
\end{proof}

\begin{proof}[Proof of Lemma~\ref{scoregap:lemma}]
  
 We express the difference between the average score probability of a key in $N_0$ \eqref{scoreprobM:eq} and the score probability of a key in $N'$ \eqref{scoreprobGen:eq}:

   \begin{align}
   p_0(\pi)-p'(\pi) &=
     \frac{\sqrt{k}}{|N_0|} \cdot \int_{q_a}^{q_b} \int_{-\sqrt{k}}^\infty \overline{\pi}(\frac{k+\sqrt{k}z}{q}) f_Z(k; z) \cdot z d z \cdot f_\lambda(q) dq \nonumber\\ &=
     \frac{\sqrt{k}}{|N_0|} \cdot  \int_{-\sqrt{k}}^\infty \left( \int_{q_a}^{q_b} \overline{\pi}(\frac{k+\sqrt{k}z}{q})\cdot f_\lambda(q) dq \right) \cdot f_Z(k; z) \cdot z d z \label{scoreprobDiff:eq}
 \end{align}

We separately consider $Z$ in the range $I_{\text{in}}:= [-\sqrt{k}/4,\sqrt{k}/4]$ and 
$Z$ outside this range in $I_{\text{out}}=[-\sqrt{k},\sqrt{k}/4]\cup[\sqrt{k}/4,\infty]$

For outside the range we use that
$\int_{q_a}^{q_b} \overline{\pi}(\frac{k+\sqrt{k}z}{q})\cdot f_\lambda(q) dq \in [0,1]$ and tail bounds on $f_Z(k; z)$ \eqref{uptailZ:eq} \eqref{lowtailZ:eq} and get:
\begin{align}
   \frac{\sqrt{k}}{|N_0|}\int_{I_{\text{out}}} \left( \int_{q_a}^{q_b} \overline{\pi}(\frac{k+\sqrt{k}z}{q})\cdot f_\lambda(q) dq \right) \cdot f_Z(k; z) \cdot z d z = \frac{1}{N_0} e^{-\Omega(k)} \label{outrange:eq}\;\; \text{Apply \eqref{lowtailZ:eq} and \eqref{uptailZ:eq}}
\end{align}

For inside the range we apply Claim~\ref{deltascore:claim}:
\begin{align}
\lefteqn{\frac{\sqrt{k}}{|N_0|} \cdot  \int_{I_{\text{in}}} \left( \int_{q_a}^{q_b} \overline{\pi}(\frac{k+\sqrt{k}z}{q})\cdot f_\lambda(q) dq \right) \cdot f_Z(k; z) \cdot z d z} \nonumber\\
&=
\frac{\sqrt{k}}{|N_0|} \cdot \left( \int_{I_{\text{in}}} \left( \int_{q_a}^{q_b} \overline{\pi}(\frac{k}{q})\cdot f_\lambda(q) dq \right) \cdot f_Z(k; z) \cdot z d z + 
  \int_{I_{\text{in}}} \Theta(\frac{z}{\sqrt{k}}) \cdot f_Z(k; z) \cdot z d z \right) & \text{Claim~\ref{deltascore:claim}}
\nonumber\\
&=   \frac{\sqrt{k}}{|N_0|} \cdot \left( \int_{q_a}^{q_b} \overline{\pi}(\frac{k}{q})\cdot f_\lambda(q) dq \cdot \int_{I_{\text{in}}} f_Z(k; z) \cdot z d z + 
\frac{1}{\sqrt{k}}\cdot\Theta\left(\int_{I_{\text{in}}} f_Z(k; z) \cdot z^2 d z \right)\right)  \nonumber\\
&=   \frac{\sqrt{k}}{|N_0|} \cdot \left( \int_{q_a}^{q_b} \overline{\pi}(\frac{k}{q})\cdot f_\lambda(q) dq \cdot 0 + 
\frac{1}{\sqrt{k}}\cdot\Theta\left(1 \right)\right)  & \text{Using \eqref{EDelta:eq} and \eqref{squaredin:eq}}\nonumber\\
&=\frac{\sqrt{k}}{|N_0|}\cdot \frac{1}{\sqrt{k}}\cdot\Theta(1)  = \Theta(\frac{1}{|N_0|}) \label{inrange:eq}
\end{align}
The statement of the Lemma follows by combining \eqref{outrange:eq} and \eqref{inrange:eq} in~\eqref{scoreprobDiff:eq}.
\end{proof}

\subsection{The case of symmetric estimators} \label{symmetriccase:sec}
The proof of Lemma~\ref{symmetricsep:lemma} (gap for symmetric estimators) follows as a corollary of the proof of Lemma~\ref{scoregap:lemma}. 

\begin{proofof}{Lemma~\ref{symmetricsep:lemma}}
For symmetric maps (Definition~\ref{symmetric:def})  keys in $N_0$ that have lower rank can only have higher scoring probabilities. That is, when $j<j'$, the score probability of $m^i_j$ is no lower than that of $m^i_{j'}$. With bottom-$k$ sketches, the score probability of $m_j$ is no lower than that of $m_{j'}$.
In particular, the keys in $N^*_0$ have the highest average score among keys in their components.
Additionally, there is symmetry between components. Therefore, the average score of each of the $k$ lowest rank keys in $N^*_0$ is no lower than the average over all $N_0$ keys: 
\[
\E_{U}[\pi(S_\rho(U) \cdot \mathbf{1}(m\in U)] \geq p_0(\pi)\ .
\]
Therefore using Lemma~\ref{scoregap:lemma}:
    \[
\E_{U}[\pi(S_\rho(U) \cdot \mathbf{1}(m\in U)] -p'(\pi) \geq p_0(\pi) -p'(\pi)  - \E_{U}[\pi(S_\rho(U)\cdot \mathbf{1}(u\in U)] = \Omega(\frac{1}{k \log(kr)})\ .
\]
\end{proofof}

\subsection{Analysis details of the Adaptive Algorithm} \label{adaptivegendetails:sec}

We consider the information available to the query response algorithm.  The mask $M$ is shared with the query response and hence it only needs to estimate the cardinality of (the much larger) set $U$.  The mask keys $M$ hide information in $S_\rho(U)$ and make additional keys trasparent.  For $k$-partition and $k$-mins sketches, keys $m^i_h$ where $h> \arg\min_j m^i_j\in M$ are transparent.  For bottom-$k$ sketches, we see only $k'\leq k$ bottom ranks in $U$ if $k-k'$ keys from $M$ have lower ranks.

We describe the sampling of $S(U\cup M)$ (for given mask $M$) as an equivalent process that examines keys in $U$ in order, selecting each examined key with probability $q$, until the sketch is determined.
This process generalizes the process we described for the case without a mask in the proof of Lemma~\ref{dist:lemma}:

\begin{itemize}
\item
Bottom-$k$ sketch: Set counter $c\gets k$. $t\sim \Geom[q]$. Process keys $m_j$ in increasing $j$:
\begin{enumerate}
\item
    If $m_j\in M$ decrease $c$ and output $m_j$. If $c=0$ halt. 
\item
    If $m_j\not\in M$ then decrease $t$. 
\begin{enumerate}
    \item If $t=0$ output $m_j$, decrease $c$, and sample a new $t\sim \Geom[q]$. If $c=0$ halt.
\end{enumerate}      
\end{enumerate}
\item
$k$-mins and $k$-partition sketches:  
For $i\in[k]$ let
$h^i \gets \arg\min_\ell m^i_\ell \in M$.
Sample $t\sim \Geom[q]$. Process $i\in [k]$ in order: 
\begin{enumerate}
    \item If $h$ is defined and $h\leq t$ then $t\gets t-h+1$ and continue with next $i$. 
     \item
    If $h$ is undefined or $h>t$ then output $m^i_t$, sample new $t\sim \Geom[q]$. Continue to next $i$.
\end{enumerate}
\end{itemize}

The QR algorithm has the results of the process which yields $k' \leq k$ i.i.d.\ $\Geom[q]$ random variables. 
As keys are added to the mask $M$ the information we can glean on $q$ from the sketch, that corresponds to the number $k'$ of $\Geom[q]$ samples we obtain, decreases.  
As the mask gets augmented, the number of keys, additional keys in $N_0$ become transparent in the  sense that they have probability smaller than $\delta/r$ to impact the sketch if included in $U$.
With $k$-mins and $k$-partition sketches keys where $m^i_j > h^i$ become transparent. With bottom-$k$ sketches keys $m_j$ where $j> \min_i (k-|M\cap (m_\ell)_{\ell<j}|) \cdot \Omega(\log(r))$ are transparent.
These keys are no longer candidates to be examined by the process above.
We denote by $N'_0 \subset N_0$ the set of keys that remain non-transparent.  It holds that  
$|N'_0| = O(\overline{k'} \log(kr)$, where $\overline{k'}$ is the mean $k'$ with our current mask.
When $k' = O(\log(kr))$ becomes too small (see Remark~\ref{softthreshold:rem}), there are no correct maps and the algorithm halts and returns $M$.

Let $p(\pi,M,x)$ be the probability that key $x$ is scored with map $\pi$ and mask $M$.
The probability is the same for all transparent keys  $x\not\in N'_0$ and we denote it by $p'(\pi,M)$.
\begin{lemma} [Score probability gap with mask] \label{scoregapmask:lemma}
Let $\pi$ be a correct map for $M\cup\mathcal{D}_0$.
Then
\[
\sum_{x\in N'_0} \left( p(\pi,M,x) - p'(\pi,M) \right) = \Omega\left(\frac{1}{\log(kr)}\right)\ .
\]
\end{lemma}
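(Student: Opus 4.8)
The plan is to reduce this to the mask-free score-gap Lemma~\ref{scoregap:lemma} by treating the masked setting as an instance of the unmasked setting with reduced \emph{effective} sketch size. Fix the mask $M$ and condition on the randomness that determines, for each component, which $N_0$ keys have become transparent; this data fixes a value $k' \leq k$ (the number of $\Geom[q]$ samples the sampling process of Section~\ref{adaptivegendetails:sec} still produces) together with the surviving candidate keys $N'_0$. The key observation is that, conditioned on this data, the sampling of $S_\rho(M\cup U)$ restricted to its informative part behaves exactly like the mask-free process with $k'$ components acting on the ground set $N'_0$ of ``low rank'' keys: the statistic $T' = \sum_{i=1}^{k'} Y_i$ is still a sufficient statistic, the normalized deviation $Z' = (T' - k'/q)/(\sqrt{k'}/q)$ plays the role of $Z$, and Claim~\ref{AveM:claim} holds verbatim with $k,|N_0|$ replaced by $k',|N'_0|$, giving average inclusion probability $q - Z'\sqrt{k'}/|N'_0|$ over $N'_0$.

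First I would restate the correctness condition. Since $\pi$ is correct for the query distribution $M\cup\mathcal{D}_0$ and $M$ is shared with QR for free, correctness is a statement purely about estimating $q$ (equivalently $|U|\approx qn$) from the $k'$-sample sketch, so Claim~\ref{Tforcorrectmap:claim} applies with $k$ replaced by $k'$ and the same soft-threshold value $A=n/16$; the sampling-rate distribution $f_\lambda$ is unchanged since $\mathcal{D}_0$ is unchanged. Next I would re-run the computation in the proof of Lemma~\ref{scoregap:lemma}: express $\sum_{x\in N'_0}(p(\pi,M,x)-p'(\pi,M))$ as
\[
\sqrt{k'}\int_{q_a}^{q_b}\int_{-\sqrt{k'}}^{\infty}\overline{\pi}\!\left(\tfrac{k'+\sqrt{k'}z}{q}\right) f_{Z'}(k';z)\, z\, dz\, f_\lambda(q)\, dq,
\]
split $z$ into the bulk range $[-\sqrt{k'}/4,\sqrt{k'}/4]$ and the tails, apply the tail bounds \eqref{uptailZ:eq}, \eqref{lowtailZ:eq} (with $k'$) to kill the tails up to $e^{-\Omega(k')}$, and apply the analogue of Claim~\ref{deltascore:claim} in the bulk. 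The first-order term vanishes by \eqref{EDelta:eq} and the second-order term is $\Theta(1)$ by \eqref{squaredin:eq}, yielding the sum equals $\Theta(1)$ — note the crucial cancellation of the $\sqrt{k'}$ prefactor against the $1/\sqrt{k'}$ from the second-order expansion, exactly as in the mask-free case, so there is no residual $k'$-dependence. Finally I would remove the conditioning: the bound $\Theta(1)$ holds for every realization of the transparency data with $k' \geq c\log(kr)$ (below that there is no correct map and the algorithm has halted), so it holds in expectation, and since $\sqrt{k'}/4 \geq 1$ the bulk-range estimates \eqref{squaredin:eq}, \eqref{boundedrange:eq} are valid throughout this regime; this gives $\Omega(1)$, which is even stronger than the claimed $\Omega(1/\log(kr))$ (the weaker form is stated to absorb any slack from averaging over which keys survive and from the $k'$-dependence of constants in Claim~\ref{Tforcorrectmap:claim}).

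The main obstacle is verifying that the conditioning is benign — specifically, that after conditioning on which $N_0$ keys are transparent, the residual distribution of the informative sketch is \emph{exactly} the $k'$-component mask-free distribution on $N'_0$, with $Z'$ independent of the scoring of any individual surviving key in the appropriate sense. This requires carefully tracking the sampling process in Section~\ref{adaptivegendetails:sec}: for bottom-$k$ one must check that removing the mask keys from the examination order does not distort the geometric structure of the gaps between successively selected non-mask keys, and for $k$-mins/$k$-partition that the per-component ``skip if $h^i \leq t$'' step correctly reproduces an independent $\Geom[q]$ for each surviving component. Once this structural equivalence is established, the rest is a line-by-line transcription of the proof of Lemma~\ref{scoregap:lemma} with $(k,N_0)\mapsto(k',N'_0)$, and the averaging over the transparency randomness at the end.
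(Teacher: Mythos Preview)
Your proposal is correct and follows essentially the same approach as the paper: reduce the masked setting to an instance of the unmasked Lemma~\ref{scoregap:lemma} with effective sketch size $k'$ and surviving low-rank set $N'_0$, then rerun the computation verbatim with $(k,N_0)\mapsto(k',N'_0)$ and use $|N'_0|=O(\overline{k'}\log(kr))$. Your write-up is in fact more careful than the paper's one-line proof sketch---in particular your explicit identification of the sum as $\sqrt{k'}\cdot\int(\cdots)$ (rather than the average times $|N'_0|$), your observation that the resulting bound is $\Theta(1)$ rather than merely $\Omega(1/\log(kr))$, and your flagging of the conditioning/structural-equivalence step as the point requiring care are all accurate and go beyond what the paper spells out.
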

\begin{proof}
    The proof is similar to that of Lemma~\ref{scoregap:lemma} applies with respect to $k'$ and using that $|N'_0| = O(\overline{k'} \log(kr)$.
\end{proof}

\begin{claim} \label{notransparentinM:claim}
  With probability at least $0.99$, no transparent keys are placed in $M$.
\end{claim}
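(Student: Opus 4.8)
The plan is to show that for each fixed transparent key $u\in N'$ and each step $i$, the event that $u$ is added to $M$ at step $i$ has probability at most $1/(100 n r)$, and then union bound over the at most $n$ keys and at most $r$ steps. A transparent key $u$ is added at step $i$ precisely when $C[u]$ first exceeds $\textrm{median}(C[N\setminus M]) + \sqrt{i\log(200nr)/2}$. The key structural fact I would exploit is that, conditioned on the transcript so far, the score increment of a transparent key in a single step is $Z\cdot\Bern[q_i]$ where $q_i$ is the rate sampled that step, and crucially $Z$ (the query response bit) is \emph{independent} of whether $u\in U$ — because $u$ is transparent, its inclusion in $U$ has probability $\leq \delta/r$ of affecting the sketch $S_\rho(M\cup U)$, hence of affecting the distribution of $Z$. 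So, up to a total variation slack of $\delta \leq 1/(rk)$ accumulated over all steps (absorbed into the failure probability $0.01$), the score $C[u]=\sum_{i=1}^r Z_i\Bern[q_i]$ where the $Z_i$ do not depend on $u$, and by symmetry $C[u]$ has the \emph{same} marginal distribution as the score of every other transparent key.

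The next step is to control the median $\overline{C}:=\textrm{median}(C[N\setminus M])$. Since there are only $|N_0| = \tilde O(k) \ll n$ low-rank (non-transparent) keys, at least, say, $3n/4$ of the keys in $N\setminus M$ are transparent at every step, and all of them share the same score distribution with mean $\mu_i := \sum_{j\le i}\E[Z_j]\,\E[\Bern[q_j]]$ (note $\mu_i$ is a transcript-measurable quantity, not random given the transcript, for the common transparent distribution). Hence the median $\overline{C}$ is, with very high probability, within $O(\sqrt{i\log n})$ of $\mu_i$: this follows because each transparent $C[x]$ is a sum of $i$ bounded $[0,1]$-increments, so by Hoeffding $\Pr[|C[x]-\mu_i|\ge \lambda]\le 2e^{-2\lambda^2/i}$, and the median of $\ge 3n/4$ such variables deviates from $\mu_i$ by more than $\lambda$ only if $\ge n/4$ of them do, which by a Chernoff bound on the count has probability $e^{-\Omega(n)}$ once $\lambda = \Theta(\sqrt{i\log(nr)})$. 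So with probability $1-e^{-\Omega(n)}$ at every step, $\overline C \ge \mu_i - c\sqrt{i\log(nr)}$ for a suitable constant $c$.

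Putting the pieces together: a transparent $u$ is added at step $i$ only if $C[u] > \overline C + \sqrt{i\log(200nr)/2} \ge \mu_i - c\sqrt{i\log(nr)} + \sqrt{i\log(200nr)/2}$. Choosing the threshold constant in Algorithm~\ref{adaptivegen:algo} large enough relative to $c$ (which we may, since $c$ came from the median concentration), this forces $C[u] - \mu_i \ge c'\sqrt{i\log(nr)}$ for a constant $c'$ with $2c'^2 > 1$; by the same Hoeffding bound applied to the single variable $C[u]$ this has probability $\le 2e^{-2c'^2\log(nr)} \le 1/(100nr)$. Union-bounding over all $u\in N'$ and all $i\in[r]$, together with the $e^{-\Omega(n)}$ median-failure events and the $\tilde O(1/(rk))\cdot r$ total-variation slack from transparency, gives that with probability at least $0.99$ no transparent key is ever placed in $M$.

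The main obstacle is the adaptivity: the threshold involves the running median of $C[N\setminus M]$, which itself depends on past query responses and on which keys have already entered $M$, and the QR algorithm sees the full internal state including $C$ and can try to steer responses to push a particular transparent key's score up. The resolution — and the crux of the argument — is that transparency makes the response bit $Z$ provably (near-)independent of \emph{that} key's inclusion, so no strategy of QR can create a score gap between one transparent key and the bulk of transparent keys beyond the martingale fluctuations, which the explicit $\sqrt{i\log(200nr)/2}$ slack in the threshold is designed to dominate. One has to be a little careful that the slack used to absorb the total-variation error from transparency ($\delta\le 1/(rk)$ per step) does not interact badly with the Hoeffding bounds; this is handled by noting the per-step coupling failure contributes an additive $\le r\cdot\delta/r = \delta$ to the overall failure probability, negligible against the constant $0.01$ budget.
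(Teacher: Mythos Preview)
Your proposal is correct and follows essentially the same approach as the paper (identical score distribution for all transparent keys, Chernoff/Hoeffding on the deviation plus a union bound over keys and steps), while filling in the median-concentration step that the paper's terse proof leaves implicit. One minor slip: the transcript-measurable mean you want is $\tilde\mu_i=\sum_{j\le i} Z_j q_j$ rather than $\sum_{j\le i}\E[Z_j]\,\E[\Bern[q_j]]$, since $Z_j$ and $q_j$ are both known given the transcript and are not independent of each other; with this correction the $C[x]$ for transparent $x$ are conditionally i.i.d.\ and your Chernoff-on-the-count step is legitimate.
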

\begin{proof}
First note that all transparent keys have the same score distribution (see proof of Theorem~\ref{onebatchutility:thm}).
Keys get placed in $M$ when their score 
separates from the median score in $N\setminus M$. Note that since nearly all keys (except $\alpha$ fraction) are transparent, the median score is the score of a transparent key.
From Chernoff bounds~\eqref{Chernoff:eq} the probability that a transparent key at a given step is placed in $M$ (and deviates by more than $\lambda$ from its expectation) is $< 1/(100 nr)$. Taking a union bound over all steps and transparent keys we obtain the claim.
\end{proof}

\end{document}